\theoremstyle{definition}
\newtheorem{notn}[theorem]{Notation}
\newtheorem{defn}[theorem]{Definition}
\newcommand*{\rs}[0]{\ensuremath{(\mathit{DES}_0)}\xspace}
\newcommand*{\rc}[0]{\ensuremath{(\mathit{DES}_1)}\xspace}
\newcommand{\col}{\mathsf{col}}
\newcommand{\At}{\mathsf{At}}
\newcommand{\inda}{I}
\newcommand{\indb}{J}
\newcommand{\Rules}{\mathcal{R}}
\newcommand{\omod}{\mathcal{G}}
\newcommand{\Ag}{\mathit{Ag}}
\newcommand{\bbrack}[1]{{\llbracket #1 \rrbracket}}
\newcommand{\PV}{\mathsf{V}}
\newcommand{\Sem}[1]{\bbrack{#1}}
\newcommand{\hearts}{\mathop{\heartsuit}}
\newcommand{\F}{F}
\newcommand{\Gp}{\mathsf{G}}
\newcommand{\GP}{\Gp_\mathsf{p}}
\newcommand{\GES}{\Gp_\mathsf{ES}}
\newcommand{\Pow}{\mathcal{P}}
\newcommand{\Nat}{\mathbb{N}}
\newcommand{\Int}{\mathbb{Z}}
\newcommand*{\MC}[0]{\(\mu\)-calculus\xspace} % this one above because it is used in the acronyms
\newcommand{\ATLdiamond}[1]{\mathop{\langle\!\langle#1\rangle\!\rangle}}
\newcommand{\CT}{\ensuremath{\mathcal{T}}}
\newcommand*{\org@overidelabel}{}
\let\org@overridelabel\@verridelabel
\renewcommand*{\@verridelabel}[1]{%
  \@bsphack
  \protected@write\@auxout{}{\string\AC@undonewlabel{#1@cref}}%
  \org@overridelabel{#1}%
  \@esphack
}%
\acrodef{CL}{coalition logic}
\acrodef{CLES}{\ac{CL} with explicit strategies}
\acrodef{CLDES}{\ac{CL} with disjunctive explicit strategies}
\acrodef{CGF}{concurrent game frame}
\acrodef{CGS}{concurrent game structure}
\acrodef{CGSES}{concurrent game structure with explicit strategies}
\acrodef{ATL}{alternating-time temporal logic}
\acrodef{ATEL}{alternating-time temporal epistemic logic}
\acrodef{MAS}{multi-agent system}
\acrodef{CTL}{computation tree logic}
\acrodef{INL}{instantial neighborhood logic}
\acrodef{PDL}{propositional dynamic logic of regular programs}
\acrodef{IPDL}{instantial \ac{PDL}}
\acrodef{CATL}{counterfactual \ac{ATL}}
\acrodef{ATLA}[ATL-A]{\ac{ATL} with actions}
\acrodef{DGL}{dynamic game logic}
\acrodef{SDGL}{strategized \ac{DGL}}
\acrodef{ATLES}{\ac{ATL} with explicit strategies}
\acrodef{ATLDES}{\ac{ATL} with disjunctive explicit strategies}
\acrodef{ATLEA}{\ac{ATL} with explicit actions}
\acrodef{AL}{action logic}
\acrodef{EL}{environment logic}
\acrodef{BML}{boolean modal logic}
\acrodef{AMC}{alternating-time \MC}
\acrodef{AMCES}{\ac{AMC} with explicit strategies}
\acrodef{AMCDES}{alternating-time \MC with disjunctive explicit strategies}
\newcommand*{\defemph}[1]{\emph{#1}} % to semantically distinguish between definitions and emphasis
\newcommand*{\moves}[1]{\ensuremath{[k#1]}}
\newcommand*{\Agents}[0]{\ensuremath{\Sigma}}
\newcommand*{\interpret}[0]{\ensuremath{\iota}}
\newcommand{\cldiamond}[1]{\ensuremath{\mathop{\langle #1 \rangle}}}
\newcommand{\clbox}[1]{\ensuremath{\mathop{[#1]}}}
\renewcommand*{\implies}{\ensuremath{\rightarrow}}
\newcommand{\defeq}[0]{\ensuremath{:=}}
\newcommand{\mcauses}[0]{\ensuremath{\mathrel{\Diamond\kern-.3em\raise.1em\hbox{\(\implies\)}}}}
\newcommand*{\rng}[1]{\ensuremath{1, \dots, #1}}
\newcommand*{\srng}[1]{\ensuremath{\{\rng{#1}\}}}
\newcommand*{\ET}[0]{\textsc{ExpTime}\xspace}
\newcommand*{\PS}[0]{\textsc{PSpace}\xspace}
\newcommand*{\PT}[0]{\textsc{P}\xspace}
\newcommand*{\QP}[0]{\textsc{QP}\xspace}
\newcommand*{\NP}[0]{\textsc{NP}\xspace}
\newcommand*{\PTIME}[0]{\textsc{P}\xspace}
\newcommand*{\coNP}[0]{\textsc{coNP}\xspace}
\newcommand*{\wrt}[0]{w.r.t.\ }
\newcommand*{\ie}[0]{i.e.\ }
\newcommand*{\eg}[0]{e.g.\ }
\crefname{algocf}{algorithm}{algorithms}
\crefname{lem}{Lemma}{Lemmas}
\title{The Alternating-Time $\mu$-Calculus With Disjunctive Explicit Strategies}
\author{Merlin Humml}{Friedrich-Alexander-Universität Erlangen-Nürnberg, Erlangen, Germany}{merlin.humml@fau.de}{https://orcid.org/0000-0002-2251-8519}{Work performed under the DFG Project \emph{Reconstructing Arguments from Noisy Text (RANT)}, SCHR 1118/14-1 / SCHR 1118/14-2}
\author{Lutz Schröder}{Friedrich-Alexander-Universität Erlangen-Nürnberg, Erlangen, Germany}{lutz.schroeder@fau.de}{http://orcid.org/0000-0002-3146-5906}{Work performed under the DFG Project \emph{RANT}, SCHR 1118/14-1 / SCHR 1118/14-2}
\author{Dirk Pattinson}{The Australian National University, Canberra,  Australia}{dirk.pattinson@anu.edu.au}{https://orcid.org/0000-0002-5832-6666}{}
\authorrunning{M. Humml, L. Schröder, and D. Pattinson}
\keywords{Alternating-time logic, multi-agent systems, coalitional strength}
\begin{document}
\nolinenumbers
\maketitle

\begin{abstract}
  \Ac{ATL} and its extensions, including the \ac{AMC}, serve the
  specification of the strategic abilities of coalitions of agents in
  concurrent game structures.  The key ingredient of the logic are
  path quantifiers specifying that some coalition of agents has a
  joint strategy to enforce a given goal.  This basic setup has been
  extended to let some of the agents (revocably) commit to using
  certain named strategies, as in \emph{\ac{ATLES}}.  In the present
  work, we extend \ac{ATLES} with fixpoint operators and strategy
  disjunction, arriving at the \emph{\ac{AMCDES}}, which allows for a
  more flexible formulation of temporal properties (e.g.\ fairness)
  and, through strategy disjunction, a form of controlled
  non-determinism in commitments. Our main result is an \(\ET\) upper
  bound for satisfiability checking (which is thus
  \(\ET\)-complete). We also prove upper bounds \(\QP\)
  (quasipolynomial time) and \(\NP\cap\coNP\) for model checking under
  fixed interpretations of explicit strategies, and \(\NP\) under open
  interpretation. Our key technical tool is a treatment of the
  \ac{AMCDES} within the generic framework of coalgebraic logic, which
  in particular reduces the analysis of most reasoning tasks to the
  treatment of a very simple \emph{one-step logic} featuring only
  propositional operators and next-step operators without nesting; we
  give a new model construction principle for this one-step logic that
  relies on a set-valued variant of first-order resolution.
\end{abstract}

%%%%%%%%%%%%%%%%%%%%%%%%%%%%%%%%%%%%%%%%%%%%%%%%%%%%%%%%%%%%%%%%%%%%%%%%%%%%%%%%%%%%%%%%%%%%%%%%%%%%%%%%%
\acresetall{} % reset after abstract
\clearpage
\section{Introduction}\label{sec:introduction}

\emph{\Ac{ATL}}~\cite{AlurEA02} extends \ac{CTL} with path quantifiers
$\ATLdiamond{A}$ read `coalition~$A$ of agents has a (long-term) joint
strategy to enforce'.  It is embedded into the \emph{\ac{AMC}}, which
instead of path quantifiers, features nested least and greatest
fixpoints alongside the next-step coalition modalities
$\ATLdiamond{A}\bigcirc$ (`$A$ can enforce in the next
step'). % ; the embedding relies on the fact
% that the \ac{ATL} semantics (unlike that of the more expressive logic
% \ac{ATL}$^*$) can equivalently be restricted to history-free
% strategies.
% Analogously to the relationship between standard \ac{CTL}
% and the relational $\mu$-calculus,
The \ac{AMC} is strictly more
expressive than \ac{ATL}, e.g.\ supports fairness
constraints.%  or properties that restrict infinitely many but not all
% states (e.g.\ all even-numbered states) along a path.

Coalitional power in~\ac{ATL} and the~\ac{AMC} is measured without any
restrictions on the moves chosen by the opponents. There has been
interest in extensions of \ac{ATL} where the power of the opponents
can be constrained, e.g.\ by committing some of them to a particular
strategy, allowing for statements such as `no matter what the other
network actors do, Alice and Bob can collaborate to exchange keys via
Server~$S$ provided that~$S$ adheres to the protocol'. One such
extension is provided in \emph{\ac{ATLES}}~\cite{WaltherEA07}, which
has path quantifiers $\ATLdiamond{A}_\rho$ additionally parametrized
over a commitment~$\rho$ of some agents to given named
strategies, % (it is, then, immaterial whether or not these agents
% belong to~$A$)
read `provided that the commitments~$\rho$ are kept,~$A$ can enforce
\dots'. This extension has substantial impact on expressiveness; e.g.\
unlike in basic \ac{ATL}, the semantics of \ac{ATLES} over
history-free strategies differs from the one over history-dependent
strategies.

Restricting opponents to fixed moves is, of course, quite drastic; as
noted already in the conclusion of Walther~\cite[Chapter~4]{Walther_PhD}, it
is desirable to allow for more permissive restrictions where the
opponents can still pick among several designated moves, as in `Alice
has a strategy to get her print job executed if Bob either cancels his
large print job or splits it into several smaller ones'. In the
present paper, we introduce such an extension with disjunctive
commitments. Additionally, we include full support for least and
greatest fixpoint operators, with associated gains in expressivity
analogous to the extension from \ac{ATL} to the \ac{AMC}.  We thus
arrive at the \emph{\ac{AMCDES}}.

Our main result on this logic is that satisfiability checking remains
only $\ET$-complete (i.e.\ no harder than the \ac{AMC}, or in fact
than basic \ac{ATL} or even \ac{CTL}). We note also that (following a
distinction made also in work on \ac{ATLES}~\cite{WaltherEA07}) model
checking is in quasipolynomial time $\QP$ and in $\NP\cap\coNP$ under
fixed interpretation of explicit strategies (matching the best known
bounds for the~\ac{AMC} and in fact even the plain relational
$\mu$-calculus), and in~$\NP$ under open interpretation; these results
are obtained by fairly straightforward adaptation of results on the
\ac{AMC}~\cite{DBLP:conf/concur/HausmannS19}, and therefore discussed
in full only in the appendix.  We obtain our results by casting the
\ac{AMCDES} as an instance of \emph{coalgebraic
  logic}~\cite{CirsteaEA11a}, a unifying framework for modal and
temporal logics. The driving principle of coalgebraic logic is to
reduce reasoning tasks to the analysis of a simple \emph{one-step
  logic}, whose formulae employ only Boolean connectives and a single
layer of next-step
modalities~\cite{SchroderPattinson09,CirsteaEA11b,DBLP:conf/concur/HausmannS19}. In
particular, the automata- and game-theoretic machinery needed for the
treatment of fixpoint logics is entirely encapsulated in results on
the coalgebraic $\mu$-calculus~\cite{CirsteaEA11b,DBLP:conf/concur/HausmannS19}.
% In
% fact, using the machinery of coalgebraic modal logic we obtain numerous further
% results, \eg on satisfiability checking in the next-step fragment
% with nominals~\cite{SchroderEA09} and on completeness of so-called
% \emph{flat} fragments of the \ac{AMCDES}, where \emph{flat} refers to
% using only a single fixpoint variable~\cite{SchroderVenema18}; a
% special case of this latter result is completeness of \ac{ATLES} as
% proved already in~\cite{WaltherEA07}.
The actual technical work then lies in providing algorithms,
axiomatizations, and model constructions for the one-step logic of
\ac{AMCDES}, still posing substantial challenges due to nested
quantification over strategies. The model construction principle for
the one-step logic that we employ is based on a set-valued variant of
first-order resolution that we introduce here, along with an
associated notion of equationally complete model that we use to move
from (generally infinite) Herbrand universes to finite models; this
principle is the key to supporting strategy disjunction.

The present material revises and extends a previous conference
publication~\cite{GottlingerEA21}.

% We mostly provide full proofs of all results but omit some minor
% details, found in the full version at \url{https://tiny.cc/amcdes}.

\subparagraph*{Related Work}\label{sec:related-work}
% The interest in alternating-time logics remains generally high, and
% numerous extensions have been studied, e.g.\ over probabilistic game
% structures and/or with incomplete
% information~\cite{Jamroga03,ChenLu07a,BullingJamroga09,Schnoor10},
% over resource models~\cite{AlechinaLEA10}, or with socially-friendly
% modalities~\cite{GorankoEnqvist18}.
Many \ac{ATL} extensions are concerned with commitments of agents to
strategies. Besides \acf{ATLES}, this includes, e.g.,
\acl{CATL}~\cite{van_der_Hoek_2005}, which differs from \ac{ATLES} by
making commitments irrevocable. \acf{ATLA}~\cite{Agotnes06} has
per-agent disjunctive commitments (while the \ac{AMCDES} allows
disjunctions over joint commitments). \ac{ATLA} admits polynomial-time
model checking; satisfiability checking is not considered (it would be
somewhat simpler than in the present setting, as in \ac{ATLA} all
actions are named, and hence known in
advance). \ac{ATLEA}~\cite{Herzig_2013} features commitments of agents
to a given action at only the current world, and has a fairly
straightforward satisfiability-preserving embedding into
the~\ac{AMCDES}.  Various forms of \emph{strategy
  logic}~\cite{ChatterjeeEA10,DBLP:journals/tocl/MogaveroMPV14,DBLP:journals/corr/MogaveroMPV16}
possibly contain \ac{ATL}$^*$ with disjunctive explicit strategies
(but presumably not the~\ac{AMCDES} or even the~\ac{AMC}, as they lack
fixpoint operators); they tend to be computationally much harder than
the~\ac{AMCDES}. Goranko and Ju~\cite{DBLP:conf/lori/GorankoJ19}
discuss various forms of conditional strategic modalities, one of
which ($O_{dd}$) is similar in spirit to our strategy disjunction in
that it restricts the moves of the opposition, however not to given
named moves but rather to moves enforcing a given goal; their main
technical result is a Hennessy-Milner style expressiveness theorem. De
Nicola and Vandraager~\cite{{DBLP:conf/litp/NicolaV90}} consider
disjunction of named actions in labelled transition systems, which in
that setting can be encoded into next-modalities for single actions
using logical disjunction.

\subparagraph*{Organization} We introduce the syntax and the semantics of
the \acf{AMCDES} in~\cref{sec:amcdes}. After recalling the requisite
principles of coalgebraic logic in~\cref{sec:coalg-log} we introduce the method of
set-valued first-order resolution in~\cref{sec:set-valued-resolution}.
We illustrate these methods on the
basic \ac{AMC} in \cref{sec:coalg-amc}, and establish our main results
on satisfiability checking for the \ac{AMCDES} in
\cref{sec:amcdes-sat}.

\section{\Ac{AMC} With Disjunctive Explicit Strategies}\label{sec:amcdes}

We proceed to introduce the syntax and semantics of the
\acf{AMCDES}. As indicated in \cref{sec:introduction}, the
logic is inspired by \acf{ATLES}~\cite{WaltherEA07}. We deviate from
the \ac{ATLES} syntax in that we express (disjunctive) commitments of
agents by means of names for strategies in the syntax. Also, we
shorten the \ac{ATL} syntax for next-step operators from
$\ATLdiamond{C}\bigcirc$ (`$C$ can enforce in the next step that
\dots') to $\clbox{C}$ as in coalition logic~\cite{Pauly02}. We thus
arrive at modalities $\clbox{C,O}$ where $O$ is a set of named joint
strategies for agents in a further coalition~$D$ of agents restricted in their choice of strategies, disjoint from~$C$,
read `if the agents in~$D$ use one of the joint strategies in~$O$,
then $C$ can enforce that \dots'. The dual modality $\cldiamond{C,O}$
is read `even if the agents in~$D$ are limited to the joint strategies
in~$O$, $C$ cannot prevent that \dots'. Formally, our syntax is
defined as follows.

\newcommand*{\ieq}[0]{\mathrel{=_\sqcap}}
\newcommand{\other}[1]{\ensuremath{\overline{#1}}}
\begin{defn}% [Syntax of the \acl{AMCDES}]
  \label{def:syntax-amcdes}
  The syntax of the \ac{AMCDES} is parametrized over a set \(\At\) of
  (propositional) \defemph{atoms}, \(V\) of \defemph{variables}, a
  finite set \(\Agents\) of agents (for technical simplicity, assumed
  to be linearly ordered), and sets \(M_j\) of \defemph{explicit
    strategies} (\ie names for strategies) per agent \(j\); \emph{we
    fix these data from now on}. A \emph{coalition} is a subset
  of~$\Agents$. We also (and mainly) refer to explicit strategies as
  \emph{explicit moves}. We write $M_D=\prod_{j\in D}M_j$ for the set
  of \emph{joint explicit moves} of a coalition~$D$.  Formulae
  $\phi,\psi$ are then given by the grammar
  \begin{equation*}
    \phi, \psi ::= p \mid \neg p \mid x \mid \top \mid \bot\mid \phi \land \psi\mid \phi \lor \psi \mid \clbox{C, O}\phi \mid \cldiamond{C, O}\phi % \mid x \eqcirc y
    \mid \mu x .\, \phi\mid \nu x .\, \phi
  \end{equation*}
  where \(x \in V\), \(p \in \At\), and $C \subseteq \Agents$,
  i.e.\ a \defemph{coalition}. We generally write
  $\other{C}=\Agents\setminus C$. Moreover, $O \subseteq M_D$ is a
  set of joint explicit moves, called a \defemph{disjunctive explicit
    strategy} (or \defemph{move}), for some coalition~$D$, disjoint
  from~$C$, that we denote by \(\Ag(O)\). We call a modality
  $\clbox{C,O}$ or $\cldiamond{C,O}$ a \emph{grand coalition modality}
  if $C\cup\Ag(O)=\Agents$, and \emph{non-disjunctive} if $|O|=1$, in
  which case we often omit set brackets and just write~$O$ as its
  single element. We restrict grand coalition modalities to be
  non-disjunctive (cf.\ \cref{rem:grand}). As usual, $\mu$ and~$\nu$
  take least and greatest fixpoints, respectively. Negation~$\neg$ is
  not included but can be defined in the standard way, taking negation
  normal forms. The \ac{AMCES} is the fragment of the \ac{AMCDES}
  allowing only non-disjunctive modalities.
\end{defn}
The \ac{AMCDES} thus subsumes both the standard
\ac{AMC}~\cite{AlurEA02} (with~$\clbox{C}$ corresponding to
$\clbox{C,O}$ with $\Ag(O)=\emptyset$) and the history-free variant of
\ac{ATLES}~\cite{WaltherEA07} (which as we will detail in \cref{rem:hist} is
the variant to which previous technical results refer).
\begin{example}\label{expl:formulae}
  The formula indicated in the introduction,
 \begin{equation*}
   \clbox{\mathsf{Alice},(\mathsf{Bob}\colon\{\mathsf{cancelPrint},\mathsf{splitPrint}\})}\mathsf{printed}
 \end{equation*}
 says (using hopefully self-explanatory human-readable syntax for
 disjunctive explicit moves) that `Alice has a strategy to have
 her print job executed, provided that Bob opts to either cancel his
 print job or to split it into smaller
 jobs'. The fixpoint formula
 \begin{equation*}
   \nu x.\,\neg\mathsf{corrupted} \land \clbox{\mathsf{ECC}, (\mathsf{Env}\colon \{0\mathsf{\text{-}flips}, 1\mathsf{\text{-}flip}\})}x
 \end{equation*}
 expresses that ECC memory can ensure that the stored data is not
 corrupted provided that in each cycle the environment flips either
 one or zero bits.
 The formula
 \[\nu x.\,\neg\mathsf{intrusion}\land\cldiamond{\mathsf{Attacker},
     (\mathsf{IPS}\colon \{\mathsf{dropPackage}, \mathsf{blockIP}\})}x\]
 expresses that `No matter what an attacker tries, the intrusion
 prevention system can always drop suspicious packets or block his IP
 address to prevent illegitimate access to company resources'.

\end{example}
\begin{remark}
  One can encode an extension~\acs{ATLDES} of \ac{ATL} with
  disjunctive explicit strategies into the \ac{AMCDES}, e.g.\ defining
  $\ATLdiamond{C,O}(G\,\phi)$ `$C$ can enforce that~$\phi$ always
  holds, provided that $\Ag(O)$ are committed to play strategies
  in~$O$' as
  \begin{equation*}
    \ATLdiamond{C,O}(G\,\phi):= \nu x.\,\phi\land\clbox{C,O} x.
  \end{equation*}
  The~\ac{AMCDES} is more expressive than \acs{ATLDES} in this sense;
  e.g.\ for $C=\{\mathsf{client}\}$ and
  $O=(\mathsf{server}\colon\{\mathsf{protocol},\mathsf{recover}\})$ the
  formula
  $\nu x.\,\mu
  y.\,(\mathsf{granted}\,\land\,\clbox{C,O}x)\,\lor\clbox{C,O}y$ says
  that `$\mathsf{client}$ can enforce that his requests are
  $\mathsf{granted}$ infinitely often, provided that $\mathsf{server}$
  always either keeps to the $\mathsf{protocol}$ or immediately
  $\mathsf{recover}$s when failures occur' (a specification that may,
  of course, hold or fail in a given system).

  Note that the definition of $\ATLdiamond{C,O}$ allows $\Ag(O)$ to
  choose their joint move from~$O$ anew in each step, like in the fixpoint
  formulae of \cref{expl:formulae}, which in fact belong to the
  \acs{ATLDES} fragment of the \ac{AMCDES}. To illustrate that this is
  really the reasonable choice of a semantics for \acs{ATLDES} (as
  opposed to letting~$O$ choose only in the beginning of a play),
  consider a situation where players $K$ (\emph{Kangaroo}) and $M$
  (\emph{Marc-Uwe})~\cite{Kangaroo} play rock-paper-scissors ($R$,
  $P$, $S$) for an indefinite number of rounds, say to determine daily
  who does the dishwashing, until someone quits. Let the model include
  memory for the moves in the previous round, and atoms $p$ `at least
  two rounds have been played' and $k$ `$K$ won the previous
  round'. Consider the \acs{ATLDES} formula
  \begin{equation*}
    \mathsf{rigged} = \ATLdiamond{K,(M\colon\{R,P,S\})} G(p \to k)
  \end{equation*}
  `$K$ wins all rounds after the first if $M$ keeps playing'. In
  \ac{ATLDES}, $\mathsf{rigged}$ does not hold in the model, as one
  would expect. If $M$ could make his choice of $R,P,S$ only once (in
  reality, sadly, he does just that~\cite{Kangaroo2}), then
  $\mathsf{rigged}$ would in fact hold.
\end{remark}
We proceed to define the semantics, which is based on
concurrent game structures~\cite{AlurEA02} extended with
interpretations of explicit moves.
\begin{notn}
  For $k\in\Nat$, we write $\moves{}=\srng{k}$. For
  $C\subseteq \Agents$ and a tuple ${(k_j)}_{j\in C}\in\Nat^{C}$, we
  put $\moves{_C}=\prod_{j\in C}\moves{_j}$. Given $m\in \moves{_C}$
  and $D\subseteq C$, we write $m|_D$ for the restriction of~$m$ to an
  element of $\moves{_D}$. We write \(n \sqsubseteq m\) if
  $n=m|_{\Ag(n)}$, and \(n \ieq m\) if \(n|_{\Ag(n)\cap \Ag(m)} =
  m|_{\Ag(n)\cap \Ag(m)}\). We write $\Pow X$ for the powerset of a set~$X$.
\end{notn}

% \begin{remark}
%   We use the common notation \(f[K]\) to denote the image of \(f : N \to M\) for \(K \subseteq N\).
%   \[f[K] \defeq \{f(x)\mid x \in K\}\]
% \end{remark}

\begin{defn}\label{def:cgs}
  A \defemph{\acf{CGSES}} % (for sets~$\Agents$ of agents,~$A$ of atoms,
  % and~$M_j$ of explicit strategies, $j\in \Agents$)
  is a tuple
  \((W, k, v, f, \interpret)\) consisting of
  \begin{itemize}
%    \item a finite set \(\Agents\) of agents,
    \item a finite set \(W\) of \defemph{states},
 %   \item a finite set \(A\) of \defemph{propositions/observables},
    \item for each agent \(j\) and each state \(w\), a natural number
      \(k_j^w\ge 1\) determining the set of \defemph{moves} available to agent~\(j\) at state~\(w\) to
      be~$\moves{_j^w}$, % \ie agent \(a\) has moves from \([1,k_a^q]\) at state \(q\) which we shorten to \(\moves{_a^q}\).
      % We denote joint moves for agents from \(C\) at \(q\) as \(\moves{_C^q} \in \prod_{c \in C}\moves{_c^q}\), % already defined in notation above
    \item for each state \(w \in W\),
      \begin{itemize}
      \item a set \(v(w) \subseteq \At\) of
      propositional atoms true at~\(w\),
      \item an \defemph{outcome function}
        \(f^w : \moves{_\Agents^w} \to W\), and
      \item for each agent $j$, a \defemph{move interpretation}
        \(\interpret^w_j \colon M_j \to \moves{_j^w}\).
      \end{itemize}

%    \item for each \(j \in \Agents\) a set of \defemph{strategy symbols} \(M_j\),
  \end{itemize}
\end{defn}
For a joint explicit move~$m\in M_D$, we just write $\interpret^w(m)$
for the joint move with components $\interpret^w_j(m_j)$ for $j\in D$. We use function image notation \(\interpret^w[O]\) to denote the result of applying \(\interpret^w\) to each joint move in the set \(O\). The
semantics of the \ac{AMCDES} is then defined by assigning to each
formula $\phi$ an extension $\Sem{\phi}^\sigma_S\subseteq Q$, which
depends on a \ac{CGSES} \(S = (W, k, v, f, \interpret)\) and a valuation
\(\sigma : V \to \Pow W\).  The propositional cases are standard
(e.g.\ \(\Sem{p}_S^\sigma = \{w\in W\mid p\in v(w)\}\),
\(\Sem{x}_S^\sigma = \sigma(x)\), \(\Sem{\top}_S^\sigma =W\), and
\(\Sem{\phi \wedge \psi}_S^\sigma = \Sem{\phi}_S^\sigma \cap
\Sem{\psi}_S^\sigma\)).  The remaining clauses are
  \begin{align*}
    \Sem{\clbox{C,O}\phi}_S^\sigma &= \{w \in W \mid \exists m_C \in \moves{_C^w}.\, \forall m_\Agents \in \moves{_\Agents^w}.\,\\
                     & \hspace{7em}(m_C \sqsubseteq m_\Agents\land m_\Agents|_{\Ag(O)} \in \interpret^w[O]) \Rightarrow f^w(m_\Agents) \in \Sem{\phi}_S^\sigma\}\\
     \Sem{\cldiamond{C,O}\phi}_S^\sigma &= \{w \in W \mid \forall m_C \in \moves{_C^w}.\, \exists m_\Agents \in \moves{_\Agents^w}.\,\\
                      &\hspace{7em} m_C \sqsubseteq m_\Agents\land m_\Agents|_{\Ag(O)} \in \interpret^w[O] \land f^w(m_\Agents) \in \Sem{\phi}_S^\sigma\}\\
%    \Sem{x \eqcirc y}_S^\sigma &= \{w \in W \mid \interpret_w(x) = \interpret_w(y)\}\\
    \Sem{\mu x.\, \phi(x)}_S^\sigma &= \textstyle\bigcap\{B \subseteq W \mid \Sem{\phi(x)}_S^{\sigma[x \mapsto B]} \subseteq B\}\\
    \Sem{\nu x.\, \phi(x)}_S^\sigma &= \textstyle\bigcup\{B \subseteq W \mid B\subseteq\Sem{\phi(x)}_S^{\sigma[x \mapsto B]}  \}
  \end{align*}
  where \(\sigma[x \mapsto B]\) denotes \(\sigma\) updated to
  return~$B$ on input~\(x\);
  and~$\Sem{\cldiamond{C,O}\phi}_S^\sigma=\Sem{\neg\clbox{C,O}\neg\phi}_S^\sigma$.
  That is, $\mu$ and $\nu$ take least and greatest fixpoints according
  to the Knaster-Tarski fixpoint theorem. At a
  state~$w$, $\clbox{C,O}\phi$ holds if the agents in~$C$ have a joint
  move such that a state satisfying~$\phi$ is reached no matter what the
  other agents do, as long as the agents in $\Ag(O)$ play one of the
  joint moves in~$O$. Dually, \(\cldiamond{C,O}\phi\) holds at~$w$
  if whatever the agents in~$C$ do, the other agents have a joint
  move that leads to an outcome in~$\phi$ and in which the joint
  move of $\Ag(O)$ is in~$O$.
  \begin{remark}
    In the modal operators $\clbox{C,O}$, $\Ag(O)$ is in opposition
    to~$C$. One may envision an alternative setup where $\Ag(O)$ is
    instead made a part of~$C$. However, then $\clbox{C,O}\phi$ would
    become equivalent to $\bigvee_{m \in O}\clbox{C,\{m\}}\phi$, hence
    expressible already in \ac{ATLES}. We thus opt for our present
    more expressive version where $\Ag(O)$ and~$C$ are disjoint. Note
    that $\clbox{C,O}\phi$ then is \emph{not} equivalent to
    $\bigwedge_{m \in O}\clbox{C,\{m\}}\phi$: The latter formula
    allows~$C$ to use different moves against each~$m\in O$, while in
    $\clbox{C,O}\phi$, the \emph{same} joint move of~$C$ must work
    against every $m\in O$.
  \end{remark}
  \begin{remark}\label{rem:hist}
    The above semantics uses history-free strategies (i.e.\ ones that
    look only at the present state, not the history of previously
    visited states). While basic \ac{ATL} is insensitive to whether it
    is interpreted over history-free or history-dependent
    strategies~\cite{AlurEA02}, \ac{ATLES} does distinguish these
    semantics~\cite{WaltherEA07}. Although this may not be always
    apparent from the phrasing, all technical results on \ac{ATLES} in
    Walther et al.~\cite{WaltherEA07} are meant to apply to the
    semantics over history-free strategies only\footnote{Personal
      communication with the authors} (in particular the fixpoint
    unfolding axioms~\cite[Figure~1]{WaltherEA07} clearly hold only
    over the history-free semantics). Note that the basic \ac{AMC},
    which the \ac{AMCDES} extends, similarly is interpreted over
    history-free strategies % (as all next-step modalities are
    % immediately preceded by a path quantifier, which erases the history)
    (and nevertheless includes \ac{ATL}$^*$, which is
    history-dependent~\cite{AlurEA02}). % Whether or not \ac{AMCDES} contains
    % \ac{ATL}$^*$ with (disjunctive) explicit strategies remains open.
\end{remark}
\begin{remark}\label{rem:grand}
  The interdiction of proper strategy disjunction in grand coalition
  modalities is needed (only) for the upper bound on satisfiability
  checking (\cref{sec:amcdes-sat}); our results on model checking
  (\cref{sec:amcdes-mc}) would actually not need this restriction.
  The fragment we term \ac{AMCES} in \cref{def:syntax-amcdes} does
  include grand coalition modalities with (non-disjunctive) explicit
  strategies. It is hence more permissive on these modalities than the
  original version of \ac{ATLES}~\cite{WaltherEA07}, where the set of
  agents is made variable, which for purposes of satisfiability is
  equivalent to excluding grand coalition modalities.

  We note that the axiomatization we present later and its
  completeness proof become much simpler if one excludes the grand
  coalition completely (like, effectively, in \ac{ATLES}): E.g.\ in
  the rule $(C)$ for basic coalition logic / \ac{ATL}
  (\cref{sec:coalg-amc}), the literals $\cldiamond{\Agents}c_j$
  disappear; and in the proof of one-step tableau completeness
  (\cref{thm:cl-complete}), one can, in this simplified setting, just
  use a single move $\bot$ as witness for all $\cldiamond{C_j}c_j$
  in~$\Xi$, using non-determinism to ensure satisfaction of the
  $\cldiamond{C_j}c_j$.  This is discussed in detail in
  \cref{sec:remarks-one-step}.
\end{remark}

% \begin{remark}\label{nt:lattice-moves}
%   One could see lattice connectives on strategies to allow more complicated expressions than just disjunction on the moves of the opposition.
%   These connectives can however be equivalently expressed in the proposed syntax:
%   Disjunction on coalition moves can be expressed as
%   \[\clbox{C, x \sqcup y}\phi \defeq \clbox{C, x}\phi \lor \clbox{C, y}\phi\]
%   given that disjunction can be pulled out of the existential quantifier in the semantics.
%   Conjunction has two cases
%   \[\clbox{C, x \sqcap y}\phi \defeq
%     \begin{cases*}
%       \clbox{C, x}\phi \land x \eqcirc y & \text{if } \Ag(x) = \Ag(y)\\
%       \clbox{C, x \times y} & \text{otherwise}
%     \end{cases*}\]
% \end{remark}

\paragraph*{Model Checking}\label{sec:amcdes-mc}
Walther et al.~\cite{WaltherEA07} consider two variants of the model
checking problem that differ on whether the interpretation of explicit
strategies is considered part of the model (\emph{fixed}) or to be
found by the model checking algorithm (\emph{open}). They show for \ac{ATLES} that
if strategies are restricted to be history-free, then the problem is
\PT-complete under fixed interpretation, and \NP-complete under open
interpretation, with the upper bound being by straightforward guessing
of history-free strategies. The complexity for the history-dependent
variant remains open.

We obtain upper bounds for model checking in the \ac{AMCDES} using generic
results on the coalgebraic $\mu$-calculus~\cite{DBLP:conf/concur/HausmannS19}:
\begin{theorem}\label{thm:os-mc}
  Model checking for the full \ac{AMCDES} is in \(\NP \cap \coNP\) as
  well as in~$\QP$ under fixed interpretation of explicit strategies,
  and in $\NP$ under open interpretation.
\end{theorem}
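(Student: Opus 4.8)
The plan is to treat the \ac{AMCDES} under fixed interpretation as a concrete instance of the coalgebraic $\mu$-calculus and then invoke the generic model checking results of Hausmann and Schröder~\cite{DBLP:conf/concur/HausmannS19}. Recall that generic coalgebraic $\mu$-calculus model checking reduces the evaluation of a closed formula at a state to solving a parity game (the model checking game) whose nodes are pairs of a state and a subformula, and whose priorities record the alternation depth of the fixpoints; the only data specific to the logic at hand are the modal moves, which are governed by the \emph{one-step} model checking problem, i.e.\ by the evaluation of a single modality given the extensions of its argument formulas. Hence the entire model checking complexity hinges on a single point: that one-step model checking for the modalities $\clbox{C,O}$ and $\cldiamond{C,O}$ is decidable in polynomial time.

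I would establish this as the core lemma. Fix a \ac{CGSES} $S=(W,k,v,f,\interpret)$, a state $w$, and a set $U\subseteq W$ playing the role of the already-computed extension of the argument formula. To decide whether $w\in\Sem{\clbox{C,O}\phi}_S^\sigma$ with $\Sem{\phi}_S^\sigma=U$, one follows the semantic clause directly: enumerate the joint moves $m_C\in\moves{_C^w}$ of~$C$ and, for each, check the universal condition by enumerating all $m_\Agents\in\moves{_\Agents^w}$ and verifying the implication $(m_C\sqsubseteq m_\Agents\land m_\Agents|_{\Ag(O)}\in\interpret^w[O])\Rightarrow f^w(m_\Agents)\in U$. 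Since the outcome function $f^w$ is given as a table, $|\moves{_\Agents^w}|=|\mathrm{dom}(f^w)|$ is bounded by the size of the model, and $|\moves{_C^w}|\le|\moves{_\Agents^w}|$; hence both enumerations are polynomial. The set $\interpret^w[O]$ is obtained by applying $\interpret^w$ to the at most $|O|$ elements of the syntactic object $O$, and the membership test $m_\Agents|_{\Ag(O)}\in\interpret^w[O]$ is a lookup in a set of that size; all of this is polynomial. The clause for $\cldiamond{C,O}$ is dual and handled symmetrically. With polynomial one-step model checking in hand, the generic result yields that the full model checking game for the \ac{AMCDES} has polynomially many nodes and polynomial-time computable moves and priorities. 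Solving parity games is in $\QP$ and in $\NP\cap\coNP$, and these bounds transfer directly to model checking under fixed interpretation, giving the first two claimed bounds.

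For open interpretation I would add a single existential guess on top. A candidate interpretation $\interpret$ assigns, for each state $w$ and agent $j$, an actual move in $\moves{_j^w}$ to each explicit move name in $M_j$, and hence has size polynomial in the input. The naive approach---guess $\interpret$, then run fixed-interpretation model checking---does not immediately give $\NP$, since the latter is only known to lie in $\NP\cap\coNP$ rather than in $\PT$. The point, and the one slightly delicate step, is to \emph{merge} the two nondeterministic phases: one guesses $\interpret$ together with a positional strategy for the verifier in the associated model checking game, and then verifies in polynomial time that this strategy is winning (by checking that no reachable cycle violates the parity condition). Since the formula is required to hold for \emph{some} interpretation, this single combined guess witnesses membership, placing open-interpretation model checking in $\NP$.

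The main obstacle is thus not computational depth but bookkeeping: casting the \ac{AMCDES} as a coalgebraic instance so that the machinery of~\cite{DBLP:conf/concur/HausmannS19} applies off the shelf, checking that the one-step evaluation of the disjunctive explicit-strategy modalities really is polynomial (the only genuinely new ingredient relative to the \ac{AMC}, since it must handle $\interpret^w[O]$), and arranging the open-interpretation certificate as a joint guess so as to avoid an apparent blow-up from composing the interpretation guess with an $\NP\cap\coNP$ verification.
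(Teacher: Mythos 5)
Your core lemma and its use are exactly the paper's route for the \(\NP\cap\coNP\) and \(\NP\) bounds: the paper likewise observes that one-step satisfaction for \(\clbox{C,O}\) is polynomial --- an outer loop over \(m_C\in\moves{_C^w}\) and an inner loop over completions and \(o\in\interpret^w[O]\), both bounded by the tabulation size of the outcome function --- and then invokes the generic reduction (\cref{thm:one-step-mc}); for open interpretation it, too, simply guesses the history-free interpretation. One quibble there: your claim that guess-then-\(\NP\cap\coNP\) ``does not immediately give \(\NP\)'' is mistaken, since \(\NP\cap\coNP\subseteq\NP\) and \(\NP\) is closed under polynomially bounded existential quantification; your merged guess of \(\interpret\) together with a positional $\mathsf{Eloise}$ strategy is just the standard proof of that closure, so the step is correct but not actually delicate.

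The genuine gap is the \(\QP\) bound. Polynomial-time one-step satisfaction does \emph{not} yield a polynomial-size model checking parity game, and your assertion that ``the generic result yields that the full model checking game for the \ac{AMCDES} has polynomially many nodes'' is false as stated: in the framework of~\cite{DBLP:conf/concur/HausmannS19}, the evaluation games underlying the \(\NP\cap\coNP\) bound are exponentially large ($\mathsf{Eloise}$'s move at a modal node is a subset \(U\subseteq W\), so only her nodes are polynomially many --- which is what makes her strategies guessable, but does not make the game small enough for quasipolynomial solving). The \(\QP\) result rests on a strictly stronger hypothesis, namely sound and complete \emph{one-step satisfaction arenas} with polynomially many inner nodes (\cref{thm:one-step-arena-mc}), and the paper discharges it by an explicit construction: for \(\clbox{C,O}\) at \(w\), an initial $\mathsf{Eloise}$ node whose successors are inner $\mathsf{Abelard}$ nodes, one per joint move \(m_C\in\moves{_C^w}\), from which $\mathsf{Abelard}$ moves to the terminal nodes \(f^w(m_C,m_{\other{C\cup\Ag(O)}},\interpret^w(o))\) for \(o\in O\); this arena is linear in the tabulation size of \(f^w\). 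Your own enumeration argument (\(\exists m_C\,\forall\,\)completion) implicitly contains exactly this decomposition, so the repair is minor, but as written the \(\QP\) claim is unsupported --- you would need to extract the arena from the enumeration and invoke \cref{thm:one-step-arena-mc} rather than deriving it from polynomial one-step satisfaction alone.
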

We defer a summary of the requisite results in coalgebraic logic and
the proof of \cref{thm:os-mc} to \cref{sec:append-amcd-model}, as the
details are mostly by simple adaptation from the
\ac{AMC}~\cite{DBLP:conf/concur/HausmannS19}.

\section{Preliminaries: Coalgebraic Logic}\label{sec:coalg-log}
We will employ the machinery of coalgebraic logic to obtain our main
complexity results; we recall basic definitions and tools, using the
standard \ac{AMC} as our running example.

Coalgebraic logic~\cite{CirsteaEA11a} is a uniform framework for modal
and temporal logics interpreted over state-based systems. It
parametrizes the \textbf{semantics} of logics over the type of such
systems, encapsulated in a \emph{functor}~$\F$ on the category of
sets. Such a functor assigns to each set~$X$ a set $\F X$ and to each
map $f:X\to Y$ a map $\F f:\F X\to \F Y$, preserving identities and
composition. We think of the elements of~$\F X$ as structured
collections over~$X$. Systems are then \emph{$\F$-coalgebras}, \ie
pairs $(W,\gamma)$ consisting of a set~$W$ of \emph{states} and a
\emph{transition map} $\gamma:W\to \F W$, which thus assigns to each
state a structured collection of successors. % In the basic example,
% $\F W=\Pow W$, so that $\F$-coalgebras are just Kripke frames, assigning
% --- as they do --- to each state a \emph{set} of successors.
Our leading example is the functor \(\Gp\) that maps a set~$X$ to the
set
\[\Gp X=\{({(k_j)}_{j\in \Agents},f)\mid (k_j)\in\Nat_{\ge 1}^\Agents,
  f:\textstyle(\prod_{j\in \Agents}\moves{_j})\to X\}\] of
\emph{one-step games} over~$X$. $\Gp$-Coalgebras are essentially
\acp{CGS}~\cite{AlurEA02} without the interpretation of propositional
atoms, as they assign to each state numbers~$k_j$ of available moves
for the agents and an outcome function~$f$. Propositional atoms are
covered by extending~$\Gp$ to $\GP X=\Pow \At \times \Gp X$; although
the logic becomes trivial without propositional atoms, we mostly elide
their explicit treatment, which is straightforward and can be dealt
with using fusion results in coalgebraic
logic~\cite{SchroderPattinson11}. To obtain \acp{CGSES}, we extend
$\Gp$ to the functor $\GES$ with $\GES X$ consisting of \emph{one-step
  games with explicit strategies} $((k_j),f,\interpret)$ over~$X$,
where $((k_j),f)$ is a one-step game over~$X$
and~$\interpret_j\colon M_j\to \moves{_j}$ (for $j\in \Agents$)
interprets explicit strategies; we use the same notation
for~$\interpret$ as introduced for~$\interpret^w$ in
\cref{sec:amcdes}.

The \textbf{syntax} of coalgebraic logics is then parametrized over
the choice of a set~$\Lambda$ of (next-step) \emph{modal operators}
with assigned finite arities; nullary modalities are just
propositional atoms. For readability, we assume in the technical
treatment that all modalities are unary. We require that for every
$\hearts\in\Lambda$ there is a \emph{dual} operator
$\overline\hearts\in\Lambda$. The \emph{coalgebraic
  $\mu$-calculus}~\cite{CirsteaEA11b} over~$\Lambda$ then has formulae
$\phi,\psi$ given by the grammar
\begin{equation*}
  \phi,\psi::=\top\mid\bot\mid x\mid\phi\land\psi\mid\phi\lor\psi\mid\hearts\phi\mid
  \mu x.\,\phi\mid\nu x.\,\phi
\end{equation*}
where $x$ ranges over a reservoir~$V$ of \emph{fixpoint variables},
and~$\hearts$ over~$\Lambda$. The operators~$\mu$ and~$\nu$ take least
and greatest fixpoints, respectively. Again, negation is definable. We
assume a representation of the modalities in~$\Lambda$ as strings over
some alphabet, with an ensuing notion of \emph{representation size}
for formulae and modalities.

Over $F$-coalgebras, a modal operator $\hearts\in\Lambda$ is
interpreted by assigning to it a \emph{predicate lifting}
$\Sem{\hearts}$, which is a family of maps $\Sem{\hearts}_X$, indexed
over all sets~$X$, that assign to each subset $Y\subseteq X$ a subset
$\Sem{\hearts}_X(Y)\subseteq \F X$, subject to a naturality
condition. To enable fixpoint formation, we require $\Sem{\hearts}_X$
to be monotone \wrt subset inclusion. Moreover, we require predicate
liftings to respect duals, \ie
$\Sem{\overline\hearts}_X(Y)=\F X\setminus\Sem{\hearts}_X(X\setminus
Y)$.  Given an $F$-coalgebra $C=(W,\gamma)$ and a valuation
$\sigma:V\to\Pow W$, the semantic clauses defining the extension
$\Sem{\phi}^\sigma_C\subseteq W$ of a formula~$\phi$ are then the
standard ones for the Boolean connectives; $\mu$ and~$\nu$ take least
and greatest fixpoints in the same way as made explicit for the
\ac{AMCDES} in \cref{sec:amcdes}; and
% \begin{align*}
  % \Sem{\mu x.\,\phi}^\sigma_C&=\textstyle\bigcap\{Y\subseteq W\mid \Sem{\phi}^{\sigma[x\mapsto Y]}_C\subseteq Y\}\\
  % \Sem{\nu x.\,\phi}^\sigma_C&=\textstyle\bigcup\{Y\subseteq W\mid Y\subseteq \Sem{\phi}^{\sigma[x\mapsto Y]}_C\}\\
\begin{equation*}
  \Sem{\hearts\phi}_C^\sigma = \gamma^{-1}[\Sem{\hearts}_W(\Sem{\phi}^\sigma_C)].
\end{equation*}
    %   \end{align*}
\emph{We fix the data $F$, $\Lambda$, $\Sem{\hearts}$ for
  the remainder of this section.}

\begin{example}\label{ex:liftings}
  % The standard box modality~$\Box$ is interpreted over
  % $\Pow$-coalgebras (frames) by the predicate
  % lifting~$\Sem{\Box}_W(Y)=\{Z\in\Pow W\mid Z\subseteq Y\}$, and the
  % above definition induces precisely its usual semantics.
  The \ac{AMC} is cast as a coalgebraic \MC by interpreting the
  modality~$\clbox{C}$ over the functor~$\Gp$ by the predicate lifting
  \begin{equation*}
    \Sem{\clbox{C}}_X(Y)=\{((k_j),f)\in \Gp X\mid \exists m_C\in \moves{_C}.\,\forall m\in \moves{_\Agents}.\,
    m_C\sqsubseteq m\Rightarrow f(m)\in Y \}
  \end{equation*}
  (using notation introduced in \cref{sec:amcdes}). The more general
  modalities $\clbox{C,O}$ of \ac{AMCDES} are interpreted by a
  predicate lifting that correspondingly lifts a predicate $Y$ on~$X$
  to the set of all one-step games with explicit strategies
  $((k_j),f,\interpret)\in\GES X$ such that there exists a joint move
  $m_C\in \moves{_C}$ such that $f(m)\in Y$ for all
  $m\in \moves{_\Agents}$ such that $m_C\sqsubseteq m$ and
  $\interpret(n) \sqsubseteq m$ for some $n\in O$.
\end{example}
\textbf{Satisfiability checking} in coalgebraic logics can be based
on the provision of a complete set of tableau rules for the next-step
modal operators~\cite{SchroderPattinson09,CirsteaEA11b}. The basic
example of such a rule is the tableau rule
$\mathop{\Box} a_1,\dots,\mathop{\Box} a_n,\mathop{\Diamond}
b/a_1,\dots,a_n,b$ for standard modal logic, which says essentially
that in order to satisfy
$\mathop{\Box} a_1\land\dots\land\mathop{\Box}
a_n\land\mathop{\Diamond} b$, we need to generate a successor state
satisfying $a_1\land\dots\land a_n\land b$. Formal definitions are as
follows.
\begin{defn}[One-step tableau rules]\label{def:rules}
  Fix a supply~$\PV$ of \emph{(propositional) variables}, serving as
  placeholders for formulae in rules. A \emph{(monotone) one-step
    (tableau) rule} has the form
  \begin{equation*}
    \frac{\Phi}{\Theta_1\mid\dots\mid\Theta_n}\qquad(n\ge 0)
  \end{equation*}
  where the \emph{conclusions} $\Theta_1,\dots,\Theta_n$ are finite
  subsets of $\PV$, read as finite conjunctions, and the
  \emph{premiss} $\Phi$ is a finite subset of the set
  $\Lambda(\PV)=\{\hearts a\mid \hearts\in\Lambda,a\in\PV\}$ of
  \emph{modal atoms}, also read conjunctively; additionally, we
  require that~$\Phi$ mentions each variable at most once, and
  the~$\Theta_i$ mention only variables occurring in~$\Phi$. Given a
  set~$X$ and a $\Pow X$-valuation $\tau:\PV\to\Pow X$, we interpret
  such a $\Theta_i$ as
  $\Sem{\Theta_i}\tau=\bigcap_{a\in\Theta_i}\tau(a)$, and $\Phi$ as
  \( \Sem{\Phi}\tau=\textstyle\bigcap_{\hearts
    a\in\Phi}\Sem{\hearts}_X(\tau(a))\subseteq FX.  \)

  The rule $\Phi/\Theta_1\mid\dots\mid\Theta_n$ is \emph{one-step
    tableau sound} if $\Sem{\Theta_i}\tau\neq\emptyset$ for some~$i$
  whenever $\Sem{\Phi}\tau\neq\emptyset$. Let~$\Rules$ be a set of
  one-step tableau rules, closed under injective renaming of
  variables. Then~$\Rules$ is \emph{one-step tableau complete} if the
  following condition holds: For all $X$, $\tau:\PV\to\Pow X$, and
  $\Xi\subseteq\Lambda(\PV)$, whenever for each rule
  $\Phi/\Theta_1\mid\dots\mid\Theta_n\in\Rules$ such that
  $\Phi\subseteq\Xi$, we have $\Sem{\Theta_i}\tau\neq\emptyset$
  for some~$i$, then $\Sem{\Xi}\tau\neq\emptyset$.
\end{defn}
We will give one-step tableau sound and complete sets of
rules for the~\ac{AMCDES} in \cref{sec:amcdes-sat}. To obtain
complexity results, rule sets formally need to be
\emph{$\ET$-tractable}, meaning that rule matches are encodable as
strings over some alphabet such that all rule matches to a given set
of formulae can be represented by polynomially sized codes and
moreover basic operations on codes (well-formedness check, check for
rule matching, access to conclusions) can be performed in exponential
time~\cite{SchroderPattinson09,CirsteaEA11b}; we refrain from
elaborating details, as all rule sets we consider here will be clearly
computationally harmless. The main benefit that we draw from these
rule sets is the following generic upper complexity bound.
\begin{theorem}[Satisfiability checking~\cite{CirsteaEA11b}]\label{thm:coalg-mu}
  If a coalgebraic $\mu$-calculus admits an $\ET$-tractable one-step tableau
  complete set of one-step tableau sound rules, then its
  satisfiability problem is in $\ET$.
\end{theorem}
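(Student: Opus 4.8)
The plan is to reconstruct the generic tableau argument of~\cite{CirsteaEA11b}, reducing satisfiability of an input formula $\phi_0$ (assumed in negation normal form, and guarded after standard preprocessing) to a \emph{satisfiability parity game} built from the rule set and solved in exponential time. First I would pass to the Fischer--Ladner closure $\mathrm{cl}(\phi_0)$, whose size is polynomial in $\phi_0$; the positions of the game are the propositionally saturated subsets $\Gamma\subseteq\mathrm{cl}(\phi_0)$, of which there are at most exponentially many. The game alternates two phases. In the propositional phase the players resolve Boolean structure and fixpoint unfoldings, with the $\exists$-player (the prospective model builder) choosing disjuncts and $\mu/\nu$-unfoldings to reach a saturated $\Gamma$. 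In the modal phase one extracts the set $\Xi\subseteq\Lambda(\PV)$ of modal atoms carried by $\Gamma$; the $\forall$-player picks a rule $\Phi/\Theta_1\mid\dots\mid\Theta_n\in\Rules$ matching $\Gamma$ (so that $\Phi\subseteq\Xi$ under a suitable substitution), and the $\exists$-player responds with a conclusion $\Theta_i$, whereupon play moves to the successor position read off from $\Theta_i$ through the substitution. An infinite play is won by $\exists$ exactly when the outermost fixpoint variable unfolded infinitely often is a greatest-fixpoint ($\nu$) variable; encoding this as a parity condition uses priorities bounded by the alternation depth of $\phi_0$.

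For the soundness direction (a model yields a winning $\exists$-strategy) I would let the $\exists$-player follow an actual $F$-coalgebra $C=(W,\gamma)$ with a state satisfying $\phi_0$, maintaining at each position a state $w$ whose true formulae include $\Gamma$. When $\forall$ challenges with a matching rule, the modal atoms in $\Xi$ are realized at $w$ by $\gamma(w)\in FW$, so $\Sem{\Phi}\tau\neq\emptyset$; one-step tableau soundness then yields a conclusion $\Theta_i$ realized by some genuine successor, which $\exists$ plays. That infinite plays meet the parity condition is the usual well-foundedness argument: a least fixpoint cannot regenerate infinitely often along a path through a real model.

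For the completeness direction (a winning $\exists$-strategy yields a model) I would unravel the strategy into a graph of positions and turn it into an $F$-coalgebra. At a modal position whose reachable successors form a set $X$, the strategy guarantees that \emph{every} applicable rule has a chosen satisfiable conclusion; one-step tableau completeness then provides an element of $FX$ realizing all of $\Xi$ simultaneously, and these elements assemble into a transition map $\gamma$. The parity condition ensures that no least-fixpoint variable regenerates along an infinite branch, which is precisely what a standard signature/ordinal-ranking argument needs to verify that the constructed coalgebra genuinely satisfies $\phi_0$ under the fixpoint semantics, rather than merely a pre-model relation. Gluing the purely \emph{local} one-step completeness into a \emph{global} model, reconciled with the parity acceptance condition, is the step I expect to be the main obstacle.

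Finally, for the complexity bound I would observe that the game has at most exponentially many positions, and that $\ET$-tractability of $\Rules$ guarantees that at each position the applicable rule matches and their conclusions are enumerable with codes of exponential size and checkable in exponential time, so the whole arena is constructible in $\ET$. Since the number of priorities is bounded by the alternation depth of $\phi_0$, hence polynomial, while the arena is of exponential size, solving the resulting parity game---and thereby deciding satisfiability of $\phi_0$---stays within $\ET$, as claimed.
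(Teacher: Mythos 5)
You should first note that the paper does not actually prove \cref{thm:coalg-mu}: it imports the result from Cîrstea et al.~\cite{CirsteaEA11b} and adds only a one-sentence description of the algorithm. Measured against that cited construction, your reconstruction is faithful in outline: the closure, the propositionally saturated sequents as positions, $\mathsf{Abelard}$ choosing a matching rule and $\mathsf{Eloise}$ a conclusion, one-step tableau \emph{soundness} driving the model-to-strategy direction, one-step tableau \emph{completeness} (plus a signature/ranking argument) driving the strategy-to-model direction, and \ET-tractability of $\Rules$ ensuring the exponential-size arena can be built and solved in exponential time. You also correctly identify the delicate step, namely turning local one-step completeness into a global model compatible with the acceptance condition.

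However, there is one genuine defect: your winning condition is stated directly on plays (``the outermost fixpoint variable unfolded infinitely often is a $\nu$-variable''), and with sequents as positions this is neither well defined nor correct. A position carries a \emph{set} of formulae, so a play carries many simultaneous \emph{traces} (threads) of formulae, and regeneration must be judged per trace: a play can unfold a $\mu$-variable infinitely often while every individual trace regenerates it only finitely often (harmless), and dually a play dominated by $\nu$-unfoldings can still contain a single bad $\mu$-trace. With your play-level condition, both of your verification arguments break, since the well-foundedness argument in the soundness direction and the signature argument in the completeness direction are intrinsically trace-based. The repair is exactly what the paper's sketch alludes to with ``dedicated parity automata'': one tracks traces with a nondeterministic automaton over the closure that accepts plays containing a bad trace, determinizes it into a deterministic parity word automaton (exponentially many states, polynomially many priorities), and takes the \emph{product} with the tableau graph to obtain the tableau game. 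This also corrects your complexity accounting: the number of priorities is not bounded by the alternation depth of $\phi_0$ per se but by the (still polynomial) index produced by determinization; the product arena remains exponential, so solving the parity game, and hence deciding satisfiability, stays in \ET as claimed.
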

In the algorithm underlying the above theorem, one-step
rules combine with standard tableau rules for propositional and
fixpoint operators. The arising tableaux need to be checked for bad
branches (where least fixpoints are unfolded indefinitely) using
dedicated parity automata, which combine with the tableau to form the
\emph{tableau game}, a parity game that is won by $\mathsf{Eloise}$
iff the target formula is satisfiable.

\section{Set-Valued First-Order Resolution}\label{sec:set-valued-resolution}
For use in completeness proofs of modal rules, we next introduce
\defemph{set-valued first-order resolution}, an adaptation of the
standard first-order resolution method
\cite{DBLP:books/daglib/0082098} to a logic of \emph{outcome models}
$\omod=((S_j)_{j\in \Agents},f,W,\Sem{-})$ where the~$S_j$ are sets
and~$W$ is a finite set, $\Sem{-}$ interprets sorted algebraic
operations over the~$S_j$, and
$f:\big(\prod_{j\in \Agents}S_j\big)\to W$ is an outcome
function. One-step games in $\Gp W$ are (operation-free reducts of)
outcome models where the~$S_j$ are finite; for the time being, we
allow infinite~$S_j$ for readability, explaining in the proof sketches
in \cref{sec:coalg-amc,sec:amcdes-sat} how finiteness can be
regained. Formulae of \emph{set-valued first-order logic} are clause
sets formed over literals of the form $A(\bar t)$ where $A\subseteq W$
and $\bar t$ is an $\Agents$-tuple of terms (i.e.\ a clause is a
finite set of literals, read disjunctively, and a clause set is a
finite set of clauses, read conjunctively). Terms live in a sorted
setting with one sort~$j$ (interpreted as~$S_j$) for each agent~$j$,
and the $j$-th term in~$\bar t$ has sort~$j$. Terms are built from
sorted variables and function symbols with given sort profiles (e.g.\
$g:1\times 0 \to 2$ takes moves of agents~$1$ and $0$, and produces a
move of agent~$2$) in the standard way, ensuring
well-sortedness. Function symbols are interpreted as sorted functions
on the~$S_j$, respecting the sort profile; this induces an
interpretation of (tuples of) terms depending on sort-respecting
valuations of the variables as usual. We write $\Sem{\bar t}\eta$ for
the interpretation of a tuple~$\bar t$ of terms under a
valuation~$\eta$. An outcome model~$\omod$ as above
\defemph{satisfies} a literal $A(\bar t)$ under a valuation~$\eta$
(notation: $\omod,\eta\models A(\bar t)$) if
$f(\Sem{\bar t}\eta)\in A$, and $\omod$ satisfies a clause~$\Gamma$
under~$\eta$ (notation: $\omod,\eta\models\Gamma$) if
$\omod,\eta\models A(\bar t)$ for some literal $A(\bar t)$
in~$\Gamma$. Finally, $\omod$ \emph{satisfies} a clause~$\Gamma$
(notation: $\omod\models\Gamma$) if $\omod,\eta\models\Gamma$ for every
valuation~$\eta$. A clause set is satisfiable if there exists an
outcome model that satisfies all its clauses. We will generate clauses
from modal atoms in $\Lambda(\PV)$ (\cref{def:rules}); e.g.\ given a
$\Pow W$-valuation $\tau:\PV\to\Pow W$, modalized atoms $\clbox{C}a$
and $\cldiamond{C}a$ induce singleton clauses of the form
\begin{align}
  \label{eq:box-lit} & \{\tau(a)(e_{C},x_{\,\other {C}})\} && \text{(for $\clbox{C}a$)}\\
  \label{eq:diamond-lit} & \{\tau(a)(x_{C},g_{\,\other {C}}(x_{C}))\} && \text{(for $\cldiamond{C}a$)}
\end{align}
respectively, where $x_C$, $x_{\,\other C}$ are tuples of variables
(implicitly universally quantified, and representing moves for the
agents in~$C$ and $\other C$, respectively); $e_C$ is a family of
Skolem constants witnessing the ability of~$C$ to force~$a$; and
$g_{\,\other C}$ is a family of Skolem functions producing
countermoves $g_{\,\other C}(x_C)$ for the agents in $\other C$ that
keep~$C$ from enforcing $\neg a$ using~$x_C$. Of course these symbols
are fresh so that clauses induced by different modalized atoms have
disjoint sets of function symbols and variables, which we will later
distinguish via superscripts in proofs.

We implicitly normalize clauses to mention each tuple of terms at most
once (rewriting $A(\bar t),B(\bar t)$ into $(A\cup B)(\bar t)$), and
operate on clauses using the \defemph{(set-valued) resolution rule}
\begin{equation*}
  (SR)\;\frac{\Gamma,A_1(\bar t_1),\dots,A_n(\bar t_n)\qquad B(\bar u),\Delta}
  {\Gamma\sigma,((\bigcup_{i = 1}^nA_i)\cap B)(\bar u\sigma),\Delta\sigma}\;(n\ge 1)
\end{equation*}
where $\sigma$ is the most general unifier (mgu) of
$\bar t_1, \dots, \bar t_n$, and $\bar u$, with variables in the
premises made disjoint by suitable renaming; as usual, we write `,'
for union of clauses and omit set brackets around singleton clauses
(so $\Gamma,A(\bar t)$ is shorthand for $\Gamma\cup\{A(\bar t)\}$). We
will also need to consider a liberalized variant~$(\mathit{lSR})$
of~$(\mathit{SR})$ where we only require~$\sigma$ to be a unifier
(rather than the mgu). Note that the above formulation of $(SR)$
includes \emph{one-sided implicit factoring}, i.e.\ several
literals in the left-hand clause may be resolved against one literal
in the right-hand clause. A clause is \defemph{blatantly inconsistent}
if all its literals are of the form $\emptyset(\bar t)$. A clause
set~$\phi$ is \defemph{blatantly inconsistent} if it contains a
blatantly inconsistent clause, and \defemph{inconsistent} if a
blatantly inconsistent clause can be derived from it using the
resolution rule; otherwise,~$\phi$ is \defemph{consistent}. We note
that the liberalized rule~$(\mathit{lSR})$ is admissible; that is,
calling a clause set \emph{$(lSR)$-consistent} if no blatantly
inconsistent clause can be derived from it using $(\mathit{lSR})$, we
have
\begin{lemma}\label{lem:lsr-admissible}
  Consistent clause sets are also $(\mathit{lSR})$-consistent.
\end{lemma}
\begin{proof}
  By straightforward induction on derivations under~$(\mathit{lSR})$,
  one shows that every clause derivable from a clause set~$\phi$ under
  $(\mathit{lSR})$ is a substitution instance of a clause derivable
  from~$\phi$ under $(\mathit{SR})$. The claim follows immediately by
  the observation that a clause that has a blatantly inconsistent
  substitution instance must itself be blatantly inconsistent.
\end{proof}

\label{page:occurs} Recall that unification can fail either due to a
\emph{clash}, i.e.\ when terms with distinct head symbols need to be
unified, or at the \emph{occurs check}, which happens when a variable
needs to be unified with a term that contains it. In particular, this
happens in clauses~\eqref{eq:diamond-lit} associated with diamonds:
E.g.\ the modal atoms $\cldiamond{\{0\}}a$ and $\cldiamond{\{1\}}b$
generate clauses $\{\tau(a)(x_0,g^1_1(x_0))\}$ and
$\{\tau(b)(g^2_0(x_1),x_1)\}$, whose (tuples of) argument terms fail
to unify since no substitution solves $x_0=g_0^2(g_1^1(x_0))$.

We note that from one-sided implicit factoring as incorporated in~$(SR)$,
we can derive two-sided implicit factoring:
\begin{lemma}\label{lem:two-sided-derivable}
  From~$(lSR)$, the rule
\begin{equation*}
  (\mathit{SR}^+)\;\frac{\Gamma,A_1(\bar t_1),\dots,A_n(\bar t_n)\qquad B_1(\bar u_1),\dots,B_m(\bar u_m),\Delta}
  {\Gamma\sigma,((\bigcup_{i = 1}^nA_i)\cap (\bigcup_{i = 1}^mB_i))(\bar u_1\sigma),\Delta\sigma}\;(n,m\ge 1)
\end{equation*}
is derivable where $\sigma$ is the mgu of
$\bar t_1, \dots, \bar t_n,\bar u_1.\dots,\bar u_m$, with variables in
the premises made disjoint by suitable renaming.
\end{lemma}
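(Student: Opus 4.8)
The plan is to derive $(\mathit{SR}^+)$ by exactly two applications of the liberalized rule $(\mathit{lSR})$, both carried out with the single substitution $\sigma$, the mgu of all the argument tuples $\bar t_1,\dots,\bar t_n,\bar u_1,\dots,\bar u_m$. Write $\bar s=\bar u_1\sigma=\dots=\bar u_m\sigma=\bar t_1\sigma=\dots=\bar t_n\sigma$ for the common value of all these tuples under~$\sigma$. The reason we work with $(\mathit{lSR})$ rather than $(\mathit{SR})$ is precisely that in the first step we resolve only against $B_1$ but already apply the \emph{full} unifier~$\sigma$; since~$\sigma$ is a unifier, though in general not the mgu, of $\bar t_1,\dots,\bar t_n,\bar u_1$, this is a legal $(\mathit{lSR})$-step but not necessarily an $(\mathit{SR})$-step.

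First I would apply $(\mathit{lSR})$ to the two given premises, resolving the left literals $A_1(\bar t_1),\dots,A_n(\bar t_n)$ against the single right literal $B_1(\bar u_1)$ with unifier~$\sigma$. Because~$\sigma$ sends every~$\bar u_j$ to the same tuple~$\bar s$, the remaining right literals $B_2(\bar u_2),\dots,B_m(\bar u_m)$ all land on~$\bar s$, so after the implicit normalization that merges literals on equal term tuples the resolvent becomes
\[
  C_1=\Gamma\sigma,\;S_1(\bar s),\;\Delta\sigma,\qquad
  S_1=\big((\textstyle\bigcup_{i}A_i)\cap B_1\big)\cup\textstyle\bigcup_{j\ge 2}B_j.
\]

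Next I would take a fresh variant of the left premise and apply $(\mathit{lSR})$ a second time, now resolving its literals $A_1(\bar t_1),\dots,A_n(\bar t_n)$ against the single literal $S_1(\bar s)$ of~$C_1$, again essentially with~$\sigma$ (composed with the renaming, which is a unifier of the~$\bar t_i$ with~$\bar s$ since $\bar t_i\sigma=\bar s$ and~$\sigma$ is idempotent). This is again one-sided left-factoring, so the set attached to~$\bar s$ in the resolvent is $(\bigcup_i A_i)\cap S_1$, and the one-line absorption computation
\[
  (\textstyle\bigcup_i A_i)\cap S_1
  =\textstyle\bigcup_{j}\big((\bigcup_i A_i)\cap B_j\big)
  =(\textstyle\bigcup_i A_i)\cap(\textstyle\bigcup_j B_j)
\]
(using $(\bigcup_i A_i)\cap((\bigcup_i A_i)\cap B_1)=(\bigcup_i A_i)\cap B_1$ together with distributivity) shows that the resolvent is exactly $\Gamma\sigma,\big((\bigcup_i A_i)\cap(\bigcup_j B_j)\big)(\bar s),\Delta\sigma$, i.e.\ the conclusion of $(\mathit{SR}^+)$, since $\bar s=\bar u_1\sigma$.

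The only delicate point, and the step I expect to need the most care, is the bookkeeping around variable renaming: the left premise is used twice, so the second application must act on a fresh variant, and one has to check that the composite of the renaming with~$\sigma$ is genuinely a unifier of the fresh~$\bar t_i$ with~$\bar s$ and acts as the identity on $\Gamma\sigma,\Delta\sigma$. This is routine given idempotence of~$\sigma$ and freshness of the renamed variables (which do not occur in~$C_1$), and it is exactly the flexibility that $(\mathit{lSR})$, as opposed to $(\mathit{SR})$, is designed to provide.
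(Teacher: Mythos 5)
Your derivation is correct. It uses the same basic mechanism as the paper's proof --- repeatedly resolving the left premise into the right clause via $(\mathit{lSR})$, which is legal precisely because $(\mathit{lSR})$ accepts the full mgu~$\sigma$ of \emph{all} tuples as a (non-most-general) unifier of each sub-collection --- but your decomposition differs. The paper performs $m$ resolution steps, one per literal $B_j(\bar u_j)$, merging $((\bigcup_i A_i)\cap B_1)(\bar u_1\sigma)$ with $((\bigcup_i A_i)\cap B_2)(\bar u_2\sigma)$ after each step via the identity $((\bigcup_i A_i)\cap B_1)\cup((\bigcup_i A_i)\cap B_2)=(\bigcup_i A_i)\cap(B_1\cup B_2)$, and concludes by ``continuing in this manner''. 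You instead observe that since $\sigma$ unifies all of $\bar u_1,\dots,\bar u_m$, the implicit normalization already collapses the right clause after the \emph{first} step into a single literal $S_1(\bar s)$, so one further resolution against that literal, together with the absorption/distributivity computation $(\bigcup_i A_i)\cap S_1=(\bigcup_i A_i)\cap(\bigcup_j B_j)$, finishes the derivation in exactly two steps. What your route buys is the elimination of the implicit induction, at the cost of the fresh-variant bookkeeping for the second use of the left premise --- which you handle correctly, and indeed more explicitly than the paper, whose iterated reuse of the left premise tacitly requires the same renaming discipline (each $(\mathit{lSR})$ application demands disjoint variables in its premises). Your appeal to idempotence of~$\sigma$ is dispensable in the variant where the composite substitution moves only the fresh variables (then $\bar s$ is fixed by freshness alone), but it is harmless as stated.
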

\begin{proof}
  Using $(\mathit{lSR})$, we derive from the premises the clause
  $\Gamma\sigma,((\bigcup_{i = 1}^nA_i)\cap B_1)(\bar
  u_1\sigma),B_2(\bar u_2),\dots,B_{m}(\bar
  u_{m}\sigma),\Delta\sigma$. Further resolving this clause with the
  left premise according to $(\mathit{lSR})$, we obtain the clause
  $\Gamma\sigma,((\bigcup_{i = 1}^nA_i)\cap B_1)(\bar
  u_1\sigma),((\bigcup_{i = 1}^nA_i)\cap B_2)(\bar
  u_2\sigma),\dots,B_{m}(\bar u_{m}\sigma),\Delta\sigma$. Since
  $\bar u_1\sigma=\bar u_2\sigma$, this clause is identified with
  $\Gamma\sigma,((\bigcup_{i = 1}^nA_i)\cap (B_1\cup B_2))(\bar
  u_1\sigma),\dots,B_{m}(\bar u_{m}\sigma),\Delta\sigma$ (note that
  $((\bigcup_{i = 1}^nA_i)\cap B_1)\cup ((\bigcup_{i = 1}^nA_i)\cap
  B_2)= (\bigcup_{i = 1}^nA_i)\cap (B_1\cup B_2)$). Continuing in this
  manner, we obtain the conclusion of $(\mathit{SR}^+)$.
\end{proof}

\emph{Set-valued propositional resolution} in \emph{set-valued
  propositional logic} simplifies the above setup by replacing
tuples~$\bar t$ of terms in literals $A(\bar t)$ with elements~$y$ of
some index set~$Y$; models are then just functions $f\colon Y\to W$,
and~$f$ \emph{satisfies} a literal $A(y)$ if $f(y)\in A$. The
resolution rule is just like the above but of course does not involve
unification, substitution, and implicit factoring, i.e.\ it just
derives $\Gamma,(A\cap B)(y),\Delta$ from $\Gamma,A(y)$ and
$B(y),\Delta$.
\begin{theorem}[Soundness and completeness of set-valued
  resolution]\label{thm:res-completeness}
  A clause set in set-valued propositional (first-order) logic is
  satisfiable iff it is consistent under set-valued propositional
  (first-order) resolution.
\end{theorem}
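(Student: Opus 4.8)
The plan is to treat the two directions separately and to reduce the first-order case to the propositional one. For \emph{soundness} (satisfiable implies consistent) I would show that the rule $(\mathit{SR})$ preserves satisfiability, so that every clause derivable from a satisfiable clause set is satisfied by every model of that set. Concretely, given an outcome model $\omod$ satisfying both premises of $(\mathit{SR})$ under all valuations, I fix a valuation $\eta$ for the conclusion and put $w=f(\Sem{\bar u\sigma}\eta)$; since $\sigma$ unifies $\bar t_1,\dots,\bar t_n,\bar u$, this single element $w$ equals $f(\Sem{\bar t_i\sigma}\eta)$ for every $i$ as well. A case distinction then shows that either some inherited literal of $\Gamma\sigma$ or $\Delta\sigma$ is satisfied, or else $w\in\bigcup_iA_i$ (from the left premise) and $w\in B$ (from the right premise), so that $w\in(\bigcup_iA_i)\cap B$ and the resolvent literal holds. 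As no model satisfies a blatantly inconsistent clause (since $f(\Sem{\bar t}\eta)\in\emptyset$ never holds), a satisfiable clause set cannot derive one, hence is consistent.

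For \emph{completeness} (consistent implies satisfiable) I would first settle the propositional case and then lift. For a finite consistent set $\phi$ only finitely many indices $y\in Y$ and values $w\in W$ occur, and I would build a model $f\colon Y\to W$ by variable elimination, inducting on $|Y|$. Picking an index $y$, for each $w\in W$ I form the residual set $\phi^w$ on $Y\setminus\{y\}$ by deleting every clause whose $y$-literal $A(y)$ has $w\in A$ (already satisfied by $f(y)=w$) and striking the $y$-literal from the remaining clauses (those with $w\notin A$). The crux is that some $w$ keeps $\phi^w$ consistent: if every $\phi^w$ were inconsistent, then replaying each refutation inside $\phi$ (resolution on the other indices leaves the inherited $y$-literals untouched, and normalization merges them into a single literal $A_w(y)$ with $w\notin A_w$) yields, for each of the \emph{finitely many} $w\in W$, a derivable clause $D_w\cup\{A_w(y)\}$ with $D_w$ blatantly inconsistent. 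Resolving these on $y$ intersects the sets $A_w$, and since each $A_w$ omits its own $w$ and $W$ is finite, $\bigcap_{w\in W}A_w=\emptyset$, so the result is a blatantly inconsistent clause, contradicting consistency of $\phi$. A good $w$ in hand, the induction hypothesis gives a model of $\phi^w$ that extends by $f(y)=w$ to a model of $\phi$.

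To drop the finiteness assumption on the index set (needed once we pass to first-order logic, whose Herbrand base may be infinite) I would invoke compactness: $W^Y$ with $W$ finite discrete is compact, each clause cuts out a clopen set, and since resolution derivations are finite, consistency of $\phi$ implies consistency, hence by the finite case satisfiability, of every finite subset, so the whole set is satisfiable. For the \emph{first-order} case I would then argue along Herbrand's theorem: if $\phi$ is consistent, the set of all ground instances of its clauses (over the Herbrand universe of each sort, adding a dummy constant where a sort would otherwise be empty) is consistent as a propositional set-valued clause set, and the propositional model just obtained, read as an outcome function on tuples of ground terms with the syntactic interpretation of function symbols, is an outcome model of $\phi$. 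This is the contrapositive of a \emph{lifting lemma}: every propositional refutation of the ground instances is simulated by an $(\mathit{SR})$-refutation of $\phi$, with ground matching replaced by most general unifiers and coincidences of ground literals absorbed by factoring, i.e.\ by the derived two-sided rule $(\mathit{SR}^+)$ (\Cref{lem:two-sided-derivable}), itself justified through $(\mathit{lSR})$ (\Cref{lem:lsr-admissible}).

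I expect the main obstacle to be the completeness direction, and specifically the per-value combination step in the propositional construction: making precise that the finitely many refutations $D_w\cup\{A_w(y)\}$ resolve to a blatantly inconsistent clause because the intersected sets exhaust $W$. The set-valued character of the calculus, with membership literals $A(\bar t)$ that resolution \emph{intersects} rather than complementary Boolean literals, together with the finiteness of $W$, is exactly what makes this work, and tracking the inherited $y$-literals and their normalization correctly under replay is the delicate part. The lifting lemma is then routine but must be phrased carefully enough that the set-valued factoring built into $(\mathit{SR})$ and $(\mathit{SR}^+)$ covers all ground coincidences.
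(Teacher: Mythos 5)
Your proposal is correct, but its completeness direction takes a genuinely different route from the paper's. For soundness you argue exactly as the paper does (substitution lemma plus a case split on the inherited literals), and your first-order reduction is essentially the paper's in different packaging: the paper closes a consistent clause set under the liberalized two-sided rule $(\mathit{lSR}^+)$ (via \cref{lem:lsr-admissible,lem:two-sided-derivable}) and shows the set $I(\phi)$ of ground instances is closed under propositional resolution, whereas you phrase the same content contrapositively as a lifting lemma simulating ground refutations by $(\mathit{SR})$-refutations with mgus; the closure formulation avoids the induction over derivations and the bookkeeping about normalization of merged ground literals that your lifting lemma must carry, which you rightly flag as the delicate part. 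The real divergence is in propositional completeness: the paper builds a canonical model via a Lindenbaum lemma (Zorn), a Hintikka lemma (an MCS containing $\Gamma,(A\cup B)(y)$ contains $\Gamma,A(y)$ or $\Gamma,B(y)$, and $W(y)\in\phi$ for all $y$), and a truth lemma, extracting $f_\phi(y)$ from the unique $\{w_y\}(y)\in\phi$ — a construction that handles infinite index sets $Y$ uniformly. You instead do Davis--Putnam-style variable elimination by induction on finite $Y$: your per-value combination step, deriving clauses $D_w\cup\{A_w(y)\}$ with $w\notin A_w$ and resolving them so that $\bigcap_{w\in W}A_w=\emptyset$, is sound (unions of the inherited $y$-literals under normalization still omit $w$, and finiteness of $W$ closes the intersection), and it isolates exactly where finiteness of $W$ enters — the same place the paper uses it to find a singleton literal in an MCS. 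The price is that you need a separate compactness argument (clopen clauses in $W^Y$, finite refutations) to reach infinite $Y$ before the Herbrand step, which the paper's Zorn-based construction gets for free; what you gain is a more constructive, algorithmic proof for the finite case that makes the role of set-valued intersection in resolution completely explicit.
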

\begin{proof}[Proof sketch]
  Soundness (`only if') is clear (see \cref{sec:proof-thm:r-comp}). Completeness (`if') of the
  propositional variant depends on~$W$ being finite. It proceeds via
  maximally consistent clause sets (MCS) and a Hintikka lemma stating
  in particular that an MCS containing $(A\cup B)(y)$ must also
  contain one of $A(y),B(y)$. Completeness of the first-order variant
  is by adaptation of the completeness proof for standard first-order
  resolution, going via Herbrand models (i.e.\ models having the set
  of ground terms as the carrier set) and reduction to completeness of
  set-valued propositional resolution.
\end{proof}
Of course, the Herbrand models constructed in the proof of
\cref{thm:res-completeness} are in general infinite. For purposes of
constructing finite models, we identify a property of `sufficient
completeness' of a model for a set of terms.
\begin{defn}
  A set~$\CT$ of (tuples of) terms is \emph{closed under unification}
  if whenever $t,s\in\CT$ are unifiable and~$\sigma$ is an mgu
  of~$t,s$, then $u\sigma\in \CT$ for every~$u\in\CT$.
\end{defn}
\begin{remark}
  If~$\CT$ is closed under unification, then~$\CT$ is in particular
  closed under injective renaming of variables: For $u\in\CT$, every
  injective renaming~$\sigma$ is an mgu of $u,u$, so that
  $u\sigma\in\CT$.
\end{remark}
We will treat tuples of terms like terms in the following, in
particular mentioning equations between tuples of terms and unifiers
of such equations; this is to be understood as referring to
componentwise equality.

\begin{defn}
  A \emph{solution} of an equation $t=s$ in an outcome model~$\omod$
  is a valuation~$\eta$ such that $\Sem{\bar t}\eta=\Sem{\bar s}\eta$
  in~$\omod$. Let $\CT$ be a set of tuples of terms. We say
  that~$\omod$ is \emph{$\CT$-equationally complete} if whenever an
  equation $\bar t=\bar s$ with $\bar t,\bar s\in\CT$ has a solution
  in~$\omod$, then $\bar t,\bar s$ are unifiable, and the mgu~$\sigma$
  of~$\bar t,\bar s$ is a \emph{most general solution} of
  $\bar t=\bar s$ in~$\omod$, i.e.\ every solution~$\eta$ of
  $\bar t=\bar s$ in~$\omod$ has the form $\eta(x)=\Sem{\sigma(x)}\eta'$
  for some valuation~$\eta'$; we then say briefly that~$\eta$
  \emph{factorizes through}~$\sigma$.
\end{defn}

\begin{theorem}\label{thm:cmpl-model-satisf}
  Let $\CT$ be a set of tuples of terms that is closed under
  unification, and let~$\omod$ be $\CT$-equationally
  complete. Let~$\phi$ be a clause set such that $\bar t\in\CT$ for
  every literal $B(\bar t)$ occurring in~$\phi$. If~$\phi$ is
  consistent under set-valued first-order resolution, then~$\phi$
  is satisfiable over~$\omod$.
\end{theorem}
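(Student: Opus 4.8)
The plan is to reduce the statement to the completeness of set-valued first-order resolution (\cref{thm:res-completeness}) and then to collapse the resulting reasoning onto the finite, equationally complete model $\omod$, with closure of $\CT$ under unification ensuring that everything stays within the controlled set of term tuples. First I would saturate $\phi$ under the resolution rule $(SR)$, obtaining a set $\phi^*$ that contains $\phi$, is closed under $(SR)$, and, since $\phi$ is consistent, contains no blatantly inconsistent clause. The first technical point to check is that every literal $B(\bar t)$ occurring in $\phi^*$ still satisfies $\bar t\in\CT$: a resolution step applies an mgu $\sigma$ of a family $\bar t_1,\dots,\bar t_n,\bar u$ of tuples in $\CT$ to tuples in $\CT$, and since such an mgu can be obtained by iterating pairwise unification, closure of $\CT$ under unification (applied repeatedly, together with the remark that $\CT$ is then also closed under injective renaming) yields that all conclusions again have their tuples in $\CT$. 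Hence $\CT$-equational completeness of $\omod$ is applicable to every equation between tuples arising anywhere in the saturation.

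Next I would build a satisfying interpretation over $\omod$ by mirroring the propositional case of \cref{thm:res-completeness}, i.e.\ by passing to a maximally consistent, Hintikka-style extension and reading off outcome values, but now indexing the construction by the tuples in $\CT$ taken up to $\omod$-equivalence (two tuples being equivalent if their equation has a solution in $\omod$). The role of equational completeness is precisely to align this semantic equivalence with the syntactic notion driving resolution: whenever $\Sem{\bar t}\eta=\Sem{\bar s}\eta$ for tuples $\bar t,\bar s\in\CT$, equational completeness makes $\bar t,\bar s$ unifiable with an mgu $\sigma$, so that a single resolution step at $\sigma$ faithfully represents the propositional resolution step joining the two corresponding literals at a common index. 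Consequently the propositional Hintikka lemma (an equivalence class carrying $(A\cup B)(\bar t)$ must carry $A(\bar t)$ or $B(\bar t)$) transfers, and, using finiteness of $W$ as in \cref{thm:res-completeness}, the intersection of the outcome sets $A$ forced at a given class is nonempty; choosing a witness there determines the outcome value on that class.

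The crucial use of the most-general-solution clause of equational completeness comes in verifying that the value chosen at an mgu representative is correct for every valuation that is an $\omod$-solution, and not merely for the representative one: since any such solution $\eta$ factorizes as $\eta(x)=\Sem{\sigma(x)}\eta'$ through the mgu $\sigma$, the truth value of each literal $B(\bar t)$ is preserved when passing from the representative to $\eta$, so that satisfaction established at the representatives propagates to all valuations and hence yields $\omod\models\Gamma$ for every clause $\Gamma\in\phi$. I expect the main obstacle to lie exactly in this bridge between the syntactic unification that powers resolution and the semantic identification of moves in the finite model $\omod$: closure of $\CT$ under unification guarantees that the mgu instances needed for the saturation and for selecting class representatives never leave the controlled term set, while the factorization property of equationally complete models is what allows the otherwise infinite, Herbrand-style reasoning of \cref{thm:res-completeness} to be carried out, and its conclusions to be read off, over the single finite model $\omod$.
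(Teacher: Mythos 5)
Your first step (saturating $\phi$ and using closure of $\CT$ under unification to keep all tuples inside $\CT$) is sound in spirit, but already slightly off: saturation under $(SR)$ alone is not enough. When a clause is instantiated in~$\omod$, several literals \emph{on each side} may collapse to the same semantic index, so matching a single propositional resolution step requires first-order resolution with \emph{two-sided} implicit factoring, i.e.\ the rule $(\mathit{SR}^+)$; the paper obtains closure under $(\mathit{SR}^+)$ via the liberalized rule $(\mathit{lSR})$ (\cref{lem:lsr-admissible,lem:two-sided-derivable}). This part is fixable. The genuine gap lies in your second and third paragraphs. You propose to run the Hintikka-style construction over ``tuples in $\CT$ up to $\omod$-equivalence, two tuples being equivalent if their equation has a solution in~$\omod$.'' First, this relation is not an equivalence relation: solvability of $\bar t=\bar s$ and $\bar s=\bar u$ in~$\omod$ does not imply solvability of $\bar t=\bar u$ (take $\bar s$ a tuple of variables), so the ``classes'' you index by are not well defined. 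Second, and more fundamentally, an open tuple $\bar t$ does not denote a single index: the truth of a literal $B(\bar t)$ under a valuation~$\eta$ is $f(\Sem{\bar t}\eta)\in B$, and as~$\eta$ varies, $\Sem{\bar t}\eta$ ranges over many joint moves. Satisfaction of a clause is universally quantified over valuations, and different valuations may require \emph{different} disjuncts of the same clause to hold; choosing a single outcome value ``on the class of $\bar t$'' cannot account for this. Your appeal to the factorization property is also misdirected here: that property relates solutions of an equation \emph{between two tuples} to their mgu; it does not make $\Sem{\bar t}\eta$ independent of~$\eta$, so a value fixed at an ``mgu representative'' does not propagate to all valuations as you claim.

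The missing idea is the paper's ground-level reduction. One forms the clause set $\phi^\omod$ of \emph{all} instances $\Sem{\Gamma}\eta=\{B(\Sem{\bar t\,}\eta)\mid B(\bar t\,)\in\Gamma\}$ of clauses $\Gamma\in\phi$ under all valuations~$\eta$; this is a set-valued \emph{propositional} clause set whose indices are actual joint moves of~$\omod$, and a propositional model of it is exactly an outcome function witnessing satisfiability of~$\phi$ over~$\omod$. Equational completeness is then used exactly once, to show that $\phi^\omod$ is \emph{closed} under propositional resolution: if instances of two clauses are resolvable, the corresponding tuples $\bar t_1,\dots,\bar t_n,\bar s_1,\dots,\bar s_m\in\CT$ share a solution in~$\omod$, hence are unifiable with mgu~$\sigma$, and the given valuation factorizes as $\eta(x)=\Sem{\sigma(x)}\eta'$; therefore the propositional resolvent is the $\eta'$-instance of the first-order $(\mathit{SR}^+)$-resolvent, which lies in the saturated~$\phi$. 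Since~$\phi$ contains no blatantly inconsistent clause, neither does~$\phi^\omod$, so closure gives consistency, and \cref{thm:res-completeness} (propositional case, using finiteness of~$W$) supplies the model. Your proposal has the right ingredients (saturation, unifiability from solvability, factorization through the mgu) but deploys them at the wrong level of abstraction; without the instantiation step producing $\phi^\omod$, the construction of the outcome function does not go through.
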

\begin{proof}
  By completeness of set-valued propositional resolution
  (\cref{thm:res-completeness}), it suffices to show that the clause
  set~$\phi^\omod$ consisting of all instances over~$\omod$ of clauses
  in~$\phi$ is consistent under set-valued propositional
  resolution. Formally, an instance $\Sem{\Gamma}\eta$ over~$\omod$ of
  a clause~$\Gamma$ is induced by an $A$-valuation~$\eta$, and given
  as
  \begin{equation*}
    \Sem{\Gamma}\eta=\{B(\Sem{\bar t\,}\eta)\mid B(\bar t\,)\in\Gamma\}.
  \end{equation*}
  Since~$\CT$ is closed under unification, we can assume w.l.o.g.\
  (using \cref{lem:lsr-admissible,lem:two-sided-derivable})
  that~$\phi$ is closed under set-valued first-order resolution with
  implicit two-sided factoring, i.e.\ under the rule~$(\mathit{SR}^+)$
  (since all terms that appear when closing~$\phi$ under resolution
  remain in~$\CT$); then it suffices to show that~$\phi^\omod$ is
  closed under set-valued propositional resolution, since~$\phi$ and,
  hence,~$\phi^\omod$ do not contain blatantly inconsistent clauses.

  So let $\Gamma,A_1(\bar t_1),\dots,A_n(\bar t_n)$ and
  $B_1(\bar s_1),\dots,B_m(\bar s_m),\Delta$ be clauses in $\phi$,
  with variables made disjoint. By the latter restriction, resolvable
  instances of these clauses in~$\omod$ can be assumed to use the same
  valuation; so let~$\eta$ be a valuation such that
  $\Sem{\bar t_1}\eta = \dots = \Sem{\bar t_n}\eta = \Sem{\bar
    s_1}\eta=\dots= \Sem{\bar s_m}\eta$. Then in particular
  $\bar t_1 = \dots = \bar t_n = \bar s_1=\dots=\bar s_m$ is solvable
  in~$\omod$. Since
  $\bar t_1,\dots,\bar t_n,\bar s_1,\dots,\bar s_m \in\CT$, it follows
  by $\CT$-equational completeness of~$\omod$ that
  $\bar t_1,\dots,\bar t_n,\bar s_1,\dots,\bar s_m$ are unifiable,
  hence have an mgu~$\sigma$, and that $\sigma$ is a most general
  solution of $\bar t_1 = \dots = \bar t_n = \bar s_1=\dots=\bar s_m$
  in~$\omod$. This implies that $\eta$ has the form
  $\eta(x)=\Sem{\sigma(x)}\eta'$ for some $A$-valuation~$\eta'$. Thus,
  the resolvent
  $\Sem{\Gamma,((\bigcup_{i = 1}^nA_i)\cap (\bigcup_{i=1}^mB_i))(\bar
    u),\Delta}\eta$ of the two instances has the form
  $\Sem{\Gamma\sigma,((\bigcup_{i = 1}^nA_i)\cap
    (\bigcup_{i=1}^mB_i))(\bar u\sigma),\Delta\sigma}\eta'$, and hence
  is in~$\phi^\omod$ as required since
  $\Gamma\sigma,((\bigcup_{i = 1}^nA_i)\cap (\bigcup_{i=1}^mB_i))(\bar
  u\sigma),\Delta\sigma$ is in~$\phi$ by closure of~$\phi$ under
  $(\mathit{SR}^+)$. % Moreover, since $\CT$
  % is closed under unification, $\bar t\sigma(=\bar s\sigma)\in\CT$.
\end{proof}

%%%%%%%%%%%%%%%%%%%%%%%%%%%%%%%%%%%%%%%%%%%%%%%%%%%%%%%%%%%%%%%%%%%%%%%%%%%%%%%%%%%%%%%%%%%%%%%%%%%%%%%%%

\section{The \ac{AMC},  Coalgebraically}\label{sec:coalg-amc}

To illustrate the use of one-step tableau rules, we briefly indicate how
to obtain the \ET upper bound for the \ac{AMC} by \cref{thm:coalg-mu}.
% \begin{defn}[Syntax of \acf{CL}]\label{def:syntax-cl}
%   The syntax of \ac{CL} is defined over a set \(A\) of \defemph{atoms}, and a finite set \(\Agents\) of agents.
%   \[\phi, \psi ::= p \mid \neg \phi \mid \phi \wedge \psi \mid \clbox{C}\phi\]
%   where \(p \in A\) and \(C \subseteq \Agents\).
% \end{defn}
%
% Recall the \ac{CGS} semantic of the \ac{AMC}.
%
% \begin{defn}[\Ac{CGS} Semantic of \ac{AMC}]\label{def:semantic-cl}
%   The semantic of an \ac{AMC} formula with atoms from \(A\), variables from \(X\) and agents from \(\Agents\) is defined inductively on a \ac{CGS} \((\Agents, W, A, v, k, f)\):
%   \begin{description}
%     \item[\(F,w \vDash p\)] iff \(p \in v(w)\)
%     \item[\(F,w \vDash \neg\phi\)] iff \(F,w \not\vDash \phi\)
%     \item[\(F,w \vDash \phi \wedge \psi\)] iff \(F,w \vDash \phi\) and \(F,w \vDash \psi\)
%     \item[\(F,w \vDash \clbox{C}\phi\)] iff there exists a joint move \(m_C \in \moves{_C^w}\) such that for all complete moves \(m \in \moves{_\Agents^w}\) where \(m_c \sqsubseteq m\), \(F, f_w(m) \vDash \phi\)
%   \end{description}\mgnote{adapt to amc}
% \end{defn}
%
The requisite functor~$\Gp$ and the associated predicate liftings have
been recalled in \cref{sec:coalg-log}.
% \begin{defn}\label{def:coalgebraic-cl}
%   \ac{CL} is coalgebraic with a functor \(\mathcal{G}\)
%   \[\mathcal{G}W = (f : \moves{_\Agents} \to W, (k_j)_{j \in \Agents})\]
%   and predicate lifting for \(\clbox{C}\):
%   \begin{align*}
%     \Sem{\clbox{C}}_W(Y) = \{&(f, (k_j)_{j \in \Agents}) \in \mathcal{G}W \mid\\
%                              & \exists m_C \in \moves{_C} \forall m_{\bar{C}} \in \moves{_{\Agents \setminus C}} :\\
%                              & f(m_C, m_{\bar{C}}) \in Y\}
%   \end{align*}
% \end{defn}
% already in \cref{ex:liftings}
%
% It is easy to see how the functor definition and predicate lifting in \cref{def:coalgebraic-cl} reflects the \ac{CGS} semantic in \cref{def:semantic-cl}.
%
We recall the known rule set~\cite{SchroderPattinson09,CirsteaEA11b}:
\begin{gather*}
  (CD)\;\frac{\clbox{D_1}a_1,\dots,\clbox{D_\alpha}a_\alpha}{a_1,\dots,a_\alpha}
  \\
  (C)\;\frac{\clbox{D_1}a_1,\dots,\clbox{D_\alpha}a_\alpha, \cldiamond{E}b,\cldiamond{\Agents}c_1,\dots,\cldiamond{\Agents}c_\beta}{a_1,\dots,a_\alpha, b,c_1,\dots,c_\beta}
\end{gather*}
where for each \(j,k\), \(D_j \cap D_k = \emptyset\) and
\(D_j \subseteq E\).  Soundness of these rules is straightforward
(they say in particular that disjoint coalitions can combine their
abilities and that coalitions inherit the abilities of subcoalitions);
for illustration, we show one-step tableau completeness using
set-valued resolution (\cref{sec:set-valued-resolution}), alternative
to proofs in the
literature~\cite{Drimmelen03,GorankoDrimmelen06,schewe_PhD}.

\begin{theorem}[One-step tableau completeness]\label{thm:cl-complete}
  The rules $(C)$, $(CD)$ are one-step tableau complete \wrt \ac{AMC}.
\end{theorem}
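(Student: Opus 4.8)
The plan is to use the set-valued first-order resolution of \cref{sec:set-valued-resolution}. Fix $X$, a valuation $\tau\colon\PV\to\Pow X$, and a set $\Xi\subseteq\Lambda(\PV)$ of modal atoms, and assume the completeness hypothesis: every instance of $(C)$ or $(CD)$ whose premiss is contained in~$\Xi$ has a nonempty conclusion (both rules have a single conclusion, so this just says $\Sem{\Theta}\tau\neq\emptyset$). I would encode~$\Xi$ as the clause set~$\phi$ that contains, for each $\clbox{C}a\in\Xi$, the box clause~\eqref{eq:box-lit}, and for each $\cldiamond{C}a\in\Xi$, the diamond clause~\eqref{eq:diamond-lit}, with all Skolem symbols and variables kept fresh across atoms. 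The key point is that a model of~$\phi$ with finite carriers~$S_j$ is exactly the operation-free reduct of a one-step game $t\in\Gp X$ with $t\in\Sem{\hearts}_X(\tau(a))$ for every $\hearts a\in\Xi$: the Skolem constants~$e_C$ provide the witnessing joint moves for the box modalities, and the Skolem functions~$g_{\other C}$ provide the countermoves for the diamond modalities. Hence $\Sem{\Xi}\tau=\bigcap_{\hearts a\in\Xi}\Sem{\hearts}_X(\tau(a))\neq\emptyset$ would follow from producing a \emph{finite} model of~$\phi$, which by \cref{thm:res-completeness,thm:cmpl-model-satisf} reduces to showing that~$\phi$ is consistent.

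The core of the proof is to show that \emph{if~$\phi$ is inconsistent then some instance of $(C)$ or $(CD)$ with premiss contained in~$\Xi$ has an empty conclusion}, contradicting the hypothesis and thereby forcing consistency. Since every clause of~$\phi$ is a singleton and resolution of singletons again produces a singleton, a derivation of a blatantly inconsistent clause corresponds to unifying the argument tuples of a sub-multiset of the generating clauses while the intersection of the associated sets~$\tau(\cdot)$ is empty. I would read off the admissible such sub-multisets from the two ways unification can fail. A clash between the distinct Skolem constants contributed by two box clauses at a shared coordinate forces the box coalitions to be pairwise disjoint, and a clash between a box constant and a diamond Skolem term forces each box coalition to lie inside the coalition~$E$ of a participating proper diamond; these are exactly the side conditions $D_j\cap D_k=\emptyset$ and $D_j\subseteq E$. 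The occurs check (as in the example $\cldiamond{\{0\}}a,\cldiamond{\{1\}}b$ of \cref{sec:set-valued-resolution}) rules out unifying two \emph{proper} diamonds (those with $E\neq\Agents$) at once, while grand-coalition diamonds $\cldiamond{\Agents}c$ yield clauses of pure variables and so unify with anything. Thus the generating atoms of any successful unification form precisely a premiss of $(CD)$ (boxes only) or $(C)$ (pairwise disjoint boxes inside one proper diamond's coalition, plus arbitrarily many grand-coalition diamonds), and emptiness of the resulting literal says exactly that the corresponding conclusion~$\Theta$ has $\Sem{\Theta}\tau=\emptyset$.

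I expect two points to require the most care. First, the derivation analysis must cover arbitrary resolution derivations rather than a single simultaneous unification: I would normalise using the admissibility of the liberalised rule and of two-sided implicit factoring (\cref{lem:lsr-admissible,lem:two-sided-derivable}) so that all participating clauses may be resolved in one step, and use the rigidity of Skolem symbols under renaming to argue that reusing a clause contributes nothing new, so that the reachable unifiable patterns are exactly those above. Second, turning a (generally infinite) Herbrand model into a genuine finite one-step game is where finiteness must be regained: I would take~$\CT$ to be the argument tuples occurring in~$\phi$ closed under unification---these stay shallow precisely because the occurs check admits at most one proper diamond per configuration---and exhibit a finite $\CT$-equationally complete outcome model that keeps distinct Skolem symbols apart and contains no occurs-check cycles, so that \cref{thm:cmpl-model-satisf} delivers the finite model and hence the desired $t\in\Sem{\Xi}\tau$. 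The hardest part will be this interplay: ensuring that the chosen finite model is simultaneously $\CT$-equationally complete and a legitimate $\Gp$-coalgebra.
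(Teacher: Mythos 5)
Your proposal reconstructs the paper's proof almost exactly: the same clause encoding of~$\Xi$ via~\eqref{eq:box-lit} and~\eqref{eq:diamond-lit} with fresh Skolem symbols, the same reduction of $\Sem{\Xi}\tau\neq\emptyset$ to consistency via \cref{thm:res-completeness}, and your catalogue of unification failures (constant clashes forcing $D_j\cap D_k=\emptyset$ and $D_j\subseteq E$; the occurs check excluding two proper diamonds in one derivation; grand-coalition diamonds contributing all-variable tuples that unify with anything) reproduces the four observations in the paper's proof, so that a derivation of a blatantly inconsistent clause induces a match of $(CD)$ or $(C)$ whose conclusion is empty under~$\tau$, contradicting the hypothesis. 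This half of your argument is correct and complete, and your normalisation via \cref{lem:lsr-admissible,lem:two-sided-derivable} is a legitimate way to handle arbitrary derivations.

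The genuine gap is the step you yourself flag as hardest: you assert that one can \emph{exhibit} a finite $\CT$-equationally complete model that ``keeps distinct Skolem symbols apart and contains no occurs-check cycles,'' but you give no construction, and this is precisely the load-bearing novelty of the paper's proof --- it is not a routine verification. The difficulty is that equational completeness requires solvability in~$\omod$ to \emph{imply} unifiability, so the occurs-check equations between two proper-diamond tuples (which have no unifier) must have \emph{no solution} in~$\omod$; but any unconstrained finite interpretation of the Skolem functions forces collisions among their values, typically creating exactly such solutions. The paper resolves this by pairing move symbols with \emph{colours} from a finite abelian group~$U$ containing distinct elements $u_1,\dots,u_\beta$ (e.g.\ $\Int/\beta\Int$): $e^j$ is interpreted as $(e^j,0)$, and $g^j_i$ emits colour $u_j-\col(m_{C_j})$ at the least agent of $\other{C_j}$ and colour~$0$ elsewhere, so that every joint move consistent with~$g^j$ has total colour~$u_j$; since $u_j\neq u_k$, an equation between tuples headed by $g^j$ and $g^k$ in the occurs-check configuration is unsolvable in~$\omod$, and a case analysis over the shapes of tuples in~$\CT$ --- which all have the form $(x_A,e_B,g_{\,\other{A\cup B}}(x_A,e_B))$ with at most one Skolem function, an invariant of unification closure that itself needs proof --- establishes $\CT$-equational completeness, after which \cref{thm:cmpl-model-satisf} applies. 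Without this construction (or an equivalent one), your proof stops at the infinite Herbrand model and does not yield an element of $\Gp X$, so the theorem is not established as stated.
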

By \cref{thm:coalg-mu}, this implies the known (tight) $\ET$ upper
bound for satisfiability checking in the~\ac{AMC}~\cite{schewe_PhD}.

\begin{proof}
  As indicated above, we present a proof producing infinite sets of
  moves in one-step games, and then discuss how finiteness of move
  sets is regained using the notion of $\CT$-equationally complete
  (finite) model (\cref{thm:cmpl-model-satisf}).

  Let $\tau$ be a $\Pow W$-valuation, and let
  \(\Xi=\{\clbox{D_1}a_1,\dots,\allowbreak\clbox{D_\alpha}a_\alpha,\allowbreak\cldiamond{C_1}c_1,\dots,\allowbreak\cldiamond{C_\beta}c_\beta\}\)
  such that for every instance of $(C)$ or $(CD)$ that applies to
  (some subset of)~$\Xi$, the conclusion~$\Theta$ satisfies
  $\Sem{\Theta}\tau\neq\emptyset$. We have to show that
  $\Sem{\Xi}\tau\neq\emptyset$. To this end, we translate~$\Xi$ into a
  clause set~$\phi$ in set-valued first-order logic
  (\cref{sec:set-valued-resolution}), generating one (singleton)
  clause for each modalized atom $\clbox{D_j}a_j$ and
  $\cldiamond{C_j}c_j$ according to~\eqref{eq:box-lit}
  and~\eqref{eq:diamond-lit} (\cref{sec:set-valued-resolution}), with
  distinct Skolem constants~$e^j_{D_j}$ and Skolem functions
  $g^j_{\other{C_j}}$, respectively. By \cref{thm:res-completeness},
  it suffices to show that $\phi$ is consistent under set-valued
  resolution.  We observe the following.
  \begin{enumerate}
  \item Two clauses~$b_j$ and $b_k$ of shape~\eqref{eq:box-lit}, for
    $j \neq k$, resolve only if $D_j\cap D_k=\emptyset$ -- otherwise,
    unification fails due to a clash between \(e^j_i\) and \(e^k_i\)
    for each agent $i\in D_j\cap D_k$.
  \item Similarly, a clause~$b_j$ of shape~\eqref{eq:box-lit} resolves
    with a clause~$d_k$ of shape~\eqref{eq:diamond-lit} only if
    $D_j\cap\other C_k=\emptyset$, i.e.\ $D_j\subseteq C_k$.
  \item Similarly, two clauses~$d_j$ and $d_k$ of
    shape~\eqref{eq:diamond-lit}, for $k\neq j$, resolve only if
    $\other{C_j}\cap\other{C_k}=\emptyset$, i.e.\
    $C_j\cup C_k=\Agents$
  \item \label{item:dia-dia} % this is referenced in the remarks in the appendix
    Crucially, two clauses~$d_j$ and~$d_k$ of
    shape~\eqref{eq:diamond-lit} , for $k\neq j$, resolve only if at
    least one of~$C_j$ and~$C_k$ is~$\Agents$: Assume that
    $p\in\other{C_j}$ and $q\in\other{C_k}$. By the previous
    item, $p\in C_k$ and $q\in C_j$, so~$x_p$ is an argument
    in~$g^k_q$ and~$x'_q$ (renamed for purposes of the resolution
    step) is an argument in~$g^j_p$, implying that unification
    of~$d_j$ and~$d_k$ fails at the occurs check (cf.\
    p.~\ref{page:occurs}).  This explains why only one $\cldiamond{E}$
    with $E\neq N$ is needed in rule $(C)$.
  \end{enumerate}
  These observations imply that a resolution proof of a blatantly
  inconsistent (necessarily singleton) clause from~$\phi$ will witness
  a rule match of either~\((C)\) or \((CD)\) (depending on whether
  clauses of shape~\eqref{eq:diamond-lit} are involved), and blatant
  inconsistency means that $\Sem{\Theta}\tau=\emptyset$ for the
  corresponding rule conclusion~$\Theta$, contradicting the assumption
  on~$\Xi$.

  \subparagraph*{Finitely many moves} As indicated in
  \cref{sec:set-valued-resolution}, the model of~$\Xi$ thus produced
  will have infinitely many moves per agent, namely the ground terms
  generated by the Skolem constants and functions. We can replace
  these with finitely many moves where agents play \emph{Skolem
    symbols} paired with \emph{colours} -- simulating the effect of
  the occurs check from the unification procedure -- taken from a
  finite abelian group~$U$ (with neutral element~$0$ and group
  operation~$+$) that contains distinct elements $u_1,\dots,u_\beta$
  (e.g.\ $U=\Int/\beta\Int$).  Specifically, all agents receive (for
  simplicity) the same moves, namely
  \begin{itemize}
  \item moves $(e^j,0)$ for $j=1,\dots,\alpha$, intended as witnesses
    for $\clbox{D_j}a_j$, and
  \item moves $(g^j,u)$ for $j=1,\dots,\beta$ and $u\in U$, intended
    as witnesses for $\cldiamond{C_j}c_j$.
  \end{itemize}
  We refer to the first component of a move as its \emph{move symbol},
  and to the second as its \emph{colour}. By $\col(m_C)$ we denote the
  sum of all colours of the moves in a joint move $m_C$ for~$C$.
  
  Let \(\CT\) be the unification closure of the set of all tuples of
  argument terms occuring in clauses from
  \(\phi\). %\mgnote{Given that \(\CT\) is finite already why not just use that, then we don't even need to simulate the occurs check. Because we need to interpret all function applications and not just the ones occurring in \(\CT\)}
  By the above analysis, all tuples in \(\CT\) essentially have the
  shape \((x_A, e_B, g_{\,\other{A \cup B}}(x_A,e_B))\) where \(x_A\)
  are variables, \(e_B\) are Skolem constants possibly from different
  box modalities, and \(g_{\,\other{A \cup B}}\) are Skolem functions
  from a single diamond (as Skolem functions for different diamonds do
  not initially occur in the same tuple of terms and such occurrences
  are not introduced during unification due to the occurs check); any
  one of $x_A$, $e_B$, $g$ may be absent.  The (finite) model \(\omod\)
  is then defined over coloured moves.  Skolem constants $e^j$ are
  interpreted as $(e^j,0)$, and Skolem functions \(g_i^j\) for
  $i\in \other{C_j}$ are interpreted as mapping a joint move
  $m_{C_j}$ of~$C_j$ to $(g^j,u_j-\col(m_{C_j}))$ if $i$ is the least
  element of~$\other{C_j}$, and to~$(g^j,0)$ otherwise, thus
  ensuring that $\col(m_{C_j},g^j(m_{C_j}))=u_j$. We proceed to show
  that~$\omod$ is $\CT$-equationally complete, obtaining by
  \cref{thm:cmpl-model-satisf} and consistency of~$\phi$ under
  set-valued first-order resolution that~$\phi$ is satisfiable
  over~$\omod$.
   
  So let \(t, u \in \CT\) such that \(t = u\) has a solution \(\eta\)
  in \(\omod\).  We proceed by case distinction on the shape of
  \(t = u\): 

  \((x_A, e_B) = (x'_{A'}, e'_{B'})\): In the simplest case the terms
  just consist of variables ($x_A$, $x'_{A'}$) and Skolem constants
  ($e_B$, $e'_{B'}$).  Given the interpretation of the Skolem
  constants in~$\omod$, it is clear that $e_B$ and $e_B'$ must agree on
  \(B \cap B'\) so $t,u$ are unifiable.  The solution \(\eta\)
  necessarily replaces variables in \(A \cap B'\) and \(A' \cap B\)
  with the respective interpretations of Skolem constants on the other
  side of the equality.  Hence, the solution \(\eta\) factorizes
  through the mgu of \(t\) and~\(u\).

  \((x_{A}, e_{B}, g^j_{\,\other{A \cup B}}(x'_{A},e'_{B})) =
  (x_{A'}, e_{B'})\): This case is similar to the previous one, using
  the observation that given the interpretation of~$g^j$ in~$\omod$, the
  equation can only have a solution if
  $(\other{A \cup B})\cap B'=\emptyset$, i.e.\
  $(\other{A \cup B})\subseteq A'$.

  \((x_A, e_B, g^j_{\,\other{A \cup B}}(x_A,e_B)) = (x'_{A'},
  e'_{B'}, g^k_{\,\other{A' \cup B'}}(x'_{A'},e'_{B'}))\): The
  interpretations of the terms $g^j_{\,\other{A \cup B}}(x_A,e_B)$
  and $g^k_{\,\other{A' \cup B'}}(x_{A'},e'_{B'}))$ in \(\omod\)
  (under~$\eta$) have the form \((g^j, c)\) and \((g^k, d)\) for some
  \(c\) and \(d\), respectively.  The case where $j=k$ is essentially
  like the previous cases. The interesting case is where $j\neq k$, in
  which case necessarily \(\other{A \cup B} \subseteq A'\) and
  \(\other{A' \cup B'} \subseteq A\); this is the case where
  unification of $t,u$ fails at the occurs check as explained above.
  However, the construction of~$\omod$ ensures that now $t=u$ also has
  no solution in~$\omod$, as the respective interpretations of~$g^j$
  and~$g^k$ ensure that the colour of the whole joint move is \(u_j\)
  on the left and \(u_k\) on the right.
\end{proof}
The proof for the \ac{AMCDES} proceeds in a quite similar
fashion, and will be presented in less detail. 

\section{\ac{AMCDES} Satisfiability}\label{sec:amcdes-sat}

We now extend this treatment to obtain \ET satisfiability checking for
\ac{AMCDES}, cast coalgebraically using the functor and predicate
liftings presented in \cref{sec:coalg-log}. We have one-step rules
$\rs$, $\rc$, where $\rc$ is

\begin{equation*}
  \rc\;\frac{
      \clbox{D_1, P_{G_1}}a_1, \dots, \clbox{D_\alpha, P_{G_\alpha}}a_\alpha, \cldiamond{E, Q_K}b,\cldiamond{C_1, r_{H_1}}c_1, \dots, \cldiamond{C_\beta, r_{H_\beta}}c_\beta
  }{{(a_j)}_{j \in \inda_q}, b, {(c_j)}_{j \in \indb_q} \mid \cdots \text{ for } q \in Q_K}
\end{equation*}
(i.e.\ the rule has one conclusion for each~$q$) where $\Ag(Q_K)=K$;
the $r_{H_j}$ are (non-disjunctive) explicit joint moves for
coalitions~$H_j$; $\inda_q \subseteq \srng{\alpha}$,
\(\indb_q \subseteq \srng{\beta}\) for each~$q \in Q_K$; and the
following side conditions hold, with
\(L: = \bigcup^\alpha_{j = 1}G_j \cup \bigcup^\beta_{j =
  1}H_j\): % \ie all agents with explicit strategies in the formula (apart from \(K \setminus L\))

\begin{enumerate}
\item\label{item:D_disj} For each \(j,k\), \(D_j \cap D_k = \emptyset\). %ensures witness joinability
\item\label{item:full_dia} For each \(j\), \(C_j \cup H_j = \Agents\).  %ensures that the diamonds are compatible with \(Q_K\)
\item\label{item:D_anon} % there can be no explicit strategies mentioned for agents in
  \(\bigcup^\alpha_{j = 1} D_j \cap L = \emptyset\).  %ensures no witnesses get named
\item\label{item:D_E} % for each \(j\),
  \(\bigcup^\alpha_{j = 1}D_j \subseteq E\). % (only for \rc)
  % ensures witness retention

\item\label{item:no_names_dropped} \(E \cup K \supseteq L\). % (only for \rc)
  % ensures that strategies are either retained or forgotten
\item\label{x1} \(r_{H_j} \ieq q\) for all
  \(q \in Q_K\), \(j \in \indb_q\). % (only for \rc)
\item\label{x2} There is a joint explicit move \(l\) for 
  \(E \cap L\) such that %\(r_{H_j}|_{H_j\cap K} = q|_{H_j\cap K}\) and
  \(r_{H_j} \ieq l\) for each
  \(q \in Q_K\), \(j \in \indb_q\), and moreover for each
  \(j \in \inda_q\) there exists \(p \in P_{G_j}\) such that
  \(p \ieq q\) and
  \(p \ieq l\).  %ensures intersection of the groups in \(Q_K\) and singleton intersection of relevant forgotten moves
% \item there is a joint explicit strategy \(l\) for the agents \(L\) such that for \(j =\rng{\beta}\), \(r_{H_j} \sqsubseteq l\) and for each \(j = \rng{\alpha}\) there exists \(p \in P_{G_j}\) such that \(p \sqsubseteq l\) (only for \rs)\label{y}
  % ensures intersection of the groups
  % \item for each \(r \in R_{H_i}\) if \(r \sqsubseteq s \in Q_K\) then either there needs to be evidence why \(r\) is the reponse to \(s-r\) or for all other \(r' \in R_{H_i}\) there needs to be evidence why they are the wrong response. \mgnote{Fear the circular evidence!!! Other modalities with disjunctions could be evidence if there is in turn evidence for which move they will play.} \mgnote{What about the case where there is no \(E\)? This is also problematic due to \cref{x} syncronizing into \(E \cup K\) which is not present then.}
\end{enumerate}

\noindent Rule $\rs$ is a variant of~$\rc$ obtained by instantiating
to $\cldiamond{E, Q_K}b=\cldiamond{\Agents,\{()\}}\top$,
\(\inda_{()}=\srng{\alpha}\), and
\(\indb_{()}=\srng{\beta}\), and then omitting the (valid)
literal \(\cldiamond{\Agents,\{()\}}\top\) from the rule premiss; side
conditions~\ref{item:D_E}.--\ref{x1}.\ then become
trivial and can be omitted.

Rule $\rc$ extends the rules for the basic AMC as recalled in
\cref{sec:coalg-amc}. The new features are intuitively understood as
follows. Imagine that $D_1,\dots,D_n$ play moves witnessing their
ability to (conditionally) enforce $a_1,\dots,a_n$. According to
$\cldiamond{E,Q_K}$, $K$ can then play some move $q\in Q_K$
additionally ensuring~$b$; the $q$-th conclusion of $\rc$ captures the
constraints on the next state reached in this situation. These
additionally depend on the moves chosen by the remaining agents (those
in~$E\setminus\bigcup D_i$): If the arising joint move restricts to
one of the moves in~$P_{G_j}$, then~$D_j$ successfully enforces~$a_j$,
and if it restricts to~$r_{H_j}$, then the next state must
satisfy~$c_j$ (note that since $C_j\cup H_j=\Agents$,
$\cldiamond{C_i,r_{H_k}} c_j$ says that~$c_j$ is enforced as soon as
$H_j$ play $r_{H_j}$). The index sets~$\inda_q$ and~$\indb_q$ indicate for
which~$j$ this applies, and side conditions~\ref{x1} and~\ref{x2}
ensure that a corresponding joint move actually exists. For
definiteness, we note

\begin{lemma}[One-step soundness]\label{thm:soundness-ess}
  The rules \rs, \rc are one-step tableau sound \wrt \ac{AMCDES}.
\end{lemma}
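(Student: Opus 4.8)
The plan is to establish one-step tableau soundness for $\rc$ in full and then read off $\rs$ as the announced special case. Unfolding \cref{def:rules}, I fix a set $X$, a valuation $\tau\colon\PV\to\Pow X$, and a one-step game with explicit strategies $((k_j),f,\interpret)\in\GES X$ witnessing $\Sem{\Phi}\tau\neq\emptyset$; that is, $((k_j),f,\interpret)$ lies in the extension of every modal atom of the premiss. I must then exhibit some $q\in Q_K$ whose conclusion $\Theta_q$ satisfies $\Sem{\Theta_q}\tau\neq\emptyset$. Concretely I will construct a single joint move $m\in\moves{_\Agents}$ together with a suitable $q$ such that $f(m)$ lies in $\tau(b)$, in $\tau(a_j)$ for every $j\in\inda_q$, and in $\tau(c_j)$ for every $j\in\indb_q$, which gives $f(m)\in\Sem{\Theta_q}\tau$.

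First I assemble the move fed into the diamond $\cldiamond{E,Q_K}b$. Each box $\clbox{D_j,P_{G_j}}a_j$ provides a witnessing joint move $m^j_{D_j}\in\moves{_{D_j}}$; by side condition~\ref{item:D_disj} the $D_j$ are pairwise disjoint, and by~\ref{item:D_E} contained in~$E$, so these witnesses combine to a partial move on $\bigcup_j D_j\subseteq E$. Side condition~\ref{item:D_anon} makes $\bigcup_j D_j$ disjoint from~$L$, hence from the domain $E\cap L$ of the gluing move~$l$ supplied by side condition~\ref{x2}. I therefore define $m_E\in\moves{_E}$ to agree with the box witnesses on $\bigcup_j D_j$, with $\interpret(l)$ on $E\cap L$, and arbitrarily on the remainder of~$E$. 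Applying the diamond property to $m_E$ yields a full move $m$ with $m|_E=m_E$, with $m|_K=\interpret(q)$ for some $q\in Q_K$, and with $f(m)\in\tau(b)$; I claim this $q$ witnesses the conclusion.

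The heart of the argument is checking that $m$ restricts correctly on the named coalitions $G_j$ and $H_j$. Here I use $G_j,H_j\subseteq L\subseteq E\cup K$ (side condition~\ref{item:no_names_dropped}) together with $E\cap K=\emptyset$, so that $m$ is already pinned down on these sets by $m_E$ (equivalently by $l$) on the $E$-part and by $\interpret(q)$ on the $K$-part. For $j\in\indb_q$, side conditions~\ref{x1} and~\ref{x2} give $r_{H_j}\ieq q$ and $r_{H_j}\ieq l$, which together force $m|_{H_j}=\interpret(r_{H_j})$; since $C_j\cup H_j=\Agents$ (side condition~\ref{item:full_dia}), the diamond $\cldiamond{C_j,r_{H_j}}c_j$ amounts to the universal statement that any move restricting to $\interpret(r_{H_j})$ on $H_j$ lands in $\tau(c_j)$, so $f(m)\in\tau(c_j)$. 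For $j\in\inda_q$, side condition~\ref{x2} produces $p\in P_{G_j}$ with $p\ieq q$ and $p\ieq l$, forcing $m|_{G_j}=\interpret(p)\in\interpret[P_{G_j}]$; combined with $m|_{D_j}=m^j_{D_j}$ (recorded in $m_E$, using that $D_j$ is disjoint from~$L$ so the box witness is not overwritten by~$l$), the box property yields $f(m)\in\tau(a_j)$. This establishes $f(m)\in\Sem{\Theta_q}\tau$ and hence soundness of $\rc$.

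Finally, $\rs$ is exactly the instantiation of $\rc$ at $\cldiamond{E,Q_K}b=\cldiamond{\Agents,\{()\}}\top$. Since this modal atom holds in every one-step game (its explicit-move condition is vacuous on the empty coalition and $\top$ is interpreted as all of~$X$), any model of the $\rs$-premiss is also a model of the $\rc$-premiss augmented with this literal, while the single $\rs$-conclusion coincides with the $q=()$ conclusion of $\rc$; soundness therefore transfers. I expect the main obstacle to be the simultaneous-consistency bookkeeping of the third step: verifying that the one move $m$ handed back by the diamond can be made to restrict to the prescribed explicit moves on every $G_j$ and every $H_j$ at once, consistently with $\interpret(q)$ on $K$ and $\interpret(l)$ on $E\cap L$. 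This is precisely what the gluing move~$l$ and the two agreement conditions~\ref{x1} and~\ref{x2} are engineered to guarantee, and carefully tracking which agent is controlled by which fragment (the $E$-part, the $K$-part, the box witnesses, or~$l$) is where the care is needed.
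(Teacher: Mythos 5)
Your proposal is correct and matches the paper's own proof essentially step for step: combine the box witnesses with the interpretation of the gluing move~$l$ into a joint move for~$E$ (using side conditions~\ref{item:D_disj}., \ref{item:D_anon}., \ref{item:D_E}.), apply the diamond $\cldiamond{E,Q_K}b$ to obtain~$q$ and the full move, and verify the conclusion literals via conditions~\ref{item:full_dia}., \ref{item:no_names_dropped}., \ref{x1}., and~\ref{x2}. Your treatment is if anything slightly more explicit than the paper's (e.g.\ in spelling out why $m$ restricts to $\interpret(p)$ on~$G_j$ and to $\interpret(r_{H_j})$ on~$H_j$, and in verifying the validity of the dummy literal $\cldiamond{\Agents,\{()\}}\top$ for the reduction of \rs to \rc), but the decomposition and all key steps coincide.
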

\begin{proof}
  By the above, it suffices to show soundness of \rc, formalizing the
  above intuitive explanation. Write \(\phi\) for the premiss of the
  rule, and \(\psi_q\) for the conclusion associated to $q\in Q_K$.
  Let $\tau$ be a $\Pow W$-valuation such that
  \(\Sem{\phi}\tau \neq \emptyset\), and fix
  \(\omod = ((k_j), f, \interpret) \in \Sem{\phi}\tau\); we have to
  show that \(\Sem{\psi_q}\tau \neq \emptyset\) for some
  \(q \in Q_K\). We refer to side conditions by their numbers:

    \begin{itemize}
    \item For each \(j \in \srng{\alpha}\), we have a joint move \(e_j\) for \(D_j\) witnessing \(\clbox{D_j,P_{G_j}}a_j\).
      By~\ref{item:D_disj}., the~$e_j$ can be combined into a joint move \(e\) for \(\bigcup_{j = 1}^\alpha D_j\).
      \item By~\ref{item:D_anon}., \(e\) can be combined with (the interpretation of) the explicit move \(l\) postulated in~\ref{x2}.\ into a move \(x_0\) for \((E \cap L) \cup \bigcup_{j = 1}^\alpha D_j\subseteq E\), where the inclusion is by~\ref{item:D_E}.
        Extend \(x_0\) arbitrarily to a move~\(x\) for the whole coalition \(E\).
      \item Since \(\omod \in \Sem{\cldiamond{E,Q_K}b}\) and $\Ag(x) = E$, there is some $q\in Q_K$ and a joint move $m_q$ for~$\Agents$ such that $x,q\sqsubseteq m_q$ and \(f(m_q) \in \tau(b)\).
      \item To obtain that \(f(m_q) \in \Sem{\psi_q}\tau\) for
        this~$q$, it remains to show that $f(m_q)$ satisfies the
        remaining literals~$a_j,c_j$ of~$\psi_q$:
        \begin{itemize}
        \item For $j\in \inda_{q}$, we have
          $e_j\sqsubseteq m_q$ and, by~\ref{item:no_names_dropped}. and~\ref{x2}.,
          \(\interpret[p] \sqsubseteq m_q\) for some \(p \in P_{G_j}\), so that
          $\omod\in\Sem{\clbox{D_j,P_{G_j}}a_j}\tau$ implies
          $f(m_q)\in\tau(a_j)$.
        \item For \(j \in \indb_{q}\), we have
          \(\interpret[r_{H_j}] \sqsubseteq m_q\)
          by~\ref{item:no_names_dropped}.,~\ref{x1}.,
          and~\ref{x2}. Since \(C_j\cup H_j = \Agents\), we thus have
          that \(\omod \in \Sem{\cldiamond{C_k,r_{H_k}}c_k}\tau\)
          implies \(f(m_j) \in \tau(c_k)\). \qedhere
        \end{itemize}
    \end{itemize}
  \end{proof}
It remains to prove ompleteness:

\begin{lemma}[One-step tableau completeness]\label{thm:cldes-complete}
  The rules \rs, \rc are one-step tableau complete \wrt \ac{AMCDES}.
\end{lemma}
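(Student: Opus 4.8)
The plan is to mirror the completeness proof for the basic \ac{AMC} (\cref{thm:cl-complete}), now using set-valued first-order resolution to analyze the clause set generated from a premiss-satisfying set~$\Xi$ of modal atoms, and to certify that whenever no conclusion of~$\rc$ or~$\rs$ is satisfiable, a blatantly inconsistent clause can be derived. Concretely, fix a $\Pow W$-valuation~$\tau$ and a set
\[
  \Xi=\{\clbox{D_1,P_{G_1}}a_1,\dots,\clbox{D_\alpha,P_{G_\alpha}}a_\alpha,\cldiamond{C_1,r_{H_1}}c_1,\dots,\cldiamond{C_\beta,r_{H_\beta}}c_\beta\}
\]
such that every applicable instance of~$\rc$ or~$\rs$ has a satisfiable conclusion; I must show $\Sem{\Xi}\tau\neq\emptyset$. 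The crucial difference from the \ac{AMC} is that the explicit moves in the $O$-components are interpreted by~$\interpret$, so the literals in~\eqref{eq:box-lit} and~\eqref{eq:diamond-lit} must be augmented: a box atom $\clbox{D_j,P_{G_j}}a_j$ is witnessed by~$C$ playing a Skolem constant~$e^j_{D_j}$, but the enforcement of~$a_j$ is conditioned on the opponents' restriction to one of the interpreted moves $\interpret[p]$ for $p\in P_{G_j}$; similarly a diamond atom carries the constraint that the relevant coalition plays $\interpret[r_{H_j}]$. I would encode these explicit-move constraints by plugging the (ground) interpretations into the appropriate term positions, so that a clause contributes to a contradiction only along those argument tuples that are consistent with the named moves.

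\textbf{Key steps.} First I would translate~$\Xi$ into a set-valued first-order clause set~$\phi$, one clause per modal atom, where each box literal fixes the~$D_j$-components to $e^j_{D_j}$ and ranges over opponent variables, and the $P_{G_j}$/$r_{H_j}$-constraints pin the corresponding agent components to the ground interpretations~$\interpret[p]$ and~$\interpret[r_{H_j}]$; the diamond literals introduce Skolem functions~$g^j$ as before. By \cref{thm:res-completeness} it suffices to prove that $\phi$ is consistent under set-valued resolution. Second, I would replay the four structural observations from the proof of \cref{thm:cl-complete} to determine exactly when two clauses unify, now tracking additionally how the fixed interpreted components interact: two literals can resolve only if their box-witness constants agree on the overlap (side condition~\ref{item:D_disj}), their Skolem-function argument positions are compatible (forcing $C_j\cup C_k=\Agents$ and, via the occurs check, that at most one diamond is involved, matching the single distinguished $\cldiamond{E,Q_K}b$), and crucially their named-move components $\interpret[p]$, $\interpret[r_{H_j}]$, $q$ agree wherever they overlap. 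The third step is to argue that a resolution derivation of a blatantly inconsistent clause therefore forces exactly the side conditions~\ref{item:D_disj}--\ref{item:no_names_dropped} together with the existence of the joint explicit move~$l$ of conditions~\ref{x1} and~\ref{x2}, and singles out a particular $q\in Q_K$ (the one played by the~$\cldiamond{E,Q_K}b$-clause): in other words, blatant inconsistency witnesses a genuine instance of~$\rc$ whose $q$-th conclusion~$\Theta$ collects precisely the literals $a_j$ ($j\in\inda_q$), $b$, and $c_j$ ($j\in\indb_q$) that were resolved together, so $\Sem{\Theta}\tau=\emptyset$ contradicts the assumption on~$\Xi$. The case where no diamond is involved yields an instance of~$\rs$ instead.

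\textbf{Finiteness.} Finally, as in \cref{thm:cl-complete}, the Herbrand model produced has infinitely many moves, so I would regain finiteness via the coloured-move construction and \cref{thm:cmpl-model-satisf}: agents play move symbols paired with colours from a finite abelian group, Skolem constants map to colour~$0$, and Skolem functions for diamonds shift the colour so that the total colour of each diamond's joint move equals a designated group element~$u_j$, simulating the occurs check. The interpreted explicit moves~$\interpret[p]$ and~$\interpret[r_{H_j}]$ must be assigned fixed coloured values consistently, and I would then verify that the resulting finite outcome model~$\omod$ is $\CT$-equationally complete for the unification closure~$\CT$ of the argument tuples in~$\phi$, whence \cref{thm:cmpl-model-satisf} and consistency of~$\phi$ give satisfiability over~$\omod$.

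\textbf{Main obstacle.} The hard part will be the bookkeeping of how the named explicit moves interact with the Skolem-term unification analysis: unlike in the pure \ac{AMC}, a contradiction can now arise not only from clashes and occurs-check failures among Skolem symbols but also from incompatibilities among the fixed interpreted components, and it is precisely the interplay of these that is governed by the intricate side conditions~\ref{x1} and~\ref{x2} (the existence of a common joint move~$l$ for $E\cap L$ compatible with all the $r_{H_j}$ and with some $p\in P_{G_j}$ per $j\in\inda_q$). Establishing that the resolution derivation extracts exactly this joint move~$l$, and that the coloured-move model respects all named-move constraints simultaneously while remaining $\CT$-equationally complete, is where the genuine work of supporting strategy disjunction lies.
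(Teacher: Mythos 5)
Your overall strategy coincides with the paper's: translate~$\Xi$ into a set-valued first-order clause set, show its consistency by analysing which clauses can resolve (clashes of Skolem constants, the occurs check limiting the number of non-grand diamonds to one), read off a match of~\rs or~\rc from any would-be derivation of a blatantly inconsistent clause, and then regain finiteness via the coloured-move model and $\CT$-equational completeness. However, there is a genuine gap exactly at the point where strategy disjunction enters: your translation ``one clause per modal atom'' is wrong for disjunctive boxes. The atom $\clbox{D_j,P_{G_j}}a_j$ says that~$a_j$ is enforced for \emph{every} compliance choice $p\in P_{G_j}$ of the opponents, i.e.\ it is a \emph{conjunction} over~$p$; the paper accordingly generates one singleton clause~\eqref{eq:boxc} per pair $(j,p)$, with~$p$ occurring as an uninterpreted constant symbol in the term positions for~$G_j$. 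Packing this into a single clause would read the guard disjunctively and encode the wrong semantics. Dually, your~$\Xi$ lists only non-disjunctive diamonds $\cldiamond{C_j,r_{H_j}}c_j$, which conflates the restricted shape of the \emph{rule premiss} with the completeness obligation, where~$\Xi$ is an arbitrary set of modal atoms and hence contains disjunctive non-grand diamonds $\cldiamond{C_j,R_{H_j}}c_j$. These are precisely the source of the only non-singleton clauses~\eqref{eq:diac}, with one literal per $r\in R_{H_j}$, and those literals are what generate the $q$-indexed \emph{family} of conclusions of~\rc; with your setup the multi-conclusion structure of the rule never arises, and the argument degenerates to (at best) a completeness proof for the \ac{AMCES} fragment.

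A consequence of this omission is that your step ``the derivation singles out a particular $q\in Q_K$ \dots\ so $\Sem{\Theta}\tau=\emptyset$ contradicts the assumption'' inverts the quantification. A blatantly inconsistent clause must have \emph{all} its literals of the form $\emptyset(\bar t)$, so a type-1 derivation has to resolve into and empty the literal of the distinguished clause~$d_{j_0}$ for \emph{every} $q\in Q_K$; the rule match it induces would thus need every $q$-conclusion to be empty under~$\tau$, and the hypothesis that \emph{some} conclusion is non-empty is what blocks the derivation. Relatedly, the non-singleton clauses force one to check that implicit factoring is harmless (the paper observes that the distinct constants $r\in R_{H_j}$ in the literals prevent them from unifying with one another, an invariant preserved under resolution) --- a point that cannot even be formulated in your encoding. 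Finally, a smaller conceptual slip: you describe ``plugging the (ground) interpretations $\interpret[p]$'' into term positions, but~$\interpret$ is part of the model being \emph{constructed}, not given; the named moves must be fresh constant symbols, interpreted only afterwards (as $(p,0)$, $(r,0)$ in the coloured model), which your finiteness paragraph in fact does correctly.
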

\begin{proof}
  Let $\tau$ be a $\Pow W$-valuation, and let
  \(\Xi=\{\clbox{D_1,P_{G_1}}a_1,\dots,\allowbreak\clbox{D_\alpha,P_{G_\alpha}}a_\alpha,\allowbreak\cldiamond{C_1,R_{H_1}}c_1,\dots,\allowbreak\cldiamond{C_\beta,
    R_{H_\beta}}c_\beta\}\) such that every instance of \rs or \rc
  whose premise is contained in~$\Xi$ has a conclusion that is
  non-empty under~$\tau$.  We have to show that
  $\Sem{\Xi}\tau\neq\emptyset$. We translate~$\Xi$ into a clause
  set~\(\phi\) in set-valued first-order logic by including for each
  $\clbox{D_j,P_{G_j}}a_j$ and each $p\in P_{G_j}$ a singleton clause
  \begin{equation}
    \label{eq:boxc}  \{\tau(a_j)(e^j_{D_j},x_{\,\other{D_j \cup G_j}},p)\},
  \end{equation}
  (so $e^j_{D_j}$ witnesses $\clbox{D_j,P_{G_j}}a_j$), and for each
  $\cldiamond{C_j, R_{H_j}}c_j$ a clause
  \begin{equation}
   \label{eq:diac}  \{\tau(c_j)(x_{C_j},g^j_{\,\other{C_j \cup H_j}}(x_{C_j}),r) \mid r \in R_{H_j}\}
  \end{equation}
  (so the $g^j_{\,\other{C_j \cup H_j}}$ are Skolem functions
  witnessing $\cldiamond{C_j, R_{H_j}}c_j$). We now proceed as in the
  proof of \cref{thm:cl-complete}: We first show that \(\phi\) is
  consistent under set-valued resolution, obtaining by
  \cref{thm:res-completeness} that~$\phi$ is satisfiable in a model
  that may have infinitely many moves, and then present a finite
  $\CT$-equationally complete model for the unification closure~$\CT$
  of the involved terms. 
  Write $b^p_j$ for clauses of type~\eqref{eq:boxc}
  for given $j=1,\dots,\alpha$ and $p\in G_j$, and $d_j$ for the
  $j$-th clause of type~\eqref{eq:diac}.

  Unlike in the proof of \cref{thm:cl-complete}, we thus may have
  non-singleton clauses, of shape~\eqref{eq:diac}.  We first note that
  implicit factoring plays no role in resolution from this clause set:
  The non-singleton clause resulting from a diamond
  \(\cldiamond{C_j,R_{H_j}}c_j\) has a unique Skolem constant
  \(r \in R_{H_j}\) in each literal, so its literals do not unify
  among each other.  As unification does not get rid of these
  constants, this restriction will be an invariant throughout
  resolution over this clause set.  However, we shall see that these
  non-singleton clauses do not resolve among each other. We note the
  following observations.
  \begin{enumerate}
  \item $b^p_j$ and $b^q_j$, for $p\neq q$, do not resolve (and
    resolving $b^p_j$ with itself is pointless).
  \item $b^p_j$ and $b^q_k$, for $k\neq j$, resolve only if
    $D_j\cap D_k = D_j \cap G_k = D_k \cap G_j = \emptyset$, and moreover $p \ieq q$. % establishes 1. and participates in 3. and 7.
  \item $b^p_j$ and $d_k$ resolve, at the $d_k$-literal
    for~$r\in R_{H_k}$, only if $D_j\subseteq C_k$, % establishes 4.
    and hence in particular also \(D_j \cap H_k =
    \emptyset\), % establishes missing part of 3.
    $D_j\cup G_j\subseteq C_k\cup H_k$ (equivalently
    $\other{C_k\cup H_k}\subseteq\other{D_j\cup G_j}$), and
    $r\ieq p$.
  \item $d_j$ and $d_k$, for $k\neq j$, resolve, at the $d_j$-literal
    for $r\in H_j$ and the $d_k$-literal for $r'\in H_k$, only if
    $C_j\cup C_k\cup H_k=\Agents$ (equivalently
    $\other{C_k\cup H_k}\subseteq C_j$), $C_k\cup C_j\cup H_j=\Agents$,
    and $r\ieq r'$. % establishes part of 6
  \item Like in the proof of \cref{thm:cl-complete}, it follows that
    $d_j$ and $d_k$ resolve only if at least one of
    $\cldiamond{C_j,R_{H_j}}$ and $\cldiamond{C_k,R_{H_k}}$ is a grand
    coalition modality (since otherwise unification fails at the
    occurs check), in which case the corresponding clause is a
    singleton.
  \item Clauses obtained from clauses of shape~\eqref{eq:diac} by
    resolving with singleton clauses retain essentially
    shape~\eqref{eq:diac}, only with some of the variables~$x_i$
    replaced with constants. Resolution of such clauses is thus
    subject to the same restrictions; in particular, non-singleton
    clause of this kind they will not resolve among each other.
  \end{enumerate}
  Thus, a proof of a blatantly inconsistent clause from~$\phi$ by
  set-valued resolution will involve either zero or one clauses
  $d_j$ where $C_j\cup H_j\neq \Agents$.  We will refer to resolution
  proofs of the first kind as type-0 and to proofs of the second kind
  as type-1.

  \subparagraph*{Type-0 proofs} We show that in this case, the
  impossibility of deriving a blatantly inconsistent clause is
  obtained via rule $\rs$.  To apply \rs to the set of modal atoms
  involved in the proof, we need to show the side conditions of the
  rule (\ref{item:D_disj}.--\ref{item:D_anon}.\ and~\ref{x2}). Indeed,
  condition~\ref{item:full_dia}.\ holds by the definition of type-0
  proofs.  As no disjunctive diamond is involved in a type-0 proof,
  all involved clauses are singletons.
  Hence,~\ref{item:D_disj}.,~\ref{item:D_anon}., and~\ref{x2}.\
  directly follow from the observations above.  The type-0 proof at
  hand thus induces a match of rule~$\rs$ to a subset of~$\Xi$; the
  conclusion of this rule match having non-empty extension
  under~$\tau$ means precisely that the resolution proof does not
  produce a blatantly inconsistent clause.

  \subparagraph*{Type-1 proofs} Those consist in successively
  resolving all literals of a single clause of the form~$d_{j_0}$
  where $C_{j_0}\cup H_{j_0}\neq \Agents$ with suitable singleton
  clauses, of the form either~$b^p_j$ or~$d_k$ where
  $C_k\cup H_k=\Agents$. We will refer to these resolution steps as
  `resolving into $d_{j_0}$', although of course $d_{j_0}$ will have
  been modified by previous resolution steps as described above. To
  match the notation of rule $\rc$, we rename
  $\cldiamond{C_{j_0},R_{H_{j_0}}}$ into $\cldiamond{E,Q_K}b$ (so that
  all the $\cldiamond{C_j,R_{H_j}}c_j$ that remain have
  $C_j\cup H_j= \Agents$ and hence $|R_{H_j}|=1$).  The literals
  in~$d_{j_0}$ are then indexed over $q\in Q_K$. Let $\inda_q$ be the
  set of all~$j$ such that for some $p\in P_j$, \(b^p_j\) is resolved
  into \(d_{j_0}\) at the literal for \(q\), and put
  $G=\bigcup_{q\in Q_K, j\in\inda_q}G_j$; similarly, let~$\indb_q$ be
  the set of all~$j$ such that \(d_j\) (a singleton clause) is
  resolved into \(d_{j_0}\) at the literal for \(q\), and
  put~$H=\bigcup_{q\in Q_K, j\in\indb_q}H_j$.  Notice that two clauses
  resolve only if whenever they both assign a constant (either a
  Skolem constant or an explicit move) to a certain agent, then the
  constant is the same in both clauses; this implies
  condition~\ref{x2}.  Conditions~\ref{item:D_disj}.\
  and~\ref{item:D_anon}.\ are established as in the type-0 case,
  condition~\ref{item:full_dia}.\ is ensured by the above renaming,
  and the remaining conditions follow directly from the above
  observations.  The type-1 proof at hand thus induces a match of
  rule~$\rc$ to a subset of~$\Xi$; a conclusion of this rule match
  having non-empty extension under~$\tau$ means precisely that the
  resolution proof does not produce a blatantly inconsistent clause.

  \subparagraph*{Finitely many moves} As indicated above, we obtain a
  model with finitely many moves by constructing a finite
  $\CT$-equationally complete model~$\omod$, where~$\CT$ is the
  unification closure of the tuples of terms occurring in~$\phi$. This
  construction is essentially the same as for the \ac{AMC}, up to the
  presence of additional constant symbols, viz.\ the explicit strategies
  occurring in~$\phi$. These constants can be treated exactly like the
  Skolem constants already present in the proof of~\cref{thm:cl-complete}.
  The full proof is available in~\cref{sec:proof-one-step}.
\end{proof}
Since the rules $\rc$, $\rs$ are algorithmically
sufficiently harmless, our main result follows from \cref{thm:soundness-ess,thm:cldes-complete} by~\cref{thm:coalg-mu}:
\begin{theorem}
  Satisfiability checking for the \ac{AMCDES} is \ET-complete.
\end{theorem}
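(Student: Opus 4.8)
The plan is to assemble the upper and lower bounds separately. For membership in \ET, I would apply the generic coalgebraic satisfiability theorem. By \cref{thm:soundness-ess,thm:cldes-complete}, the rule set $\{\rs,\rc\}$ is one-step tableau sound and one-step tableau complete for the \ac{AMCDES}. To invoke \cref{thm:coalg-mu} it then remains to confirm that this rule set is \ET-tractable in the sense recalled after \cref{def:rules}: rule matches must be encodable as polynomially sized strings, and the operations of well-formedness checking, rule matching, and access to conclusions must be computable in exponential time. This is the one point where I expect genuine (if routine) verification work: one has to check that the side conditions~\ref{item:D_disj}--\ref{x2} of $\rc$ --- which range over the explicit moves drawn from the finite sets $M_j$ and over the disjunctive strategies $O$ appearing in the target formula --- can be tested within the stated budget, and that the number of conclusions (one per $q\in Q_K$) does not exceed it. Since all of this quantifies only over data whose size is bounded by the input (in particular $|O|$ and $|M_D|$), the encoding is straightforward bookkeeping, and membership in \ET follows directly from \cref{thm:coalg-mu}.

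For \ET-hardness I would argue by conservative subsumption of the \ac{AMC}. As noted after \cref{def:syntax-amcdes}, the \ac{AMCDES} contains the standard \ac{AMC}: the modality $\clbox{C}$ is recovered as $\clbox{C,O}$ with $\Ag(O)=\emptyset$, i.e.\ $O=\{()\}$, and under the semantics of \cref{sec:amcdes} this agrees with the \ac{AMC} clause (cf.\ \cref{ex:liftings}); note that $O=\{()\}$ is non-disjunctive, so the restriction on grand coalition modalities in \cref{def:syntax-amcdes} is respected even when $C=\Agents$. This embedding is satisfiability-preserving in both directions: any \ac{AMC} model is a \ac{CGS} and extends to a \ac{CGSES} by adjoining arbitrary move interpretations, which do not affect the truth of formulae using only $O=\{()\}$; conversely, forgetting the move interpretations of a \ac{CGSES} yields a \ac{CGS} for the underlying \ac{AMC} formula. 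Since satisfiability checking for the \ac{AMC} --- indeed already for basic \ac{ATL} --- is \ET-hard~\cite{schewe_PhD}, the same lower bound transfers to the \ac{AMCDES}.

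Combining the two directions yields \ET-completeness. The hardness direction is immediate from the \ac{AMC} embedding, and the upper bound is otherwise a direct application of \cref{thm:coalg-mu} to the sound and complete rule set furnished by \cref{thm:soundness-ess,thm:cldes-complete}. The main obstacle is therefore neither the hardness nor the fixpoint machinery --- the latter being fully encapsulated by \cref{thm:coalg-mu} --- but rather making the \ET-tractability of the rules explicit. I would accordingly spend the bulk of the writeup fixing a concrete string encoding of rule matches and confirming the exponential-time bounds on the associated operations, after which the theorem follows by assembling \cref{thm:soundness-ess,thm:cldes-complete,thm:coalg-mu} for the upper bound together with the \ac{AMC} reduction for the lower bound.
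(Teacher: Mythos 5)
Your proposal is correct and matches the paper's own argument, which derives the upper bound by exactly the same assembly---\cref{thm:soundness-ess,thm:cldes-complete} fed into \cref{thm:coalg-mu}, with the \ET-tractability of $\rs$, $\rc$ dismissed as ``algorithmically sufficiently harmless''---and leaves the lower bound implicit via the subsumption of the \ac{AMC} (indeed of basic \ac{ATL}/\ac{CTL}), just as you spell out with the $\Ag(O)=\emptyset$ embedding. Your additional care about the encoding of rule matches and the two-way satisfiability preservation of the embedding is sound elaboration of steps the paper compresses, not a different route.
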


% \section{Implementation}\label{sec:impl}

% %\url{https://www.dropbox.com/sh/t7057hptiajfwr3/AABOJpy8NmQ924_2z0OzlKuZa?dl=0}
% %\url{https://bit.ly/2XimAqN}
% \mgnote{Reviewers asked for benchmarks and results but I personally don't see the point in a benchmark when there exists nothing to compare against.}
% We extended the coalgebraic logic reasoner COOL to handle \ac{AMCES} formulas.
% Syntactically COOL allows boxes \bx{\lstinline![]!} and diamonds \bx{\lstinline!<>!} with numeric agents which are written down as a white-space separated list enclosed in curly braces.
% Strategies are written enclosed in \bx{\lstinline!()!} as \bx{\lstinline!strategyname\_agent!} \eg \bx{\lstinline!(a\_1)!}.
% Even if no strategies are specified, the enclosing parens still have to be written.
% So \(\clbox{{1,2,3},(a,b)}p\) for \(a \in M_1, b \in M_2\) can be translated into \bx{\lstinline![\{1 2 3\},(a\_1 b\_2)]P!} or simply \bx{\lstinline![\{3\},(a\_1 b\_2)]P!}.
% The set of agents can be specified with the commandline parameter \bx{\lstinline!--agents 1,2,3!}.
% The other connectives are \bx{\lstinline!\&!}, \bx{\lstinline!|!}, \bx{\lstinline!=>!} etc with the expected meaning.
% The x86 Linux binary and sources are found in our repository.\footnote{Anonymous file storage for anonymous submission \url{https://tiny.cc/amcdes}}
% To run the reasoner execute \eg \bx{\lstinline!coalg.native sat CLN <<< '[\{1,2\},(a\_1)] False'!} or \bx{\lstinline!coalg.native sat CLN <<< 'µ x. [\{1,2\},(a\_1)] x'!}.

\section{Conclusions}\label{sec:conclusions}

We have introduced the \acf{AMCDES}, which extends
\acf{ATLES}~\cite{WaltherEA07} with fixpoint operators and disjunction
over explicit strategies of opposing agents in non-grand
modalities. We have employed methods from coalgebraic logic to show
that model checking with fixed interpretation of explicit strategies
is in $\QP$ as well as in $\NP\cap\coNP$, and in $\NP$ with open
interpretation of strategies, and moreover that satisfiability
checking is in $\ET$.

The coalgebraic treatment in fact implies a whole range of additional
results, e.g.\ reasoning in the next-step fragment of the logic
extended with nominals ($\ET$ with global axioms, and $\PS$
without)~\cite{SchroderEA09,MyersEA09,GoreEA10}; cut-free sequent
systems for the next-step fragment~\cite{PattinsonSchroder10}; and
completeness of a Kozen-Park axiomatization for flat (i.e.\ single-variable)
fragments of the \ac{AMCDES}, \eg \ac{ATL} with disjunctive explicit
strategies~\cite{SchroderVenema18}. A special case of the latter
result is completeness of \ac{ATLES} as proved already
in~Walther et al.~\cite{WaltherEA07}.  % The coalgebraic view actually yields an
% implementation of satisfiability checking by instantiating the generic
% reasoner COOL~\cite{GorinEA14}, which we implemented for
% alternation-free \ac{AMCES} formulas.

% The semantics of the \ac{AMCDES} captures disjunction on opposition strategies given that disjunction on coalition strategies can simply be coded out of the box.
% One could also see conjunction on strategies which can easily be added to \ac{AMCDES} by introducing a nullary modality witnessing strategy agreement without breaking the current results.
% This was left out of this publication due to complicating many notions.

In ongoing work we are extending our axiomatization and complexity
results to allow strategy disjunction also in grand coalition
modalities. % where disjunctive strategies can also
% occur in multiple diamonds
A natural but more challenging further extension would be to add
negative strategies prohibiting moves for some agents as suggested
by Herzig et al.~\cite{Herzig_2013}.  % The complication here is that the
% remaining moves to quantify over can no longer necessarily be coded
% into a conjunction or disjunction as it might have infinite size.
% Alternatively one could have equality in the meta theory but that
% would introduce equality terms in the resolution procedure and it is
% not clear how that procedure would look.

% Given that \ac{ATL}* is subsumed by the \ac{AMC} with exponential
% formula blowup, it might be possible to use a construction similar to
% the one by \cite{de_Alfaro2001} to encode history dependent
% strategies into the \ac{AMCES} or \ac{AMCDES}.

%%%%%%%%%%%%%%%%%%%%%%%%%%%%%%%%%%%%%%%%%%%%%%%%%%%%%%%%%%%%%%%%%%%%%%%%%%%%%%%%%%%%%%%%%%%%%%%%%%%%%%%%%

\bibliography{cl-named}

\providecommand{\noopsort}[1]{}
\begin{thebibliography}{10}

\bibitem{AlurEA02}
Rajeev Alur, Thomas Henzinger, and Orna Kupferman.
\newblock Alternating-time temporal logic.
\newblock {\em J.\ ACM}, 49:672--713, 2002.
\newblock \href {https://doi.org/10.1145/585265.585270}
  {\path{doi:10.1145/585265.585270}}.

\bibitem{CaludeEA17}
Cristian Calude, Sanjay Jain, Bakhadyr Khoussainov, Wei Li, and Frank Stephan.
\newblock Deciding parity games in quasipolynomial time.
\newblock In {\em Theory of Computing, {STOC} 2017}, pages 252--263. {ACM},
  2017.

\bibitem{ChatterjeeEA10}
Krishnendu Chatterjee, Thomas~A. Henzinger, and Nir Piterman.
\newblock Strategy logic.
\newblock {\em Inf.\ Comput.}, 208(6):677--693, 2010.
\newblock \href {https://doi.org/10.1016/j.ic.2009.07.004}
  {\path{doi:10.1016/j.ic.2009.07.004}}.

\bibitem{CirsteaEA11b}
Corina C{\^{\i}}rstea, Clemens Kupke, and Dirk Pattinson.
\newblock {EXPTIME} tableaux for the coalgebraic {$\mu$}-calculus.
\newblock {\em Log.\ Methods Comput.\ Sci.}, 7(3), 2011.
\newblock \href {https://doi.org/10.2168/LMCS-7(3:3)2011}
  {\path{doi:10.2168/LMCS-7(3:3)2011}}.

\bibitem{CirsteaEA11a}
Corina C{\^{\i}}rstea, Alexander Kurz, Dirk Pattinson, Lutz Schr{\"{o}}der, and
  Yde Venema.
\newblock Modal logics are coalgebraic.
\newblock {\em Comput.\ J.}, 54(1):31--41, 2011.
\newblock \href {https://doi.org/10.1093/comjnl/bxp004}
  {\path{doi:10.1093/comjnl/bxp004}}.

\bibitem{DBLP:books/daglib/0082098}
Melvin Fitting.
\newblock {\em First-Order Logic and Automated Theorem Proving}.
\newblock Springer, 2nd edition, 1996.
\newblock \href {https://doi.org/10.1007/978-1-4612-2360-3}
  {\path{doi:10.1007/978-1-4612-2360-3}}.

\bibitem{DBLP:conf/lori/GorankoJ19}
Valentin Goranko and Fengkui Ju.
\newblock Towards a logic for conditional local strategic reasoning.
\newblock In Patrick Blackburn, Emiliano Lorini, and Meiyun Guo, editors, {\em
  Logic, Rationality, and Interaction, {LORI} 2019}, volume 11813 of {\em
  LNCS}, pages 112--125. Springer, 10 2019.
\newblock \href {https://doi.org/10.1007/978-3-662-60292-8_9}
  {\path{doi:10.1007/978-3-662-60292-8_9}}.

\bibitem{GorankoDrimmelen06}
Valentin Goranko and Govert van Drimmelen.
\newblock Complete axiomatization and decidability of alternating-time temporal
  logic.
\newblock {\em Theor.\ Comput.\ Sci.}, 353(1-3):93--–117, 3 2006.
\newblock \href {https://doi.org/10.1016/j.tcs.2005.07.043}
  {\path{doi:10.1016/j.tcs.2005.07.043}}.

\bibitem{GoreEA10}
Rajeev Gor{\'{e}}, Clemens Kupke, Dirk Pattinson, and Lutz Schr{\"{o}}der.
\newblock Global caching for coalgebraic description logics.
\newblock In {\em Automated Reasoning, {IJCAR} 2010}, volume 6173 of {\em
  LNCS}, pages 46--60. Springer, 2010.
\newblock \href {https://doi.org/10.1007/978-3-642-14203-1}
  {\path{doi:10.1007/978-3-642-14203-1}}.

\bibitem{GottlingerEA21}
Merlin G{\"{o}}ttlinger, Lutz Schr{\"{o}}der, and Dirk Pattinson.
\newblock The alternating-time {\(\mu\)}-calculus with disjunctive explicit
  strategies.
\newblock In Christel Baier and Jean Goubault{-}Larrecq, editors, {\em Computer
  Science Logic, {CSL} 2021}, volume 183 of {\em LIPIcs}, pages 26:1--26:22.
  Schloss Dagstuhl -- Leibniz-Zentrum f{\"{u}}r Informatik, 2021.
\newblock \href {https://doi.org/10.4230/LIPIcs.CSL.2021.26}
  {\path{doi:10.4230/LIPIcs.CSL.2021.26}}.

\bibitem{lncs2500}
Erich Gr{\"{a}}del, Wolfgang Thomas, and Thomas Wilke, editors.
\newblock {\em Automata, Logics, and Infinite Games: {A} Guide to Current
  Research}, volume 2500 of {\em LNCS}. Springer, 2002.
\newblock \href {https://doi.org/10.1007/3-540-36387-4}
  {\path{doi:10.1007/3-540-36387-4}}.

\bibitem{DBLP:conf/concur/HausmannS19}
Daniel Hausmann and Lutz Schröder.
\newblock Game-based local model checking for the coalgebraic mu-calculus.
\newblock In {\em Concurrency Theory, {CONCUR} 2019}, volume 140 of {\em
  LIPIcs}, pages 35:1--35:16. Schloss Dagstuhl -- Leibniz-Zentrum für
  Informatik, 8 2019.
\newblock URL: \url{http://www.dagstuhl.de/dagpub/978-3-95977-121-4}, \href
  {https://doi.org/10.4230/LIPIcs.CONCUR.2019.35}
  {\path{doi:10.4230/LIPIcs.CONCUR.2019.35}}.

\bibitem{Herzig_2013}
Andreas Herzig, Emiliano Lorini, and Dirk Walther.
\newblock Reasoning about actions meets strategic logics.
\newblock In {\em Logic, Rationality, and Interaction, {LORI} 2013}, volume
  8196 of {\em LNCS}, pages 162--175. Springer, 2013.
\newblock \href {https://doi.org/10.1007/978-3-642-40948-6\_13}
  {\path{doi:10.1007/978-3-642-40948-6\_13}}.

\bibitem{Kangaroo2}
Marc-Uwe Kling.
\newblock {\em Das Känguru-Manifest}.
\newblock Ullstein, Berlin, 2011.

\bibitem{Kangaroo}
Marc-Uwe Kling.
\newblock {\em The Kangaroo Chronicles}.
\newblock Voland \& Quist, 2016.
\newblock Translated by {S}arah {C}ossaboon and {P}aul-{H}enri {C}ampbell.

\bibitem{DBLP:journals/tocl/MogaveroMPV14}
Fabio Mogavero, Aniello Murano, Giuseppe Perelli, and Moshe Vardi.
\newblock Reasoning about strategies: On the model-checking problem.
\newblock {\em {ACM} Trans.\ Comput.\ Log.}, 15(4):34:1--34:47, 2014.
\newblock \href {https://doi.org/10.1145/2631917} {\path{doi:10.1145/2631917}}.

\bibitem{DBLP:journals/corr/MogaveroMPV16}
Fabio Mogavero, Aniello Murano, Giuseppe Perelli, and Moshe Vardi.
\newblock Reasoning about strategies: on the satisfiability problem.
\newblock {\em LMCS}, 13(1), 2017.
\newblock \href {https://doi.org/10.23638/LMCS-13(1:9)2017}
  {\path{doi:10.23638/LMCS-13(1:9)2017}}.

\bibitem{MyersEA09}
Robert Myers, Dirk Pattinson, and Lutz Schr{\"{o}}der.
\newblock Coalgebraic hybrid logic.
\newblock In {\em Foundations of Software Science and Computational Structures,
  {FOSSACS} 2009}, volume 5504 of {\em LNCS}, pages 137--151. Springer, 2009.
\newblock \href {https://doi.org/10.1007/978-3-642-00596-1}
  {\path{doi:10.1007/978-3-642-00596-1}}.

\bibitem{DBLP:conf/litp/NicolaV90}
Rocco~De Nicola and Frits~W. Vaandrager.
\newblock Action versus state based logics for transition systems.
\newblock In Ir{\`{e}}ne Guessarian, editor, {\em Semantics of Systems of
  Concurrent Processes, {LITP} Spring School on Theoretical Computer Science
  1990}, volume 469 of {\em LNCS}, pages 407--419. Springer, 4 1990.
\newblock \href {https://doi.org/10.1007/3-540-53479-2_17}
  {\path{doi:10.1007/3-540-53479-2_17}}.

\bibitem{PattinsonSchroder10}
Dirk Pattinson and Lutz Schr{\"{o}}der.
\newblock Cut elimination in coalgebraic logics.
\newblock {\em Inf.\ Comput.}, 208(12):1447--1468, 2010.
\newblock \href {https://doi.org/10.1016/j.ic.2009.11.008}
  {\path{doi:10.1016/j.ic.2009.11.008}}.

\bibitem{Pauly02}
Marc Pauly.
\newblock A modal logic for coalitional power in games.
\newblock {\em J.\ Log.\ Comput.}, 12(1):149--166, 2002.
\newblock \href {https://doi.org/10.1093/logcom/12.1.149}
  {\path{doi:10.1093/logcom/12.1.149}}.

\bibitem{schewe_PhD}
Sven Schewe.
\newblock {\em Synthesis of Distributed Systems}.
\newblock PhD thesis, Universität des Saarlandes, 2008.

\bibitem{SchroderPattinson09}
Lutz Schr{\"{o}}der and Dirk Pattinson.
\newblock {PSPACE} bounds for rank-1 modal logics.
\newblock {\em {ACM} Trans.\ Comput.\ Log.}, 10(2):13:1--13:33, 2009.
\newblock \href {https://doi.org/10.1145/1462179.1462185}
  {\path{doi:10.1145/1462179.1462185}}.

\bibitem{SchroderPattinson11}
Lutz Schr{\"{o}}der and Dirk Pattinson.
\newblock Modular algorithms for heterogeneous modal logics via multi-sorted
  coalgebra.
\newblock {\em Math.\ Struct.\ Comput.\ Sci.}, 21(2):235--266, 2011.
\newblock \href {https://doi.org/10.1017/S0960129510000563}
  {\path{doi:10.1017/S0960129510000563}}.

\bibitem{SchroderEA09}
Lutz Schr{\"{o}}der, Dirk Pattinson, and Clemens Kupke.
\newblock Nominals for everyone.
\newblock In {\em International Joint Conference on Artificial Intelligence,
  {IJCAI} 2009}, pages 917--922, 7 2009.
\newblock URL: \url{http://ijcai.org/proceedings/2009}.

\bibitem{SchroderVenema18}
Lutz Schr{\"{o}}der and Yde Venema.
\newblock Completeness of flat coalgebraic fixpoint logics.
\newblock {\em {ACM} Trans.\ Comput.\ Log.}, 19(1):4:1--4:34, 2018.
\newblock \href {https://doi.org/10.1145/3157055} {\path{doi:10.1145/3157055}}.

\bibitem{van_der_Hoek_2005}
Wiebe van~der Hoek, Wojciech Jamroga, and Michael Wooldridge.
\newblock A logic for strategic reasoning.
\newblock In {\em Autonomous Agents and Multiagent Systems, {AAMAS} 2005},
  pages 157--164. {ACM}, 2005.
\newblock \href {https://doi.org/10.1145/1082473.1082497}
  {\path{doi:10.1145/1082473.1082497}}.

\bibitem{Drimmelen03}
Govert van Drimmelen.
\newblock Satisfiability in alternating-time temporal logic.
\newblock In {\em Logic in Computer Science, LICS 2003}, pages 208--217. {IEEE}
  Comp.\ Soc., 6 2003.
\newblock URL:
  \url{http://ieeexplore.ieee.org/xpl/mostRecentIssue.jsp?punumber=8592}, \href
  {https://doi.org/10.1109/LICS.2003.1210060}
  {\path{doi:10.1109/LICS.2003.1210060}}.

\bibitem{Walther_PhD}
Dirk Walther.
\newblock {\em Strategic Logics: Complexity, Completeness and Expressivity}.
\newblock PhD thesis, University of Liverpool, 2007.

\bibitem{WaltherEA07}
Dirk Walther, Wiebe van~der Hoek, and Michael Wooldridge.
\newblock Alternating-time temporal logic with explicit strategies.
\newblock In {\em Theoretical Aspects of Rationality and Knowledge, TARK 2007},
  pages 269--278. ACM Press, 2007.
\newblock \href {https://doi.org/10.1145/1324249.1324285}
  {\path{doi:10.1145/1324249.1324285}}.

\bibitem{Agotnes06}
Thomas Ågotnes.
\newblock Action and knowledge in alternating-time temporal logic.
\newblock {\em Synthese}, 149(2):375–407, 3 2006.
\newblock \href {https://doi.org/10.1007/s11229-005-3875-8}
  {\path{doi:10.1007/s11229-005-3875-8}}.

\end{thebibliography}
\clearpage
\appendix

\section{Appendix: AMCDES Model Checking Details}\label{sec:append-amcd-model}
\subsection*{Summary of Results on Coalgebraic Model Checking}
Given a functor~$F$, we assume a representation of the elements of
$FX$, for finite~$X$, as strings over some alphabet. Specifically, we
represent elements of $({(k_j)}_{j\in \Agents},f)\in \Gp X$ as
tabulations of~$f$.

Model checking results~\cite{DBLP:conf/concur/HausmannS19} for the
full coalgebraic \MC require only very simple properties of the
predicate liftings:
\begin{defn}
  The \emph{one-step satisfaction problem} is to determine, given a
  finite set $X$, $Y\subseteq X$, \(\hearts \in \Lambda\), and
  $t\in\F X$, whether $t\in\Sem{\hearts}_X(Y)$.
\end{defn}
\begin{theorem}[Model checking via one-step satisfaction
  {\cite[Theorem~11]{DBLP:conf/concur/HausmannS19}}]\label{thm:one-step-mc}
  If the one-step satisfaction problem is in $\PTIME$, then the model
  checking problem for the coalgebraic \MC over this logic is in
  \(\NP \cap \coNP\).
\end{theorem}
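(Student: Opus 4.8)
The plan is to reduce model checking for the coalgebraic \MC to deciding the winner of a parity game, and then to exploit the classical fact that parity games lie in $\NP\cap\coNP$; the hypothesis on one-step satisfaction enters precisely in order to keep the per-move computations at the modal positions polynomial. Throughout I fix a finite $\F$-coalgebra $C=(W,\gamma)$, a valuation $\sigma$, a target formula $\chi$, and a state $w_0\in W$, the goal being to decide whether $w_0\in\Sem{\chi}_C^\sigma$.

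First I would build the coalgebraic evaluation game for this data. Its positions are pairs $(w,\psi)$ where $w\in W$ and $\psi$ ranges over the (polynomially many) subformulae in the Fischer--Ladner closure of $\chi$. The Boolean and fixpoint positions are treated exactly as for the relational \MC: Eloise owns disjunctions, Abelard owns conjunctions, fixpoint binders and their bound variables are unfolded, and priorities are assigned from the alternation nesting of the fixpoints so that $\nu$-variables receive even and $\mu$-variables odd priorities, outer fixpoints dominating; the number of priorities is bounded by the alternation depth of $\chi$. The only genuinely coalgebraic ingredient sits at the modal positions $(w,\hearts\psi)$: by the semantic clause $\Sem{\hearts\psi}_C^\sigma=\gamma^{-1}[\Sem{\hearts}_W(\Sem{\psi}_C^\sigma)]$ recalled in \cref{sec:coalg-log}, the truth of $\hearts\psi$ at $w$ depends only on whether $\gamma(w)\in\Sem{\hearts}_W(Y)$ for $Y=\Sem{\psi}_C^\sigma$. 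I would therefore let the relevant player's positional strategy at such a position commit to a witnessing subset $Y\subseteq W$ with $\gamma(w)\in\Sem{\hearts}_W(Y)$, after which play passes to some $(v,\psi)$ with $v\in Y$; monotonicity of $\Sem{\hearts}_W$ makes these moves well behaved, and the dual move for the opponent is available because predicate liftings respect duals.

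Correctness of the reduction --- that Eloise wins from $(w_0,\chi)$ iff $w_0\in\Sem{\chi}_C^\sigma$ --- is then the standard argument by fixpoint approximants: one shows, by an induction aligned with the Knaster--Tarski characterisation of $\mu$ and $\nu$ from \cref{sec:coalg-log} and the usual signature/progress-measure bookkeeping for the parity acceptance condition, that membership in the extension coincides exactly with the existence of a winning strategy from the corresponding position. I expect this correctness step --- reconciling the purely local modal gadget with the global least/greatest fixpoint semantics under arbitrary alternation --- to be the main obstacle; the remaining complexity bookkeeping is routine.

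For the bounds I invoke positional determinacy of parity games. For $\NP$: to certify $w_0\in\Sem{\chi}_C^\sigma$, guess a positional winning strategy for Eloise that additionally records, at each modal position, its witnessing subset $Y\subseteq W$. This is a certificate of size polynomial in $|W|$ and $|\chi|$ (each $Y$ is a $\le|W|$-bit object, and there are polynomially many positions). Verification is in $\PTIME$: each recorded witness is validated by a single one-step satisfaction query $\gamma(w)\in\Sem{\hearts}_W(Y)$, which is in $\PTIME$ by hypothesis, and once all witnesses are fixed the residual arena is an ordinary parity game of polynomial size in which only Abelard still moves, so one checks in polynomial time that no reachable cycle has odd dominating priority. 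For $\coNP$: by determinacy, $w_0\notin\Sem{\chi}_C^\sigma$ iff Abelard wins from $(w_0,\chi)$, and one guesses and verifies a positional winning strategy for Abelard --- now recording witnessing subsets for the dual liftings --- in exactly the same polynomial fashion. Together these place model checking in $\NP\cap\coNP$.
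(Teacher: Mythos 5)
Your proposal is correct and takes essentially the same route as the paper (which defers to Hausmann and Schr\"oder): both reduce model checking to C\^irstea et al.'s evaluation game, observe that the arena is exponentially large but has only polynomially many $\mathsf{Eloise}$-nodes, and guess-and-verify a positional $\mathsf{Eloise}$ strategy whose modal choices are subsets $Y\subseteq W$, each validated by a \PTIME one-step satisfaction query, the residual one-player parity game being solvable in polynomial time. One caveat on your \coNP{} half: a positional $\mathsf{Abelard}$ strategy in the \emph{original} arena is not a polynomial-size certificate, since it must answer every one of the exponentially many subsets $Y$ that $\mathsf{Eloise}$ might play at a modal node; the clean fix is the one your parenthetical about dual liftings already gestures at, namely to apply the \NP{} argument to the negation normal form of $\neg\chi$ (using closure of $\Lambda$ under duals), so that the subset choices are again owned by $\mathsf{Eloise}$.
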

The proof of this upper bound is via parity games,
specifically by noting that C\^irstea et al.'s \emph{evaluation
  games}~\cite{CirsteaEA11b} are exponentially large but have only
polynomially many $\mathsf{Eloise}$-nodes, so that winning strategies
for $\mathsf{Eloise}$ can be guessed and verified in
(nondeterministic) polynomial time.

On the other hand, to obtain a model checking algorithm in \(\QP\)
(deterministic quasipolynomial time $2^{\mathcal{O}((\log n)^k)}$ for
some~$k$; a complexity class not currently known to be comparable
with~$\NP$) we need to show that we can design suitable one-step
satisfaction arenas for use in model checking games (we use standard
terminology for games, e.g.~\cite{lncs2500}):

\begin{defn}
  A \defemph{one-step satisfaction arena}~$\mathsf{A}$ for a set $X$,
  a modality $\hearts \in \Lambda$, and $t\in\F X$ is an acyclic arena
  for games with two players $\mathsf{Eloise}$ and $\mathsf{Abelard}$
  (recall that an arena is like a game in that it specifies nodes,
  each assigned to one of the players, and allowed moves between nodes
  but does not include a winning condition; acyclicity refers to the
  move relation), with a single initial node, with~$X$ as the set of
  terminal nodes, and with additional inner nodes. A \defemph{one-step
    game} on~$\mathsf{A}$ additionally specifies a winning condition
  in the shape of a subset~$Y$ of the terminal nodes; then,
  $\mathsf{Eloise}$ wins plays that either get stuck at an
  inner $\mathsf{Abelard}$ node without successors or terminate in a
  node in~$Y$. We say that~$\mathsf{A}$ is \emph{sound and
  complete} if for every $Y\subseteq X$, $\mathsf{Eloise}$ wins
  (the initial node of) the one-step game on~$\mathsf{A}$ with winning
  condition~$Y$ iff $t\in\Sem{\hearts}_X(Y)$.
\end{defn}
\begin{theorem}[Model checking via one-step games
  {\cite[Corollary~18]{DBLP:conf/concur/HausmannS19}}]\label{thm:one-step-arena-mc}
  If for every set~$X$, $\hearts \in \Lambda$, and~$t \in \F X$, there
  is a sound and complete one-step satisfaction arena with
  polynomially many inner nodes in the representation size
  of~$\hearts$ and~$t$, then the model checking problem for the \MC
  over this logic is in \(\QP\).
\end{theorem}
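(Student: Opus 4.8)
The plan is to adapt the parity-game based model-checking method for the coalgebraic \MC to the presence of compact one-step gadgets. Concretely, I would start from the \emph{evaluation games} of C\^irstea et al.~\cite{CirsteaEA11b}, which are parity games whose nodes are pairs $(w,\psi)$ of a state $w$ of the given coalgebra $C=(W,\gamma)$ and a subformula $\psi$ of the target formula $\phi$, and which are adequate in the sense that $\mathsf{Eloise}$ wins from $(w,\phi)$ iff $w\in\Sem{\phi}^\sigma_C$. Boolean, fixpoint, and variable nodes are handled by the usual local moves, with priorities assigned to variable nodes according to the fixpoint alternation so that the number of priorities $d$ is bounded by the alternation depth of $\phi$. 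The only place where the coalgebra structure enters is at \emph{modal nodes} $(w,\hearts\psi)$, where adequacy hinges on whether $\gamma(w)\in\Sem{\hearts}_W(\Sem{\psi}^\sigma_C)$.

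The central construction is to replace each monolithic modal move by a spliced-in copy of the sound and complete one-step satisfaction arena $\mathsf{A}$ for $X=W$, the modality $\hearts$, and $t=\gamma(w)$, whose existence is the hypothesis of the theorem. I would make the initial node of $\mathsf{A}$ the unique successor of $(w,\hearts\psi)$, and redirect each terminal node $w'\in W$ of $\mathsf{A}$ to the evaluation node $(w',\psi)$, so that the abstract winning condition $Y$ of the one-step game is realised dynamically as ``the continuation $(w',\psi)$ is won by $\mathsf{Eloise}$''. All inner nodes of the spliced arenas, and all non-variable nodes, receive a parity-neutral priority (say the minimal one); since each arena is finite and acyclic, any infinite play of the combined game $G$ must traverse variable nodes infinitely often, so the parity condition is governed entirely by the fixpoint priorities, exactly as in the original evaluation game.

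Correctness amounts to showing that $G$ and the evaluation game of~\cite{CirsteaEA11b} have the same winner from each $(w,\phi)$. I would prove this by a strategy-transfer argument across the gadget boundary: at a modal node, winning the spliced arena under the continuation-induced winning condition $Y=\{w'\mid (w',\psi)\text{ won by }\mathsf{Eloise}\}$ coincides, by soundness and completeness of $\mathsf{A}$, with the membership $\gamma(w)\in\Sem{\hearts}_W(Y)$, which is precisely the move condition resolved monolithically in the original game. Composing the local $\mathsf{Eloise}$ strategies provided by completeness with a global strategy on the outer evaluation nodes yields a winning strategy in $G$ whenever $w\in\Sem{\phi}^\sigma_C$; the converse uses soundness together with the adequacy of the C\^irstea et al.\ game. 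The parity-neutral priorities guarantee that this transfer respects the parity condition.

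Finally, for the complexity bound I would observe that $G$ has $|W|\cdot|\mathrm{sub}(\phi)|$ outer nodes, and that each modal node contributes a spliced arena with only polynomially many inner nodes in the representation size of $\hearts$ and $t=\gamma(w)$---both part of the model-checking input---so that $|G|$ is polynomial in the input. Since the number of priorities $d$ is bounded by the alternation depth of $\phi$ (hence by the input size), solving $G$ with a quasipolynomial parity-game algorithm runs in time $|G|^{O(\log d)}$, which is quasipolynomial in the input; this places model checking in~$\QP$. The main obstacle I anticipate is the correctness argument at the gadget interface: the one-step winning condition $Y$ is not fixed in advance but is induced by the global fixpoint semantics, so the strategy-transfer composition must be shown to interact correctly with the global parity condition, which is exactly where soundness, completeness, and the parity-neutral priority assignment must be combined carefully.
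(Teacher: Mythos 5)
Your proposal is correct and follows essentially the same route as the paper, which proves this result (citing Hausmann and Schr\"oder) precisely by splicing the sound and complete one-step satisfaction arenas into the modal nodes of the evaluation game to obtain a polynomial-size model checking parity game, then invoking quasipolynomial parity game solving. Your elaboration of the gadget-interface correctness argument and the parity-neutral priority assignment matches the intended construction, so there is nothing to amend.
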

The model checking procedure underlying this theorem is to
construct a polynomial-size model checking parity game using one-step
games as building blocks; by well-known recent advances in parity game
solving~\cite{CaludeEA17}, these games can be solved in quasipolynomial
time.

\subsection*{Proof of \cref{thm:os-mc}}
\begin{proof}
  The one-step satisfaction problem for the \ac{AMCDES}
  is to check whether $((k_j),f,\interpret)\in\Sem{\clbox{C,O}}_X(Y)$ can be decided in \(\PT\) for given
  $C,O$, $Y\subseteq X$, and a one-step game with explicit strategies
  $((k_j),f,\interpret)\in\GES X$. This can be done by iterating over joint
  moves of~$C$ in an outer loop and over joint moves of~$\overline C$
  in an inner loop. Since~$f$ needs to tabulate the outcomes of all
  joint moves of~$\Agents$, both loops have at most linearly many (in
  the size of~$f$) iterations per invocation, making for a quadratic
  overall number of iterations of the inner loop, and hence polynomial
  run time.

  \begin{algorithm}
    \caption{One-step Satisfaction Algorithm}
    \label{alg:osa}
    \DontPrintSemicolon
    \For{\(m_C \leftarrow \moves{_{C}}\)}{
    \(x \defeq \top\)\;
    \For{\(o \leftarrow O, m_{\bar{C}} \leftarrow \moves{_{\Agents\setminus C \setminus \Ag(O)}}\)}{
    \lIf{\(f(m_c, m_{\bar{C}}, \interpret[o]) \notin Y\)}{\(x \defeq \bot\)}
    }
    \lIf{\(x\)}{return \(\top\)}
    }
    return \(\bot\)
  \end{algorithm}
  By \cref{thm:one-step-mc}, we thus obtain the
  \(\NP \cap \coNP\) bound for the fixed case. The $\NP$ bound for the
  open case follows by guessing history-free strategies.

  For the \QP bound, we use \cref{thm:one-step-arena-mc} and adapt the
  one-step satisfaction arenas for the \ac{AMC} \cite[Example
  15.5]{DBLP:conf/concur/HausmannS19} to obtain small one-step
  satisfaction arenas for the \ac{AMCDES}:

  The one-step satisfaction arena \(A_{\clbox{C,O},w} = (V_{\clbox{C,O}, w}, E_{\clbox{C,O}, w})\) for $X$, $\clbox{C,O}$, and a one-step game $((k_j),f,\interpret)\in\GES X$ for
disjoint \(C,D \subseteq \Agents\), \(O \subseteq \prod_{a \in D}M_a\)
is constructed as follows. The node set~$V_{\clbox{C,O}, w}$ consists
of an initial node \((\clbox{C,O}, w)\) belonging to \(\mathsf{Eloise}\), and
additionally a set of inner nodes
\(I_{\clbox{C,O},w} \defeq \moves{_C}\) belonging to \(\mathsf{Abelard}\) i.e.\ one node for each joint move of \(C\). The set
$E_{\clbox{C,O},w}(x)$ of moves available at a node~$x$ is
\begin{align*}
  E_{\clbox{C,O},w}(x) = \begin{cases}
    I_{\clbox{C,O},w} \text{ if } x = (\clbox{C,O},w)\\
    \{f(x, m_{\bar{C}},o) \mid m_{\bar{C}} \in \moves{_{\,\other{C \cup D}}}, o \in \interpret[O]\}
  \end{cases}
\end{align*}
It is easy to see that the size of the arena is thus
  linear in the tabulation size of~$f$. % has
% \(\prod_{a \in C}k_a = \mathcal{O}(\text{size}(w))\) many inner nodes,
% noting that the representation of~$w$ needs to include a table of the
% values of~$f$.
The soundness and completeness of the resulting one-step satisfaction
game stems from the fact that the moves of \(\mathsf{Eloise}\) and \(\mathsf{Abelard}\)
essentially construct the witnessing moves from the original game. %  via
% a similar argumentation as in Hausmann and Schröder~\cite[Example
% 15.5]{DBLP:conf/concur/HausmannS19}
 % has an
 %  initial node belonging to $\mathsf{Eloise}$, and for each joint move
 %  $m_C$ of~$C$ an inner $\mathsf{Abelard}$-node~$m_C$, to which
 %  $\mathsf{Eloise}$ can move from the initial node. From~$m_C$,
 %  $\mathsf{Abelard}$ can move to terminal nodes $f(m)\in X$ for any
 %  joint move~$m$ of~$\Agents$ such that $m_C\sqsubseteq m$ and
 %  $\interpret(n)\sqsubseteq m$ for some $n\in O$.
  % The size of the arena is thus
  % linear in the tabulation size of~$f$.
\end{proof}

\section{Appendix: Omitted Proofs and Further Details}

\subsection*{Proof of~\cref{thm:res-completeness}}\label{sec:proof-thm:r-comp}
\subparagraph*{Soundness}
%\begin{proof}
  It suffices to show that the rule \((SR)\) is sound.  Let
  \(\Gamma,A_1(\bar{t_1}),\dots,A_n(\bar{t_n})\) and \(B(\bar{u}),\Delta\) be two clauses such
  that $\bar t_1,\dots,\bar t_n$, and $\bar u$ are unifiable, and let
  \(\sigma = mgu(\bar{t_1},\dots,\bar{t_n}, \bar{u})\). Let
  \(\omod = ((S_j)_{j \in N}, f, W, \Sem{-})\) be an outcome model
  satisfying both \(\Gamma,A_1(\bar{t_1}),\dots,A_n(\bar{t_n})\) and
  \(B(\bar{u}),\Delta\). Let~$\eta$ be a valuation such
  that~\(\omod,\eta\not\models\Gamma\sigma,\Delta\sigma\); we have to
  show \(\omod, \eta \models ((\bigcup_{i = 1}^nA_i) \cap B)(\bar{u}\sigma)\).  By the
  evident substitution lemma, \(\omod,\eta_\sigma\not\models\Gamma,\Delta\)
  where \(\eta_\sigma(x) = \Sem{\sigma(x)}\eta\) for all \(x\); hence
  necessarily \(\omod, \eta_\sigma \models A_1(\bar{t_1}), \dots, A_n(\bar{t_n})\), and
  \(\omod, \eta_\sigma \models B(\bar{u})\).  Again by the
  substitution lemma, \(\omod, \eta \models A_1(\bar{t_1}\sigma), \dots, A_n(\bar{t_n}\sigma)\), and
  \(\omod, \eta \models B(\bar{u}\sigma)\). Since
  \(\bar{t_1}\sigma = \dots = \bar{t_n}\sigma = \bar{u}\sigma\), our goal
  \(\omod, \eta \models ((\bigcup_{i = 1}^nA_i) \cap B)(\bar{t}\sigma)\) follows by the
  semantics of literals.
%\end{proof}
\subparagraph*{Completeness}
The completeness proof for the propositional variant
proceeds via \emph{maximally consistent clause sets}, defined in the
expected way. By Zorn's lemma, % \lsnote{Give a
% more constructive argument later}
we have

\begin{lemma}[Lindenbaum lemma for set-valued propositional resolution]\label{thm:res-lind}
  Every consistent clause set in set-valued propositional logic is
  contained in a maximally consistent set.
\end{lemma}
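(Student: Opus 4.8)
The plan is to apply Zorn's lemma to the partially ordered set of consistent clause sets containing the given one, ordered by inclusion. Concretely, fix a consistent clause set $\phi$ (in set-valued propositional logic) and let $P$ be the collection of all consistent clause sets $\psi$ with $\phi\subseteq\psi$, ordered by set inclusion. Since $\phi$ is consistent by hypothesis, we have $\phi\in P$, so $P$ is nonempty.

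The main step is to show that every chain in $P$ has an upper bound in $P$. Given a chain $\mathcal{C}\subseteq P$, I would take its union $\psi^\ast=\bigcup\mathcal{C}$, which evidently contains $\phi$ and is an upper bound of $\mathcal{C}$ with respect to inclusion; it remains only to check that $\psi^\ast$ is consistent. The crucial observation is that set-valued propositional resolution is \emph{finitary}: any derivation of a blatantly inconsistent clause is a finite object, and hence uses only finitely many clauses of $\psi^\ast$. Each of these finitely many clauses lies in some member of the chain, and since $\mathcal{C}$ is totally ordered by inclusion, all of them already lie in a single member $\psi\in\mathcal{C}$. The same derivation would then witness inconsistency of $\psi$, contradicting $\psi\in P$. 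Hence no blatantly inconsistent clause is derivable from $\psi^\ast$, so $\psi^\ast$ is consistent and lies in $P$.

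With the chain condition established, Zorn's lemma yields a maximal element $\psi_{\max}$ of $P$, which is a consistent clause set containing $\phi$. Finally I would verify that $\psi_{\max}$ is maximally consistent in the intended (global) sense, not merely maximal within $P$: if some consistent clause set $\psi'$ properly contained $\psi_{\max}$, then $\phi\subseteq\psi_{\max}\subsetneq\psi'$ would place $\psi'$ in $P$, contradicting maximality of $\psi_{\max}$ within $P$. Thus $\psi_{\max}$ is a maximally consistent clause set containing $\phi$, as claimed.

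I expect the only nontrivial point to be the closure-under-unions step for chains, and that reduces entirely to the finitariness of resolution derivations; the remainder is routine bookkeeping with Zorn's lemma. There should be no real obstacle beyond stating the finitary-derivation argument carefully.
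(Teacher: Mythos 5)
Your proof is correct and follows exactly the route the paper intends: the paper's proof consists solely of the phrase ``By Zorn's lemma,'' and your proposal fills in precisely the standard details, with the key point (consistency of chain unions via finitariness of resolution derivations) handled correctly. Nothing to add.
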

Moreover, we have the following set of Hintikka properties:
\begin{lemma}[Hintikka lemma for set-valued propositional resolution]
  Let $\phi$ be a maximally consistent clause set in set-valued
  propositional logic. Then
  \begin{enumerate}
  \item\label{item:comma} A clause $\Gamma,\Delta$ is in~$\phi$ iff
    $\Gamma\in\phi$ or $\Delta\in\phi$.
  \item\label{item:union} A clause $\Gamma,(A\cup B)(y)$ is in $\phi$
    iff one of $\Gamma,A(y)$ and $\Gamma,B(y)$ is in~$\phi$.
  \item\label{item:top} For every $y\in Y$, $W(y)\in\phi$.
  \end{enumerate}
\end{lemma}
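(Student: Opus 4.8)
The plan is to derive all three properties from the defining features of a maximally consistent clause set $\phi$---that $\phi$ is consistent (no blatantly inconsistent clause is derivable under set-valued propositional resolution) and that adding any clause not in $\phi$ destroys consistency---together with three auxiliary observations that I would establish first. \emph{(i) Deductive closure}: every clause derivable from $\phi$ already lies in $\phi$. This is immediate from maximality, since prepending a derivation of such a clause shows that adding it leaves the set of derivable clauses, hence consistency, unchanged. \emph{(ii) A subsumption lemma}: if a clause $C$ \emph{subsumes} a clause $D$---meaning every literal $A(z)$ of $C$ has a literal $A'(z)$ of $D$ at the same index with $A\subseteq A'$---and a blatantly inconsistent clause is derivable from $\phi\cup\{D\}$, then one is derivable from $\phi\cup\{C\}$. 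This follows by induction on the refutation, replacing $D$ by $C$ and checking that resolution preserves subsumption (possibly omitting steps that resolve on literals absent from the subsuming clause; here one uses that normalized clauses mention each index at most once). A blatantly inconsistent clause, all of whose literals are $\emptyset(z)$, stays blatantly inconsistent when its literals shrink to subsets. \emph{(iii) A contamination lemma}: if a blatantly inconsistent clause is derivable from $\phi\cup\{C_0\}$, and $C$ is obtained from $C_0$ by enlarging, at each index $z$, its literal by a set $\subseteq\Delta_z$, then from $\phi\cup\{C\}$ one can derive a clause whose literal at each index $z$ is $\subseteq\Delta_z$. The proof mirrors the refutation with $C$ in place of $C_0$, tracking the invariant that each resolvent's literal at $z$ stays within $E^z\cup\Delta_z$, where $E^z$ is the corresponding literal in the original refutation; the identity $(X\cup\Delta_z)\cap(Y\cup\Delta_z)=(X\cap Y)\cup\Delta_z$ keeps the contamination bounded by $\Delta_z$ across steps.

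With these in hand the easy directions are quick. For \cref{item:top}, since resolving $\{W(y)\}$ against any clause $B(y),\Delta$ merely reproduces $B(y),\Delta$ (as $W\cap B=B$), every use of $\{W(y)\}$ in a refutation is redundant; hence $\phi\cup\{W(y)\}$ is consistent whenever $\phi$ is, and maximality forces $W(y)\in\phi$. For the ``if'' directions of \cref{item:comma,item:union}, note that $\Gamma$ subsumes $\Gamma,\Delta$, and that $\Gamma,A(y)$ subsumes $\Gamma,(A\cup B)(y)$ since $A\subseteq A\cup B$; by the subsumption lemma adding the weaker clause to $\phi$ preserves consistency, so maximality again yields membership.

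The heart of the proof, and the step I expect to be the main obstacle, is the ``only if'' (splitting) directions of \cref{item:comma,item:union}, which I would argue by contradiction. For \cref{item:comma}, assume $\Gamma,\Delta\in\phi$ but $\Gamma\notin\phi$ and $\Delta\notin\phi$; maximality yields refutations of $\phi\cup\{\Gamma\}$ and of $\phi\cup\{\Delta\}$. Applying the contamination lemma to the first, with $C_0=\Gamma$, $C=\Gamma,\Delta\in\phi$, and $\Delta_z$ the literals of $\Delta$, produces a clause $C^*$ derivable from $\phi$---hence in $\phi$ by deductive closure---each of whose literals is $\subseteq$ the corresponding literal of $\Delta$; that is, $C^*$ subsumes $\Delta$. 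Feeding $C^*$ and the second refutation into the subsumption lemma yields a refutation of $\phi\cup\{C^*\}=\phi$, contradicting consistency. The argument for \cref{item:union} has the same structure with the single enlargement of $A(y)$ to $(A\cup B)(y)$: the contamination lemma, applied to a refutation of $\phi\cup\{\Gamma,A(y)\}$, delivers a clause $\{S(y)\}\in\phi$ with $S\subseteq B$, which subsumes $\Gamma,B(y)$; combining it with a refutation of $\phi\cup\{\Gamma,B(y)\}$ via the subsumption lemma again contradicts consistency.

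The delicate point throughout, and the reason the contamination lemma is the crux, is the bookkeeping of how enlarged literals propagate through a refutation: literals at shared indices may merge under normalization and be resolved further, so one must verify that the extra ``$\Delta$-mass'' never escapes the indices of $\Delta$ and never grows beyond $\Delta_z$. Finiteness of $W$ plays no role in these manipulations; it enters only in the separately stated completeness argument, through the model built from a maximally consistent set.
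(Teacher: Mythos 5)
Your overall strategy coincides with the paper's: the ``if'' directions come from the fact that removing or shrinking literals preserves refutability (your subsumption lemma; the paper's literal-removal argument for item 1 and its ``intersecting with $A$'' argument for item 2), and the ``only if'' directions come from tracing the enlarged clause through one refutation and chaining with a second (your contamination lemma plus deductive closure; the paper adds $\Delta$, resp.\ takes unions with $B$, along the derivation and then splices the two refutations). Your explicit invariant $(X\cup\Delta_z)\cap(Y\cup\Delta_z)=(X\cap Y)\cup\Delta_z$ and your attention to literals merging under normalization make explicit bookkeeping that the paper leaves implicit, and your argument for item 3 is an actual proof where the paper just says ``Clear''. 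So this is essentially the paper's proof, reorganized into reusable lemmas.

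There is, however, one concrete mismatch at the crux. In this calculus the pivot literal survives resolution as $(A\cap B)(y)$, so indices never disappear; consequently the clause $C^*$ produced by your contamination lemma contains a literal $\emptyset(z)$ for \emph{every} index $z$ occurring anywhere in the refutation, including indices contributed by clauses of $\phi$ that lie outside $\Delta$ (and in item 2 you do not get a singleton $\{S(y)\}$ but a clause $\Gamma'',S(y)$ with $\Gamma''$ consisting of $\emptyset$-literals). Under your stated definition of subsumption --- every literal of $C$ must have a counterpart at the same index in $D$ --- $C^*$ therefore does not subsume $\Delta$ (resp.\ $\Gamma,B(y)$), and the subsumption lemma cannot be invoked as written, precisely at the step where the two refutations are combined. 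The repair is easy and is exactly what the paper builds into its own argument (``replacing literals $C(z)$ with $\emptyset(z)$ and adding new literals of the form $\emptyset(z)$''): weaken your subsumption relation so that literals $\emptyset(z)$ of the subsuming clause need no counterpart in the subsumed one; such literals persist as $\emptyset(z)$ through resolution and leave blatant inconsistency intact, so the induction in your subsumption lemma goes through unchanged. With that one adjustment your proof is correct.
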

\begin{proof}
  \emph{\ref{item:comma}, `if':} Assume w.l.o.g.\ that
  $\Gamma\in\phi$. By maximality, it suffices to show that
  $\phi\cup\{\Gamma,\Delta\}$ remains consistent. So assume that a
  blatantly inconsistent clause can be derived from
  $\phi\cup\{\Gamma,\Delta\}$. Then by removing literals from the
  clauses in this derivation, we obtain a derivation of a blatantly
  inconsistent clause from $\phi\cup\{\Gamma\}$, contradiction.

  \emph{\ref{item:comma}, `only if':} By maximality, it suffices to
  show that one of $\phi\cup\{\Gamma\}$ and $\phi\cup\{\Delta\}$ is
  consistent. Assume the contrary. Then one can derive a blatantly
  inconsistent clause $\Gamma'$ from
  $\phi\cup\{\Gamma\}$. Adding~$\Delta$ to all clauses in the
  derivation (that is, to the original~$\Gamma$ and then to all
  clauses newly produced by the resolution rule), we obtain a
  derivation of $\Gamma',\Delta$ from
  $\phi\cup\{\Gamma,\Delta\}$. Similarly, we have a derivation of a
  blatantly inconsistent clause~$\Delta'$ from $\phi\cup\{\Delta\}$,
  from which we obtain a derivation of $\Gamma',\Delta'$ from
  $\phi\cup\{\Gamma',\Delta\}$. Chaining the two derivations, we
  obtain a derivation of the blatantly inconsistent clause
  $\Gamma',\Delta'$ from $\phi\cup\{\Gamma,\Delta\}$, contradiction.

  \emph{\ref{item:union}, `if':} Assume w.l.o.g.\ that $\Gamma,A(y)$
  is in~$\phi$. By maximality, it suffices to show that
  $\phi \cup \{\Gamma,(A\cup B)(y)\}$ is consistent. Assume the contrary, i.e.\ we
  can derive a blatantly inconsistent clause from
  $\Gamma,(A\cup B)(y)$.  Tracing $(A\cup B)(y)$ through the
  derivation in the obvious sense (with $A\cup B$ possibly transformed
  into strictly smaller subsets by the resolution rule) and
  intersecting with $A$ at each occurrence, we obtain a derivation of
  a blatantly inconsistent clause from $\phi\cup\{\Gamma,A(y)\}=\phi$,
  contradiction.

  \emph{\ref{item:union}, `only if':} By contraposition, again using
  maximality: assume that both $\phi\cup\{\Gamma,A(y)\}$ and
  $\phi\cup\{\Gamma,B(y)\}$ are inconsistent; we have to show that
  $\phi\cup\{\Gamma,(A\cup B)(y)\}$ is inconsistent. By assumption, we can
  derive from $\phi\cup\{\Gamma,A(y)\}$ a blatantly inconsistent
  clause, necessarily of the form $\Gamma',\emptyset(y)$ (since no
  $y\in Y$ can be made to disappear by the resolution rule). Tracing
  $A(y)$ through the derivation and taking unions with $B$ at each
  occurrence, we obtain a derivation of $\Gamma',B(y)$ from
  $\phi\cup\{\Gamma,(A\cup B)(y)\}$. Similarly, we can derive a
  blatantly inconsistent clause from
  $\phi\cup\{\Gamma,B(y)\}$. Replacing literals $C(z)$ with
  $\emptyset(z)$ and adding new literals of the form $\emptyset(z)$,
  we obtain a derivation of a blatantly inconsistent clause $\Theta$
  from $\phi\cup\{\Gamma',B(y)\}$. Chaining derivations, we obtain a
  derivation of~$\Theta$ from $\phi\cup\{(A\cup B)(y)\}$, showing the
  required inconsistency.

\emph{\ref{item:top}:} Clear.
\end{proof}
Now fix a maximally consistent clause set~$\phi$, and assume
that~$W$ is finite; we construct a model, \ie a function
$f_\phi:Y\to W$, from~$\phi$ as follows.  For $y\in Y$, we have
$W(y)\in\phi$ by the Hintikka lemma, and then, again by the Hintikka
lemma and by finiteness of~$W$, $\{w_y\}(y)\in\phi$ for
some~$w_y\in W$, which by consistency of~$\phi$ is moreover unique; we
put $f_\phi(y)=w_y$.
\begin{lemma}[Truth lemma for set-valued propositional resolution]\label{lem:res-truth}
  Given a maximally consistent clause set~$\phi$ in set-valued
  propositional logic over a finite set~$W$, the function~$f_\phi$
  constructed above satisfies~$\phi$.
\end{lemma}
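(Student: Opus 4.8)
The plan is to verify the defining condition of satisfaction clause by clause: I must show that for every clause $\Gamma\in\phi$ there is a literal $A(y)\in\Gamma$ with $f_\phi(y)=w_y\in A$. Recall that $w_y$ is characterized by $\{w_y\}(y)\in\phi$, and that this singleton literal exists and is unique by the Hintikka lemma (item~\ref{item:top} together with repeated use of item~\ref{item:union}), finiteness of~$W$, and consistency of~$\phi$.

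First I would reduce to singleton clauses. Since $\phi$ is consistent it cannot contain the empty clause (which is vacuously blatantly inconsistent, all of its zero literals being of the form $\emptyset(\bar t)$), so every $\Gamma\in\phi$ is nonempty. Applying the `comma' property (item~\ref{item:comma}) repeatedly to split $\Gamma$ into its individual literals, I obtain that some singleton clause $\{A(y)\}$ with $A(y)\in\Gamma$ already lies in~$\phi$.

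The heart of the argument is then a single resolution step. Both $\{A(y)\}$ and $\{w_y\}(y)$ are in~$\phi$; if we had $w_y\notin A$, then $\{w_y\}\cap A=\emptyset$, and one application of the (propositional) resolution rule would derive the literal $\emptyset(y)$, a blatantly inconsistent clause, contradicting consistency of~$\phi$. Hence $w_y\in A$, i.e.\ $f_\phi(y)\in A$, so $f_\phi$ satisfies the literal $A(y)\in\Gamma$ and therefore the clause~$\Gamma$. As $\Gamma$ was arbitrary, $f_\phi$ satisfies~$\phi$.

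I expect no serious obstacle here, as the content has already been pushed into the Hintikka lemma and into the existence and uniqueness of~$w_y$. The only points needing a line of care are the appeal to consistency in the precise form used (that a derivation of $\emptyset(y)$ witnesses inconsistency) and the observation that the comma property can be iterated down to singletons precisely because $\phi$ contains no empty clause. An equivalent, slightly more pedestrian route avoids the comma reduction altogether: assuming for contradiction that $w_y\notin A$ for every literal $A(y)\in\Gamma$ (with indices distinct by normalization), one resolves $\Gamma$ successively against each witness $\{w_y\}(y)$, each step turning the literal $A(y)$ into $(A\cap\{w_y\})(y)=\emptyset(y)$, and ends with a clause all of whose literals have the form $\emptyset(\cdot)$, again contradicting consistency.
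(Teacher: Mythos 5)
Your proof is correct, but it diverges from the paper's at the decisive step. The paper proves the truth lemma by induction on a clause measure (the sum of the cardinalities of the sets occurring in the clause), splitting multi-literal clauses with the comma property (item~\ref{item:comma}) and descending through a literal $A(y)$ with $|A|\ge 2$ via the union property (item~\ref{item:union}) until the base case $\{w\}(y)$, which holds by construction of~$f_\phi$; in particular, the paper's truth lemma never performs a resolution step itself -- all consistency reasoning is quarantined inside the Hintikka lemma. You share the comma-splitting phase (correctly noting that the empty clause is excluded, exactly as in the paper's parenthetical remark), but you then replace the whole union-property descent by a single resolution step: resolving the singleton $\{A(y)\}\in\phi$ against the witness $\{w_y\}(y)\in\phi$ yields $(A\cap\{w_y\})(y)$, which would be the blatantly inconsistent $\emptyset(y)$ unless $w_y\in A$. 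This is sound and buys both brevity and a sharper accounting of hypotheses: as your second variant makes explicit, the truth lemma proper needs only consistency of~$\phi$ together with the witness literals $\{w_y\}(y)\in\phi$, so maximality, finiteness of~$W$, and item~\ref{item:union} enter only through the construction of~$f_\phi$ (and, in that variant, normalization guarantees the indices~$y$ in a clause are distinct, so the successive resolutions against the witnesses are legitimate and terminate in a clause all of whose literals are of the form $\emptyset(\cdot)$). What the paper's route buys in exchange is modularity: it is a standard Hintikka-style truth lemma, a purely structural induction reusable whenever the Hintikka properties hold, whereas your argument re-opens the resolution machinery inside the truth lemma. Both arguments are complete as given.
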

\begin{proof}
  Induction over the size of clauses~$\Gamma$, measured as the sum of
  the cardinalities of the subsets of~$W$ occurring in~$\Gamma$. The
  inductive step makes a case distinction over whether there is more
  than one or exactly one literal in~$\Gamma$ (the case of zero
  literals does not occur, as a clause without literals is blatantly
  inconsistent), and then proceeds according to the relevant clause of
  the Hintikka lemma. We are left with the induction base,
  where~$\Gamma$ has the form $\{w\}(y)$; in this case, the claim
  holds by construction of~$f_\phi$.
\end{proof}
In combination with \cref{thm:res-lind}, this proves
completeness of the propositional variant. Completeness for the
first-order variant is then shown via a form of Herbrand theory. We
build a \emph{Herbrand universe} where the moves of each agent~$i$ are
ground terms of sort~$i$. We denote these sets of moves by~$S_i$. A
\emph{ground substitution} replaces variables by ground terms,
respecting sorts. \emph{Ground instances} of literals $A(\bar t)$,
clauses, and clause sets are obtained by applying a ground
substitution.

Now let $\phi$ be a clause set in set-valued first-order logic that is
closed under set-valued first-order resolution and not blatantly
inconsistent; it suffices to show that such~$\phi$ are satisfiable. By
admissibility of $(\mathit{lSR})$ (\cref{lem:lsr-admissible}), we can
assume that~$\phi$ is even closed under~$(\mathit{lSR})$, and hence
closed under set-valued first-order resolution with two-sided
factoring, i.e.\ under the rule $(\mathit{SR}^+)$, as this rule is
derivable from~$(\mathit{lSR})$ (\cref{lem:two-sided-derivable}). We
denote by $I(\phi)$ the set of ground instances of clauses
in~$\phi$. In the same way as admissibility of $(\mathit{lSR})$ under
$(\mathit{SR})$, one shows easily that a liberalized variant
$(\mathit{lSR}^+)$ of $(\mathit{SR}^+)$ where the substitution applied
is only required to be a unifier (rather than an mgu) is admissible
under $(\mathit{SR}^+)$, so we can assume that~$\phi$ is even closed
under $(\mathit{lSR}^+)$. To show that~$\phi$ is satisfiable over the
Herbrand universe, it suffices to establish that $I(\phi)$ is
satisfiable. Clearly, $I(\phi)$ is not blatantly inconsistent. We show
that it is moreover closed under set-valued propositional resolution
(implying that $I(\phi)$ is satisfiable, and hence that~$\phi$ is
satisfiable). A pair of resolvable clauses in~$I(\phi)$ has the form
$\Gamma\theta,(\bigcup_{i = 1}^nA_i)(\bar t_1\theta)$ and
$(\bigcup_{i=1}^mB_i)(\bar u_1\theta),\Delta\theta$ where
$\Gamma,A_1(\bar t_1),\dots,A_n(\bar t_n)$ and
$B_1(\bar u_1),\dots,B_m(\bar u_m),\Delta$ are in~$\phi$, w.l.o.g.\
with disjoint sets of variables, and $\theta$ is a ground substitution
such that
$\bar t_1\theta = \dots = \bar t_n\theta = \bar u_1\theta=\dots=\bar
u_m\theta$. In particular,~$\theta$ is a unifier of
$\bar t_1, \dots, \bar t_n,\bar u_1,\dots,\bar u_m$. % , and thus have a most
% general unifier~$\sigma$; by definition of the latter, there
% exists~$\theta'$ such that $\theta=\sigma\theta'$. Since~$\phi$ is
% closed under resolution, 
It follows that the resolvent
$\Gamma\theta,((\bigcup_{i = 1}^nA_i)\cap (\bigcup_{i=1}^m B)(\bar
u\theta),\Delta\theta$ of $\Gamma,A_1(\bar t_1),\dots,A_n(\bar t_n)$
and $B_1(\bar u_1),\dots,B_m(\bar u_m),\Delta$ under
$(\mathit{lSR}^+)$ is in~$\phi$, and hence (since~$\theta$ is already
ground) % . Applying the ground substitution~$\theta'$ to this clause, we
% obtain that the propositional resolvent
% $\Gamma\theta,(\bigcup_{i = 1}^nA_i)\cap B)(\bar
% u\theta),\Delta\theta$ is
in~$I(\phi)$; but this clause is the propositional resolvent of the
given clauses $\Gamma\theta,(\bigcup_{i = 1}^nA_i)(\bar t_1\theta)$
and $(\bigcup_{i=1}^mB_i)(\bar u_1\theta),\Delta\theta$, so we are
done. \qed

\subsection*{Remarks on One-step Tableau Completeness for the \ac{AMC} (\cref{thm:cl-complete})}\label{sec:remarks-one-step}
  In the proof of \cref{thm:cl-complete}, one could equally well have used previous
  one-step model constructions implicit
  in~van Drimmelen, Goranko, and Schewe~\cite{Drimmelen03,GorankoDrimmelen06,schewe_PhD}; we provide our
  construction for illustration, in preparation for the treatment of
  disjunctive explicit strategies, to which, as far as we can see, the
  previous constructions do not adapt (they do extend to explicit
  strategies without strategy disjunction). We note that the model
  construction becomes much simpler if one excludes the grand
  coalition (as, effectively, in \ac{ATLES}): In the rule $(C)$, the
  literals $\cldiamond{\Agents}c_j$ disappear; in the proof of
  one-step tableau completeness of the arising rule, one can just use
  a single move $\bot$ as witness for all $\cldiamond{C_j}c_j$
  in~$\Xi$ (in the notation of the original proof of \cref{thm:cl-complete}), using
  non-determinism to ensure satisfaction of the
  $\cldiamond{C_j}c_j$. In detail, this is seen as follows.

  As indicated above, in the absence of grand coalition modalities,
  rule $(C)$ specializes to
  \begin{equation*}
     (C^-)\;\frac{\clbox{D_1}a_1,\dots,\clbox{D_\alpha}a_\alpha, \cldiamond{E}b}{a_1,\dots,a_\alpha, b}
  \end{equation*}
  with the same side conditions as~$(C)$. The shorter proof of
  one-step tableau completeness then runs as follows.  Let $\tau$ be a
  $\Pow W$-valuation, and let
  $\Xi=\{\clbox{D_1}a_1,\dots,\clbox{D_\alpha}a_\alpha,
  \cldiamond{C_1}c_1,\dots,\cldiamond{C_\beta}c_\beta\}$ (where
  $D_j\neq \Agents$, $C_j\neq \Agents$ for all~$j$) be such that every
  rule match of $(C^-)$ to $\Xi$ has non-empty conclusion
  under~$\tau$. We have to construct an element of
  $\Sem{\Xi}\tau$. Give every agent moves~$e_j$ for $j=1,\dots,n$
  intended as witnesses for $\clbox{D_j}a_j$, and a single refusal
  move~$\bot$; write (slightly abusively) $e_{D_j}$ for the joint
  move of~$D_j$ that is~$e_j$ in all components. Define a
  non-deterministic outcome function~$f$ by
  \(
    f(m_\Agents)=\textstyle\bigcap_{e_{D_j}\sqsubseteq m_\Agents}\tau(a_j),
  \)
  noting that this set is non-empty thanks to rule~$(CD)$ since for
  $j\neq k$, having both $e_{D_j}\sqsubseteq m_\Agents$ and
  $e_{D_k}\sqsubseteq m_\Agents$ implies $D_j\cap
  D_k=\emptyset$. Then~$f$ clearly satisfies $\clbox{D_j}a_j$
  under~$\tau$. To see that~$f$ also satisfies $\cldiamond{C_j}c_j$,
  let $m_{C_j}$ be a joint move of~$C_j$. Let $m_\Agents$ be the
  joint move of~$\Agents$ extending $m_{C_j}$ by letting all other
  agents pick~$\bot$. We have to show that
  $f(m_\Agents)\cap\tau(c_j)\neq\emptyset$. But this is immediate by
  rule~$(C^-)$, since $e_{D_k}\sqsubseteq m_\Agents$ implies
  $D_k\subseteq C_j$.

  We note further that excluding grand coalition modalities is
  equivalent to making the outcome function non-deterministic: It is
  clear that excluding grand coalition modalities is equivalent to
  always taking the set of agents to consist of the
  agents~$\Agents_\phi$ mentioned in the target formula~$\phi$ and one
  extra agent~$*$ (convert models with larger set~$C$ of additional
  agents into one with only~$*$ by taking the previous joint moves
  of~$C$ to be the moves of~$*$). Then, note that~$\phi$ is
  satisfiable in a \ac{CGS} with set~$\Agents_\phi\cup\{*\}$ of agents
  iff~$\phi$ is satisfiable in a \emph{non-deterministic \ac{CGS}}
  with set~$\Agents=\Agents_\phi$ of agents, where a non-deterministic
  \ac{CGS} is defined like a \ac{CGS} except that the outcome function
  $f_q$ at a state~$q$ returns a non-empty set of possible post-states
  rather than just a single post-state. Over such a non-deterministic
  \ac{CGS}, a formula $\clbox{C}\psi$ is satisfied at a state~$q$
  if~$C$ has a joint move $m_C$ such that for all joint
  moves~$m_{\other C}$ of~$\other C$, \emph{all} possible
  post-states of~$q$ under the induced joint move of~$\Agents$
  satisfy~$\psi$. A non-deterministic \ac{CGS} with set~$\Agents$ of
  agents is converted into a \ac{CGS} with set~$\Agents\cup\{*\}$ of
  agents by giving~$*$ all states as moves, allowing~$*$ to pick one
  of the possible post-states determined by the other agents (with
  some possible post-state chosen arbitrarily if~$*$ plays a state
  that is not a possible post-state). Conversely, a \ac{CGS}~$S$ with
  set $\Agents\cup\{*\}$ of agents is converted into a
  non-deterministic \ac{CGS} with set~$\Agents$ of agents by taking
  the possible post-states under a joint move $m_\Agents$ of the
  agents in~$\Agents$ to be the set of all post-states of joint moves
  in~$S$ extending~$m_\Agents$. Both conversions clearly preserve
  satisfaction of formulae~$\phi$ mentioning only agents in~$\Agents$.

\subsection*{Proof of One-step Tableau Completeness for the~\ac{AMCDES} (\cref{thm:cldes-complete}) with Finite Sets of Moves}\label{sec:proof-one-step}
\begin{proof}
  Similarly to how the finite moves were achieved in the proof
  of~\cref{thm:cl-complete}, we will colour the moves to simulate the
  effect of the occurs check in unification.  We use the same
  terminology and notation for colours as in the proof
  of~\cref{thm:cl-complete}, and take the colours from the same
  Abelian group~\(U\).  Let \(\phi\) be the clause set constructed in
  the ongoing proof as shown in the main part of the paper.  Now, all
  agents receive (for simplicity) the same moves, namely
  \begin{itemize}
  \item moves $(e^j,0)$ for $j=\rng{\alpha}$, intended as witnesses
    for the moves of the agents in \(D_j\) in $\clbox{D_j, P_{G_j}}a_j$,
  \item moves \((p,0)\) for \(j=\rng{\alpha}\), \(p \in P_{G_j}\) witnessing explicit moves from $\clbox{D_j, P_{G_j}}a_j$,
  \item moves \((r,0)\) for \(j=\rng{\beta}\), \(r \in R_{H_j}\) witnessing explicit moves from $\cldiamond{C_j,R_{H_j}}c_j$, and
  \item moves $(g^j,u)$ for $j=1,\dots,\beta$ and $u\in U$, intended
    as witnesses for $\cldiamond{C_j,R_{H_j}}c_j$.
  \end{itemize}
  Let \(\CT\)  be the unification closure of all argument terms occuring in clauses in \(\phi\). %\mgnote{Given that \(\CT\) is finite already why not just use that, then we don't even need to simulate the occurs check. Because we need to interpret all function applications and not just the ones occurring in \(\CT\)}
  All tuples in~\(\CT\) have the shape \((x_A, e_B, p_C, r_D, g_{\,\other{A \cup B \cup C \cup D}}(x_A,e_B, p_C, r_D))\) where the~\(x_A\) are variables; the~\(e_B\) are Skolem constants and \(p_C\), \(r_D\) are constants for named moves, from possibly different boxes and diamonds; and the \(g_{\,\other{A \cup B \cup C \cup D}}\) are
  Skolem functions from a single diamond, as Skolem functions from multiple diamonds
  do not occur together in the starting terms and such occurrences are not introduced during
  unification due to the occurs check.

  The (finite) model \(\omod\)
  is then defined over coloured moves.  Skolem constants $e^j$ are
  interpreted as $(e^j,0)$, explicit strategies \(r\) and \(p\) are interpreted as \((r,0)\) and \((p,0)\), and Skolem functions \(g_i^j\) for
  $i\in \other{C_j}$ are interpreted as mapping a joint move
  $m_{C_j}$ of~$C_j$ to $(g^j,u_j-\col(m_{C_j}))$ if $i$ is the least
  element of~$\other{C_j}$, and to~$(g^j,0)$ otherwise, thus
  ensuring that $\col(m_{C_j},g^j(m_{C_j}))=u_j$.
  It remains to show that~$\omod$ is $\CT$-equationally complete, obtaining
  by \cref{thm:cmpl-model-satisf} and consistency of~$\phi$ under
  set-valued first-order resolution that~$\phi$ is satisfiable
  over~$\omod$.
  Indeed, observing that the symbols for explicit strategies represent constants in the unification process and are translated exactly like the Skolem constants, we can treat them as part of \(e_B\)  and proceed in the same way as in~\cref{thm:cl-complete}.
\end{proof}

\end{document}